\numberwithin{equation}{section}
\newcommand{\smallpagebreak}{{\par\vspace{2 mm}\noindent}}
\newcommand{\dsize}{\textstyle}
\newcommand{\D}{\displaystyle}
\newcommand{\R}{{\mathbb R}}
\newcommand{\Z}{{\mathbb Z}}
\newcommand{\N}{{\mathbb N}}
\newcommand{\C}{{\mathbb C}}
\newcommand{\Q}{{\mathbb Q}}
\newcommand{\Fc}{{\mathcal F}}
\newcommand{\Gc}{{\mathcal G}}
\renewcommand{\mod}{{\rm mod}\,}
\newcommand{\vers}{\operatornamewithlimits{\to}}
\newcommand{\equ}{\operatornamewithlimits{\sim}}
\newcommand{\car}{\bold{1}}
\theoremstyle{plain}
\newtheorem{Th}{Theorem}[section]
\newtheorem{Le}{Lemma}[section]
\newtheorem{Pro}{Proposition}[section]
\newtheorem{Cor}{Corollary}[section]
\theoremstyle{definition}
\newtheorem{Rem}{Remark}[section]
\title{An exact renormalization formula for Gaussian exponential sums
  and applications}
\author{Alexander Fedotov} \author{Fr{\'e}d{\'e}ric Klopp}
\address[Alexander Fedotov]{Departement of Mathematical Physics, St
  Petersburg State University, 1, Ulia\-novskaja, 198904 St
  Petersburg-Petrodvorets, Russia}
\email{\href{mailto:fedotov.s@mail.ru}{fedotov.s@mail.ru}}
\address[Fr{\'e}d{\'e}ric Klopp]{LAGA, Institut Galil{\'e}e, CNRS UMR
  7539, Universit{\'e} de Paris-Nord, Avenue J.-B.  Cl{\'e}ment, F-93430
  Villetaneuse, France\ et \ Institut Universitaire de France}
\email{\href{mailto:klopp@math.univ-paris13.fr}{klopp@math.univ-paris13.fr}}
\thanks{The authors were supported by the grant CNRS PICS 4224/RFBR
  07-01-92169}
\keywords{Exponential sums, renormalization formula}
\subjclass{11L03, 11L07}
\begin{document}

\begin{abstract}
  In the present paper, we derive a renormalization formula ``{\`a} la
  Hardy-Littlewood'' for the Gaussian exponential sums with an exact
  formula for the remainder term. We use this formula to describe the
  typical growth of the Gaussian exponential sums.
  \vskip.5cm
  \par\noindent \textsc{R{\'e}sum{\'e}.}
  Dans cet article, nous obtenons une formule de renormalisation ``{\`a}
  la Hardy-Littlewood'' pour des sommes exponentielles gaussiennes
  avec une formule exacte pour les termes de reste.  Nous utilisons
  cette formule pour d{\'e}crire la croissance typique de ces sommes.
\end{abstract}
\maketitle
Let $(a,b)\in(0,1)\times(-1/2,1/2]$ and consider the Gaussian
exponential sum
\begin{equation}
  \label{eq:SN}
  S(N,a,b)=\sum_{0\leq n\leq N-1} 
  e\left(-\frac{a n^2}2+nb\right), \quad N=1,2,3\dots.   
\end{equation}
where $e(z)=e^{2\pi iz}$. We set $S(0,a,b)=0$.\\[2mm]
Such sums have been the object of many studies (see
e.g.~\cite{MR1555099,MR0563894,MR714822,MR1682276,MR1995685}) and have
applications in various fields of mathematics and physics. In the
present paper we prove a renormalization formula (see
Theorem~\ref{th:S2}) analogous to the one first introduced
in~\cite{MR1555099}. In our formula the ``remainder term'' is given
explicitly by a special function (see section~\ref{sec:Fc}). We use
this renormalization formula to obtain results on the typical growth
and on the graphs of the exponential sums (\ref{eq:SN}) (see
Figure~\ref{fig:fresnel}).\\
Let us now present our main results on the growth of $S(N,a,b)$. To
our knowledge, up to the present work, the growth was studied mainly
in the case $b=0$ (\cite{MR0563894,MR1682276}). As we shall see, the
nontrivial $b$ does change the rate of growth. We prove
\begin{Th}
  \label{thr:2}
  Let $g:\R_+\to\R_+$ be a non increasing function. Then, for almost
  every $(a,b)\in (0,1)\times (-1/2,1/2]$,
  \begin{equation}
    \label{eq:3}
    \limsup_{N\to+\infty}\left(g(\ln N)\,
      \frac{|S(N,a,b)\,|}{\sqrt{N}}\right)<\infty
    \quad\Longleftrightarrow \quad \sum_{l\geq 1}g^6(l)<\infty.
  \end{equation}
\end{Th}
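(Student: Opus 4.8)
The strategy is to combine the exact renormalization formula of Theorem~\ref{th:S2} with the ergodic properties of the Gauss-type map that governs the continued-fraction-like expansion of $a$, and then to run a Borel--Cantelli argument at the level of the renormalization orbit. Since the renormalization formula expresses $S(N,a,b)/\sqrt N$ in terms of the same quantity evaluated at renormalized parameters $(a_1,b_1)$, together with an explicit special-function remainder, iterating it $k$ times writes $|S(N,a,b)|/\sqrt N$ as a sum of $k$ contributions, each of the form (special function of order~$1$)~$\times$~(product of the scaling factors accumulated so far). The number of renormalization steps needed to reach $N$ is comparable to $\ln N$ (this is the usual logarithmic law for continued-fraction algorithms), so a condition on $\limsup_N g(\ln N)|S|/\sqrt N$ translates into a condition on $\limsup_k g(c k)\cdot(\text{size of the $k$-th renormalized sum})$ along the orbit.

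\emph{Upper bound direction} ($\sum g^6(l)<\infty \Rightarrow \limsup<\infty$): I would show that the $k$-th renormalized partial sum has size controlled, up to constants, by $\big(a^{(1)}a^{(2)}\cdots a^{(k)}\big)^{-1/2}$-type products times oscillatory special-function values, and that the relevant ``bad event'' at step $k$ — the event that this contribution exceeds $1/g(ck)$ — has probability $\lesssim g^6(ck)$ with respect to the invariant measure of the renormalization dynamics (the exponent $6$ is exactly what must come out of estimating the measure of the set where a product of a bounded number of independent-ish Gauss variables is small, or equivalently from an $L^p$ bound on the special function $\Fc$ with the critical $p$). Summability of $\sum_k g^6(ck)$, equivalent to $\sum_l g^6(l)<\infty$ since $g$ is monotone, then gives via Borel--Cantelli that only finitely many bad events occur for a.e.\ $(a,b)$, hence the $\limsup$ is finite. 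Making the Borel--Cantelli step legitimate requires quantitative mixing (decay of correlations) for the renormalization map so that the bad events are ``almost independent''; this is where I would invoke the known ergodic-theoretic machinery for the Gauss map and its natural extension on the $(a,b)$ pair.

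\emph{Lower bound direction} ($\sum g^6(l)=\infty \Rightarrow \limsup=\infty$): here I would use the divergence Borel--Cantelli (the second Borel--Cantelli lemma) in its quasi-independent form: the same events have measures $\gtrsim g^6(ck)$, their sum diverges, and sufficient decorrelation forces infinitely many of them to occur for a.e.\ parameter, which pushes $g(\ln N)|S(N,a,b)|/\sqrt N$ above any bound infinitely often. One must be careful that a single large renormalized sum at step $k$ really does produce a large value of $|S(N,a,b)|/\sqrt N$ for a suitable $N$ — i.e.\ that the other $k-1$ terms in the iterated formula, plus the special-function remainder, cannot conspire to cancel it; controlling this requires choosing $N$ adapted to the $k$-th convergent and using that the dominant term genuinely dominates.

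\emph{Main obstacle.} The crux is quantitative: establishing that the exponent is \emph{exactly} $6$. This means (i) identifying precisely which scalar functional of the renormalization orbit controls $|S|/\sqrt N$ — plausibly something like $\sup_k (q_k)^{?}\,|\Fc(\cdots)|$ built from the denominators $q_k$ of the convergents and values of the special function from section~\ref{sec:Fc} — and (ii) computing the invariant measure of its sublevel/superlevel sets sharply enough to get $g^6$ rather than $g^{6\pm\epsilon}$. The power $6$ presumably arises as $2\times 3$, the $2$ from the square-root normalization and the $3$ from a cubic vanishing or a threefold product structure in the renormalization (three parameters' worth of smallness, or the $\mathrm{mes}$ of $\{a:\ \|a\|_{\text{something}}\le t\}$ scaling like $t^{?}$); pinning this down, and proving the matching decay of correlations uniformly in the relevant parameters, is the technical heart of the argument. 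The rest — translating $\ln N$ into step count, Borel--Cantelli bookkeeping, and monotonicity of $g$ to pass between $\sum g^6(l)$ and $\sum g^6(ck)$ — is routine.
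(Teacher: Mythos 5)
Your high-level plan --- translate $\ln N$ to a renormalization step count via the log law for continued fractions, then run a Borel--Cantelli argument in the convergent direction and a second-moment (Paley--Zygmund) argument with quantitative mixing in the divergent direction --- is indeed the paper's strategy (Lemma~\ref{le:N-L}, Lemma~\ref{le:Zy-Po}, Gordin's theorem). But the two things you defer as "the technical heart" are genuine gaps, and the heuristic you offer for the exponent is wrong. The paper's pivotal input is Proposition~\ref{pro:est-s}: for $N^-(L)\leq N\leq N^+(L)$, the maximum of $|S(N,a,b)|/\sqrt N$ over that window is comparable, from above \emph{and} from below, to $\bigl(\sqrt{|b_L|}+\sqrt[4]{a_L}\,\bigr)^{-1}$. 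You never identify this quantity; you guess something like $\sup_k q_k^{?}|\Fc(\cdots)|$, which is not what controls the growth. The two-sided estimate comes from the Fresnel-integral asymptotics of Theorem~\ref{th:as-exp-sum} together with a choice of $N$ pinning $\xi_L(N)$ (Lemma~\ref{le:xi-possible-values}); the lower bound is what resolves the cancellation worry you raise, via the uniform bound~(\ref{eq:error-S}) on the lower-order terms of the multiple renormalization. With this estimate in hand, the bad event at step $L$ is "$a_L\leq\varphi^4(L)$ \emph{and} $|b_L|\leq\varphi^2(L)$," and the exponent $6$ arises \emph{additively} as $4+2$: $4$ from the Gauss-measure of $\{a_L\leq\varphi^4\}$, and $2$ from the conditional measure of $\{|b_L|\leq\varphi^2\}$ --- not multiplicatively as $2\times 3$ from a "cubic vanishing."

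The second gap is the $b$-dynamics. You propose to use "the Gauss map and its natural extension on the $(a,b)$ pair," but the recursion~(\ref{eq:bls}) for $b_l$ is not the natural extension of the Gauss map; it is a skew product over it, with transfer operator~(\ref{eq:perron-frobenius-fibre}). To compute the measure of the bad events (and, crucially, the pair correlations $I(l,m)$ needed for the $L^2$ bound in Lemma~\ref{le:frakN:2}) one needs explicit control of the pushed-forward density on the $b$-fiber. The paper constructs an invariant two-parameter family of step densities $f(\cdot\,|\,a,A,B)$ and computes how $(A,B)$ evolves under the transfer operator (Theorem~\ref{th:inv-fam-den}, Lemmas~\ref{le:AlBl} and~\ref{le:second-family}); this is a nontrivial piece of machinery that cannot be invoked off the shelf and without which the Borel--Cantelli bookkeeping has no measures to sum. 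Your plan is right in shape, but these two ingredients are the proof, not routine verifications.
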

\noindent This result should be compared with the following theorem
for the exponential sum $S(N,a,b)$ for $b$ in the set
\begin{equation*}
  B_a=\left\{\{\frac12(ma+n)\}_0;\ (m,n)\in\Z^2\setminus(2\Z+1)^2\right\} 
\end{equation*}
where, for $x\in\R$, $\{x\}_0=x\,\mod 1$ and $-1/2<\{x\}_0\leq 1/2$.
For every irrational $a$, the set $B_a$ is dense in $(-1/2,1/2]$ as
the set $\{ma+n;\ (m,n)\in\Z^2\}$
is dense in $\R$.\\
One has
\begin{Th}
  \label{thr:2a}
  Let $g:\R_+\to\R_+$ be a non increasing function. Then, for almost
  all $a\in(0,1)$, there exists a dense $G_\delta$, say $\tilde B_a$,
  such that $B_a\subset\tilde B_a$ and, for $b\in\tilde B_a$, one has
  \begin{equation*}
    \limsup_{N\to+\infty}\left(g(\ln N)\,
      \frac{|S(N,a,b)\,|}{\sqrt{N}}\right)<\infty
    \quad\Longleftrightarrow \quad \sum_{l\geq 1}g^4(l)<\infty.
  \end{equation*}
\end{Th}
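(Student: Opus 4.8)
\emph{Setup and arithmetic reduction.} The plan is to reduce, for $b\in B_a$, the sum $S(N,a,b)$ to the ``resonant'' sums $S(\cdot,a,0)$ and $S(\cdot,a,1/2)$ (up to a dyadic rescaling of $a$), to analyse those by iterating the renormalization formula of Theorem~\ref{th:S2}, and finally to pass from $B_a$ to a dense $G_\delta$ by a Baire argument. For the reduction, write $b=\{(ma+\ell)/2\}_0$ with $(m,\ell)\in\Z^2$ not both odd. Since $e(nb)=(-1)^{n\ell}e(nma/2)$ for $n\in\Z$ and $e(n^2/2)=(-1)^n$, completing the square gives
\[ S(N,a,b)=e\bigl(\tfrac{am^2}8\bigr)\sum_{0\le n<N}(-1)^{n\ell}\,e\bigl(-\tfrac a2(n-\tfrac m2)^2\bigr). \]
In each admissible parity class, a shift of the summation variable (costing an $O(1)$ error, negligible against $\sqrt N$) and, when $m$ is odd, passage to the residue class of $2n-m$ modulo $2$ (which replaces $a$ by $a/2$ or by $2a$) identify $|S(N,a,b)|/\sqrt N$, up to $o(1)$, with $|S(N',\gamma,b')|/\sqrt{N'}$, where $b'\in\{0,1/2\}$, $N'=N+O(1)$ or $N/2+O(1)$, and $\gamma$ is $a$ or a fixed dyadic rescaling of it. The excluded class $(m,\ell)\in(2\Z+1)^2$ is exactly the one admitting no such reduction. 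As $a\mapsto 2a$ and $a\mapsto a/2$ are nonsingular, after deleting a null set of $a$ it is enough to prove the equivalence for $b\in\{0,1/2\}$.

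\emph{Renormalization and the exponent.} Iterating Theorem~\ref{th:S2} on $S(N,a,0)$ produces, after $k\asymp\ln N$ steps, an exact expression for $S(N,a,0)/\sqrt N$ as a sum of $k$ explicit terms plus a negligible remainder: the $j$-th term is an amplitude built from $a_0,\dots,a_j$ times a value of $\Fc$ at the renormalized data $(a_j,b_j)$, with $a_{j+1}$ the Gauss-map image of $a_j$ and, crucially, $b_j$ pinned to the resonant set for every $j$ (this is what the reduction arranged). One then gets $|S(N,a,0)|/\sqrt N\lesssim\sup_{j\lesssim\ln N}\Theta_j$ and $\gtrsim$ the largest $\Theta_j$, where $\Theta_j\asymp\max(a_j,a_{j+1})^{-1/2}$ is an explicit functional of two consecutive partial quotients; under the Gauss measure $P(\Theta_j>t)\asymp t^{-4}$, which is the source of the exponent $4$. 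It is $4$ and not the $6$ of Theorem~\ref{thr:2} because, the $b$-orbit being frozen on the resonant set, one no longer pays the extra factor ($\asymp\max(a_j,a_{j+1})$, giving one further power) coming from requiring $b_j$ to lie near the resonant locus.

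\emph{Borel--Cantelli and the $G_\delta$.} With $\lambda$ the L\'evy constant the number of steps is $\sim(\ln N)/\lambda$. Applying the first Borel--Cantelli lemma to $\{\Theta_j>c/g(\lambda j)\}$: if $\sum_l g^4(l)<\infty$ then for a.e.\ $a$ only finitely many occur, whence $\limsup_N g(\ln N)|S(N,a,0)|/\sqrt N<\infty$ (and likewise for $b'=1/2$ and the dyadic rescalings, hence for all $b\in B_a$); if $\sum_l g^4(l)=\infty$ a quantitative second Borel--Cantelli — using exponential mixing of the Gauss map to supply quasi-independence of the $a_j$, plus a non-cancellation estimate for the $O(1)$ leading terms at each scale — gives infinitely many, hence $\limsup=\infty$. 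This establishes the equivalence for $b\in B_a$. Finally fix such an $a$, let $C_0(a)$ be the (finite or infinite) value $\limsup_N g(\ln N)|S(N,a,0)|/\sqrt N$, and put
\[ \tilde B_a:=\bigl\{\,b:\ \limsup_{N\to+\infty} g(\ln N)\,|S(N,a,b)|/\sqrt N\ \ge\ C_0(a)\,\bigr\}. \]
Since $b\mapsto S(N,a,b)$ is continuous, a lower bound on such a $\limsup$ is a $G_\delta$ condition; $\tilde B_a\supset B_a$ by the reduction step (absorbing the $O(1)$ discrepancies between the reduced values into the choice of $C_0(a)$ as their minimum), so $\tilde B_a$ is dense because $B_a$ is; and the renormalization bound, valid uniformly in $b$, shows $\limsup_N g(\ln N)|S(N,a,b)|/\sqrt N<\infty$ for every $b$ once $\sum_l g^4(l)<\infty$. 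Hence the stated equivalence holds on all of $\tilde B_a$.

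\emph{Main difficulty.} The hard part is the $\sum_l g^4(l)=\infty$ direction: showing the resonant terms $\Theta_j$ really reach size $\gtrsim 1/g(\lambda j)$ infinitely often requires a genuinely quantitative second Borel--Cantelli for the merely mixing (not independent) partial-quotient sequence, together with control of cancellation among the bounded number of leading terms at each scale, and it is here that the tail exponent $4$ must be pinned down exactly. A secondary point is to arrange $\tilde B_a$ to be simultaneously dense, $G_\delta$, to contain $B_a$, and to carry the full equivalence rather than a single inequality. Everything, however, runs parallel to — and is strictly simpler than — the proof of Theorem~\ref{thr:2}, the only genuinely new ingredient being the elementary reduction of the first step.
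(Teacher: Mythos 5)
Your overall architecture matches the paper's (reduce $b\in B_a$ to a fixed finite ``resonant'' set, get the convergent case from the upper bound of Proposition~\ref{pro:est-s} plus Borel--Cantelli, the divergent case from a second-moment/Paley--Zygmund argument with mixing, and finish with a Baire argument), and your arithmetic reduction is an attractive alternative to the paper's Lemma~\ref{le:2}. But if you carry the reduction out you will find that for $m$ odd the sub-progression $k=2n-m$ yields, up to a phase and an $O(1)$ error, $S(N',a,-a/2)$ --- no dyadic rescaling of $a$ occurs, and the resonant set is exactly $\{0,1/2,-a/2\}$, which is precisely the set the paper isolates (there by iterating the $b$-dynamics in Lemma~\ref{le:2}).

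The genuine gap is in the divergent direction. First, $\Theta_j\asymp\max(a_j,a_{j+1})^{-1/2}$ is not what Proposition~\ref{pro:est-s} gives: for $b_L=0$ the sharp size of $\max_N|S(N,a,b)|/\sqrt N$ at scale $L$ is $a_L^{-1/4}$, and it is only a numerical coincidence that both expressions have a $t^{-4}$ tail. More seriously, the size at scale $j$ is \emph{not} a function of the Gauss orbit $(a_j)_j$ alone: even after your reduction, the renormalized $b_j$ keeps moving inside $\{0,1/2,\pm a_j/2\}$ according to the parities of the partial quotients (see~\eqref{eq:2}), and when $b_j=1/2$ the lower bound of Proposition~\ref{pro:est-s} is useless (then $\sqrt{|b_j|}\asymp1$ and that term contributes only $O(\sqrt N)$). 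So the events that must recur infinitely often are ``$a_l\le\varphi^4(l)$ \emph{and} $b_l=0$'', and exponential mixing of the Gauss map does not control the joint statistics of $a_l$ with the parity-driven $b$-coordinate. This is exactly the ingredient the paper supplies and your sketch omits: the three-branch map $\tilde T$ on $[0,3]$ coding the pair $(a_j,b_j)$, the verification that it is a covering weighted system (Lemma~\ref{le:3}) so that the Liverani--Saussol--Vaienti theorem yields an invariant density and exponential mixing, and only then the second-moment argument. Without an equivalent of this, your second Borel--Cantelli step does not close. (The convergent direction and the $G_\delta$ construction are essentially fine, though the paper's choice --- $\tilde B_a=\{b:\limsup=\infty\}$ when $\sum g^4=\infty$, and $\tilde B_a$ arbitrary when $\sum g^4<\infty$ since Lemma~\ref{thr:1} then covers all $b$ --- is simpler than your $C_0(a)$ device.)
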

\noindent For $b=0\in B_a$, Theorem~\ref{thr:2a} was proved
in~\cite{MR0563894}.\\
Let $\varphi(N)=(\ln N)^{1/4}$. Theorems~\ref{thr:2} and~\ref{thr:2a}
show that for a typical $a$, whereas for $b\in\tilde B_a$ the ratio
$S(N,a,b)/\sqrt{N}$ grows faster than $\varphi(N)$, for a typical $b$,
the ratio $S(N,a,b)/\sqrt{N}$ grows slower than
$(\,\varphi(N)\,)^{2/3+\varepsilon}$ for any $\varepsilon>0$.\\[2mm]
%
%
%%%%%%%%%%%%%%%%%%%%%%%%%%%%%%%%%%%%%%%%%%%%%%%%%%%%%%%%%%%%%%%%%%%%%% %
\begin{floatingfigure}{4cm}
  \includegraphics[bbllx=71,bblly=572,bburx=240,bbury=721,width=3.5cm]{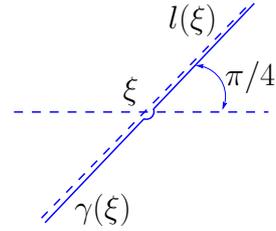}
  \caption{The path}\label{paths}
\end{floatingfigure}
%%%%%%%%%%%%%%%%%%%%%%%%%%%%%%%%%%%%%%%%%%%%%%%%%%%%%%%%%%%%%%%%%%%%% %
%
\noindent The paper is organized as follows. In section~\ref{sec:Fc}, we
describe the special function mentioned above. Then,
section~\ref{sec:ren-for} is devoted to the exact renormalization
formula, its proof and some useful consequences. It is then used in
section~\ref{sec:asymptotics-exp-sum} to compute asymptotics for
$S(N,a,b)$ when an element of the continuous fraction defining $a$ is
large. Section~\ref{sec:analysis-curlicues} is devoted to the
discussion of the graphs of the quadratic sums and the appearance of
the Cornu spiral. In section~\ref{sec:estimates-S}, we compute precise
estimates of $S(N,a,b)$ in terms of the trajectory of a dynamical
system related to the continued fractions expansion of $a$. Finally,
sections~\ref{sec:almost-sure-growth}
and~\ref{sec:proof-theor-refthr:2} are devoted to the proofs of
Theorems~\ref{thr:2} and~\ref{thr:2a}. The proofs are based on the
estimates obtained in the previous section and on the analysis of
certain dynamical systems.
\section{The special function $\Fc$}
\label{sec:Fc}
Consider the function $\Fc:\C\to \C$ defined by
\begin{equation}
  \label{eq:F}
  \hskip-3cm  \Fc(\xi,a)=\int_{\gamma(\xi)}
  \frac{e\left(\frac{p^2}{2a}\right)\,dp}{e(p-\xi)-1}
\end{equation}
where the contour $\gamma(\xi)$ is going up from infinity along
$l(\xi)$, the strait line $\xi+e^{i\pi/4}\R$, coming infinitesimally
close to the point $\xi$, then, going around this point in the
anti-clockwise direction along an infinitesimally small semi-circle,
and, then, going up to infinity again along $l(\xi)$ (see
Fig.~\ref{paths}).\\
The function $\Fc$ is the special function mentioned in the
introduction. We prove:
\begin{Le}
  \label{Le:diff-eq}
  For each $a>0$, \ $\Fc$ is an entire function of $\xi$, and, for all
  \\ $\xi\in\C$, one has
  \begin{gather}
    \label{eq:F-eq}
    \Fc(\xi,a)-\Fc(\xi-1,a)=e\left(\frac{\xi^2}{2a}\right),\\
    \label{eq:G-eq}
    \Gc(\xi+a,a)-\Gc(\xi,a)=e\left(-\frac{\xi^2}{2a}\right),
  \end{gather}
  where
  \begin{equation}
    \label{eq:G-F}
    \Gc(\xi,a)=c(a)\, e\left(-\frac{\xi^2}{2a}\right)\,\Fc(\xi,a),
    \quad c(a)=e(-1/8)\,a^{-1/2},
  \end{equation}
  and
  \begin{equation}
    \label{eq:f-sym}
    \Fc(-\xi,a)+\Fc(\xi,a)=e\left(\frac{\xi^2}{2a}\right)-\frac1{c(a)}.
  \end{equation}
\end{Le}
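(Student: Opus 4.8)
The plan is to establish the four claims in order, extracting everything from the integral representation \eqref{eq:F}. First, the analyticity in $\xi$: the integrand $e(p^2/2a)/(e(p-\xi)-1)$ decays like $e^{-\pi|p|^2\sin(\pi/2)/a}$-ish along the ray $l(\xi)=\xi+e^{i\pi/4}\R_+$ because on that ray $p^2$ has positive imaginary part growing quadratically, so the integral converges locally uniformly and the small semicircle around $\xi$ contributes a fixed residue-type term; differentiating under the integral sign (after checking one may deform $\gamma(\xi)$ locally so the contour does not move with $\xi$, or differentiate the parametrized integral directly) shows $\partial_\xi\Fc$ exists, giving that $\Fc(\cdot,a)$ is entire. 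One has to be a little careful that the semicircular indentation is traversed so that the pole at $p=\xi$ is \emph{not} enclosed as a full residue but contributes half (or is simply avoided, depending on the precise convention in Fig.~\ref{paths}); either way the dependence on $\xi$ is analytic.

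Next, the functional equation \eqref{eq:F-eq}. Write $\Fc(\xi,a)-\Fc(\xi-1,a)$ as an integral over $\gamma(\xi)$ minus one over $\gamma(\xi-1)$. Since the lines $l(\xi)$ and $l(\xi-1)$ are parallel (both in direction $e^{i\pi/4}$) and the integrand $e(p^2/2a)/(e(p-\xi)-1)$ is $1$-periodic in the $\xi$-slot only through the shift $p\mapsto p$, the trick is: in the second integral substitute nothing, but instead observe that $\gamma(\xi)-\gamma(\xi-1)$ is homologous to a small loop around the pole $p=\xi-1$ of the \emph{first} integrand? No — more precisely, $\dfrac1{e(p-\xi)-1}-\dfrac1{e(p-\xi+1)-1}=0$ since $e$ is $1$-periodic, so the two integrands coincide and the difference reduces to the difference of the two contours, which bounds a region containing exactly the pole at $p=\xi$ (lying between $l(\xi-1)$ and $l(\xi)$ once the indentations are accounted for). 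The residue of $e(p^2/2a)/(e(p-\xi)-1)$ at $p=\xi$ is $e(\xi^2/2a)/(2\pi i)\cdot 2\pi i = e(\xi^2/2a)$ (since $\tfrac{d}{dp}(e(p-\xi)-1)|_{p=\xi}=2\pi i$), and the orientation is anti-clockwise, so the difference equals $e(\xi^2/2a)$, which is \eqref{eq:F-eq}. I expect this contour-pushing — correctly bookkeeping which poles lie in the strip between the two parallel indented contours and with which multiplicity the indentation half-circles count — to be the main obstacle; the rest is essentially algebra.

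For \eqref{eq:G-eq}: substitute the definition \eqref{eq:G-F} into the desired identity. One gets
\[
  c(a)e\!\left(-\tfrac{(\xi+a)^2}{2a}\right)\Fc(\xi+a,a)-c(a)e\!\left(-\tfrac{\xi^2}{2a}\right)\Fc(\xi,a)=e\!\left(-\tfrac{\xi^2}{2a}\right),
\]
and after factoring out $c(a)e(-\xi^2/2a)$ and using $e(-(\xi+a)^2/2a)=e(-\xi^2/2a)\,e(-\xi)\,e(-a/2)$, this becomes a functional equation relating $\Fc(\xi+a,a)$ and $\Fc(\xi,a)$ that should again be provable directly from \eqref{eq:F} by the change of variable $p\mapsto p+a$ in the integral for $\Fc(\xi+a,a)$: this maps $\gamma(\xi+a)$ to $\gamma(\xi)$ and turns $e(p^2/2a)$ into $e(p^2/2a)e(p)e(a/2)$ while $e(p-\xi-a)-1$ becomes $e(p-\xi)e(-a)-1$; a short manipulation combining this with the direct integral for $\Fc(\xi,a)$ and isolating a residue at $p=\xi$ yields the claim. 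Finally, the symmetry \eqref{eq:f-sym}: in $\Fc(-\xi,a)$ make the substitution $p\mapsto -p$, which sends $\gamma(-\xi)$ to a contour going \emph{down} along $-\xi-e^{i\pi/4}\R=\xi$-reflected line, i.e. essentially $-\gamma(\xi)$ reversed, turning $e(p^2/2a)$ into itself and $e(p+\xi)-1\mapsto e(-p+\xi)-1$; adding the result to $\Fc(\xi,a)$ produces an integral of $e(p^2/2a)$ over a contour that, after combining $\dfrac1{e(p-\xi)-1}+\dfrac1{e(\xi-p)-1}=-1$, reduces to $-\int_{\gamma} e(p^2/2a)\,dp$ plus a residue term at $p=\xi$; the Gaussian integral $\int_{e^{i\pi/4}\R}e(p^2/2a)\,dp$ evaluates (by the standard Fresnel computation) to $e(1/8)\sqrt{a}=1/c(a)$, and the residue contributes $e(\xi^2/2a)$, giving $\Fc(-\xi,a)+\Fc(\xi,a)=e(\xi^2/2a)-1/c(a)$ as stated. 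The recurring delicate point throughout is the precise orientation/multiplicity of the indentation semicircle and the sign bookkeeping in the Gaussian integral, which is where I would spend the most care.
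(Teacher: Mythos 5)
Your proposal is correct and follows essentially the same route as the paper: \eqref{eq:F-eq} via the residue theorem applied to the difference of the two indented contours (the integrands agree by $1$-periodicity), \eqref{eq:G-eq} via a translation change of variable that reduces the difference to the Gaussian integral $\int_{e^{i\pi/4}\R}e(p^2/2a)\,dp=1/c(a)$, and \eqref{eq:f-sym} via $p\mapsto-p$ plus a residue at $p=\xi$ and the identity $\frac1{e(p-\xi)-1}+\frac1{e(\xi-p)-1}=-1$. One tiny correction: in your \eqref{eq:G-eq} step no residue at $p=\xi$ is actually needed, since after the substitution the two integrands differ exactly by the entire function $e(p^2/2a)$, whose integral over the (unindented) line is directly $1/c(a)$.
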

\begin{proof}
  The first relation~\eqref{eq:F-eq} follows from the residue
  theorem. The second relation~\eqref{eq:G-eq} becomes obvious after
  the change of variable $z=p-\xi$ in the integral defining $\Fc$.  To
  get the last relation, in the integral representing $\Fc(-\xi,a)$,
  we change the variable $p\to -p$, and then, using the residue
  theorem, we get
  \begin{equation*}
    \Fc(-\xi,a)=e\left(\frac{\xi^2}{2a}\right)-
    \int_{\gamma(\xi)} \frac{e\left(\frac{p^2}{2a}\right)\,e(p-\xi)\,dp}
    {e(p-\xi)-1}.
  \end{equation*}
  This and~(\ref{eq:F}) implies~(\ref{eq:f-sym}). This completes the
  proof of Lemma~\ref{Le:diff-eq}.
\end{proof}
\noindent Lemma~\ref{Le:diff-eq} shows that the function $\Fc$
simultaneously satisfies two difference equations,~(\ref{eq:F-eq})
and~(\ref{eq:G-eq}), with two different shift parameters, $1$ and
$a$. This leads to the renormalization described in the next
section.\\
For small $a$, the asymptotics of $\Fc$ are described by
\begin{Pro}
  \label{pro:Fc-as}
  Let $-1/2\leq \xi\leq 1/2$ and $0<a<1$. Then, $\Fc$ admits the
  representation:
  \begin{equation}
    \label{eq:Fc-as}
    \begin{split}
      \Fc(\xi,a)&=e(1/8)\,f(a^{-1/2}\xi)+O(a^{1/2}),\\
      f(t)&:=e(t^2/2)F(t)\text{ and }F(t):=\int_{-\infty}^t
      e(-\tau^2/2)\,d\tau,
    \end{split}
  \end{equation}
  where $O(a^{1/2})$ is bounded by $C\,a^{1/2}$, and $C$ is a constant
  independent of $a$ and $\xi$.
\end{Pro}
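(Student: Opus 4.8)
The plan is to analyze the contour integral $\Fc(\xi,a)=\int_{\gamma(\xi)} e(p^2/2a)/(e(p-\xi)-1)\,dp$ directly for small $a$, isolating the leading contribution. First I would shift the contour $\gamma(\xi)$ slightly — replacing the line $l(\xi)=\xi+e^{i\pi/4}\R$ by a translate through a point $\xi_0$ with, say, $\re\xi_0$ somewhat larger than $1/2$ so that the small semi-circle around $\xi$ is no longer needed — picking up the residue of the integrand at $p=\xi$, which is exactly $e(\xi^2/2a)$; alternatively I keep the small semicircle and split its contribution. The key observation is that on the line $\xi+e^{i\pi/4}\R$, writing $p=\xi+e^{i\pi/4}s$ with $s\in\R$, one has $p^2/2a = (\xi^2 + 2\xi e^{i\pi/4}s + i s^2)/2a$, so the phase $e(p^2/2a)$ becomes $e(\xi^2/2a)\cdot e(\xi e^{i\pi/4}s/a)\cdot e^{-\pi s^2/a}$ — a Gaussian of width $\sqrt a$ centered at $s=0$, times an oscillatory factor. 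The bulk of the integral therefore comes from the region $|s|\lesssim\sqrt a$, i.e. $p$ within $O(\sqrt a)$ of $\xi$.

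Next I would perform the substitution $p = \xi + e^{i\pi/4}a^{1/2}\,\tau$ (or an equivalent rescaling), turning the Gaussian factor into $e^{-\pi\tau^2}$ and the measure into $a^{1/2}e^{i\pi/4}\,d\tau$. After factoring out $e(\xi^2/2a)$, the remaining integrand is
\begin{equation*}
  \frac{a^{1/2}e^{i\pi/4}\,e(\xi e^{i\pi/4}a^{-1/2}\tau)\,e^{-\pi\tau^2}}
  {e(e^{i\pi/4}a^{1/2}\tau)-1}.
\end{equation*}
For small $a$ and $\tau$ in the effective range, the denominator $e(e^{i\pi/4}a^{1/2}\tau)-1 = 2\pi i\,e^{i\pi/4}a^{1/2}\tau\,(1+O(a^{1/2}\tau))$, so the $a^{1/2}$ factors cancel and the integrand converges pointwise to $\tfrac1{2\pi i}\,e(\xi e^{i\pi/4}a^{-1/2}\tau)\,e^{-\pi\tau^2}/\tau$. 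Here is where I must be careful: the factor $e(\xi a^{-1/2}e^{i\pi/4}\tau)$ is genuinely $a$-dependent — it is not a small perturbation. I would write $t=a^{-1/2}\xi$, so this factor is $e(t e^{i\pi/4}\tau)$, and the leading integral becomes a contour integral over $e^{i\pi/4}\R$ (going around $0$) of $\tfrac1{2\pi i}e(te^{i\pi/4}\tau)e^{-\pi\tau^2}/\tau\,d\tau$; undoing the rotation $\tau\mapsto e^{-i\pi/4}\tau$ this should reproduce precisely $e(1/8)f(t)$ after recognizing the Fresnel-type integral $F(t)=\int_{-\infty}^t e(-\tau^2/2)\,d\tau$ (one checks the identity via differentiation in $t$ and the boundary value, exactly as in the proof of Lemma~\ref{Le:diff-eq}). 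The constant $e(1/8)$ and the combination $e(t^2/2)F(t)$ should fall out of carefully tracking the $e^{i\pi/4}$ rotation and the residue at $\tau=0$.

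The main obstacle is controlling the error term uniformly in $\xi\in[-1/2,1/2]$ and showing it is genuinely $O(a^{1/2})$. This requires two estimates: (i) the tail contribution $|s|\gtrsim a^{1/2-\delta}$ (equivalently $|\tau|$ large) is exponentially small in $a$ because of the Gaussian $e^{-\pi s^2/a}$, uniformly in $\xi$ since $|e(\xi e^{i\pi/4}s/a)|=e^{-2\pi\xi s\sin(\pi/4)/a}\le e^{C|s|/a}$ grows only like an exponential in $|s|/a$ while the Gaussian decays like $e^{-\pi s^2/a}$, so for $|s|$ bounded the product is controlled and for the relevant tails the Gaussian wins; (ii) in the bulk, the difference between $1/(e(e^{i\pi/4}a^{1/2}\tau)-1)$ and $1/(2\pi i e^{i\pi/4}a^{1/2}\tau)$ contributes an extra power of $a^{1/2}$, and the Taylor remainder is integrable against $e^{-\pi\tau^2}$ even with the oscillatory factor (which only helps). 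A delicate point is the behaviour near $\tau=0$: the integrand has a simple pole there, but the prescribed small semicircle in $\gamma(\xi)$ handles the principal-value/residue split, and after the substitution this becomes the standard small semicircle around $0$ used in defining the Fresnel integral representation of $F$. I would also need to confirm that the constant $C$ in $O(a^{1/2})$ does not blow up as $\xi\to\pm1/2$ — the only place $\xi$ enters dangerously is the exponential $e(\xi e^{i\pi/4}s/a)$, and since $|\xi|\le1/2$ this is bounded by $e^{(\pi/\sqrt2)|s|/a}$ with the constant uniform, so combined with the Gaussian the whole analysis goes through with $\xi$-independent constants.
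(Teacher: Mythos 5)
Your identification of the leading term is essentially the paper's: split off the principal part $\frac1{2\pi i(p-\xi)}$ of the kernel and evaluate the resulting Cauchy--Gauss integral exactly via the ODE $I'(\xi)=e(1/8)a^{-1/2}+2\pi i\xi a^{-1}I(\xi)$ together with the boundary condition at $\xi\to-\infty$. The gap is in your error estimate (i). On the line $p=\xi+e^{i\pi/4}s$ one has $\im (p^2)=s^2+\sqrt2\,\xi s$, hence
\begin{equation*}
  \left|e\!\left(\frac{p^2}{2a}\right)\right|
  =e^{-\pi(s^2+\sqrt2\,\xi s)/a}
  =e^{\pi\xi^2/(2a)}\,e^{-\pi(s+\xi/\sqrt2)^2/a}.
\end{equation*}
This is a Gaussian centered at $s=-\xi/\sqrt2$, not at $s=0$, with peak value $e^{\pi\xi^2/(2a)}$; for $\xi$ bounded away from $0$ (say $\xi=1/2$) the integrand has modulus $e^{\pi/(8a)}$ on an $s$-interval of length $\sim\sqrt a$ around $s=-\xi/\sqrt2$. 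Your claim that ``for $|s|$ bounded the product is controlled and for the relevant tails the Gaussian wins'' therefore fails in the whole range $\sqrt a\ll|s|<\sqrt2\,|\xi|$, where the exponent $-\pi s^2/a+\pi\sqrt2\,|\xi|\,|s|/a$ is positive and large. Absolute-value estimates along $l(\xi)$ can only give $O\bigl(e^{\pi\xi^2/(2a)}\sqrt a\bigr)$ for the remainder, which is useless unless $|\xi|\lesssim\sqrt{a\ln(1/a)}$. The underlying point is that the critical point of the phase $p^2/(2a)$ is $p=0$, not $p=\xi$: your contour passes through the pole of the kernel, not through the saddle.

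The missing step is a contour deformation for the remainder. Writing the kernel as $\frac1{2\pi i(p-\xi)}+g(p-\xi)$ with $g(z)=\frac1{e(z)-1}-\frac1{2\pi iz}$, the term $\int_{\gamma(\xi)}g(p-\xi)\,e(p^2/2a)\,dp$ has a regular, bounded integrand whose contour may be moved from $l(\xi)$ to $l(0)=e^{i\pi/4}\R$: the poles of $g(\cdot-\xi)$ sit at $\xi+\Z\setminus\{\xi\}$, at distance at least $2^{-3/2}$ from $l(0)$ when $|\xi|\le1/2$, and $|e(p^2/2a)|$ decays at the ends of the swept strip. On $l(0)$ one has $|e(p^2/2a)|=e^{-\pi s^2/a}$ exactly, so $|g|\le C$ yields the uniform bound $C\sqrt a$. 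Without this deformation (or some equivalent exploitation of the oscillation rather than of absolute values), your bulk/tail decomposition does not close.
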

\noindent This is Proposition 1.1 in~\cite{MR2182060}; for the
readers convenience, we repeat its short proof below.\\
For small values of $a$, the special function $\Fc$ ``becomes'' the
Fresnel integral. This proposition and our renormalization formula
immediately explain the curlicues seen in the graphs of the
exponential sums (see e.g.~\cite{MR1140726,MR928946}). Details can be
found in section~\ref{sec:analysis-curlicues} and in~\cite{MR2182060}.
\begin{proof}[Proof of Proposition~\ref{pro:Fc-as}]
  We represent $\Fc$ in the form:
  \begin{equation}
    \label{eq:Fc-as:1}
    \Fc(\xi,a)= \frac1{2\pi i}\int_{\gamma(\xi)}
    \frac{e\left(\frac{p^2}{2a}\right)\,dp} 
    {p-\xi}+\int_{\gamma(\xi)} g(p-\xi)
    \,e\left(\frac{p^2}{2a}\right)\,dp,
  \end{equation}
  where
  \begin{equation*}
    g(p-\xi)=\frac1{e(p-\xi)-1}-\frac1{2\pi
      i(p-\xi)}.
  \end{equation*}
  As $-1/2\leq \xi\leq 1/2$, the integration contour in the second
  integral can be deformed into the curve $\gamma(0)$ without
  intersecting any pole of the integrand. Then, the distance between
  the integration contour and these poles becomes bounded from below
  by $1/2^{3/2}$, and one easily gets $|g(p-\xi)|\leq C$ uniformly in
  $-1/2\le\xi\le1/2$ and in $p\in\gamma(0)$. This immediately implies
  that the second term in~\eqref{eq:Fc-as:1} is bounded by
  $C\,a^{1/2}$. Finally, it is easily seen that first term satisfies
  the equation $I'(\xi)=e(1/8)a^{-1/2}+2i\pi\xi a^{-1} I(\xi)$, and
  that it tends to $0$ when $\xi\to -\infty$ along $\R$.  This implies
  that this term is equal to $e(1/8)\,f(a^{-1/2}\xi)$ and completes
  the proof of Proposition~\ref{pro:Fc-as}.
\end{proof}
\section{Exact renormalization formulas}
\label{sec:ren-for}
We now present exact renormalization formulas for the quadratic
exponential sum $S(N,a,b)$ in terms of the special function
$\Fc(\xi,a)$.
\subsection{One renormalization}
\label{sec:one-renormalization}
One has
\begin{Th}
  \label{th:S2}
  Fix $N\in\N$ and $(a,b)\in(0,1)\times(-1/2,1/2]$. Let
  \begin{gather}
    \nonumber
    \xi=\{a N\},\quad N_1=[a N],\\
    \label{eq:ab}
    a_1=\left\{\frac1a\right\},\quad\quad b_1\equiv \left\{-\frac
      ba+\frac12\,\left[\frac1a\right]\right\}_0,
  \end{gather}
  where $\{x\}$ and $\left[x\right]$ denote the fractional and the
  integer parts of the real number $x$, and $\{x\}_0=x\,\mod 1$ and
  $-1/2<\{x\}_0\leq 1/2$. Then,
  \begin{equation}
    \label{eq:exa-ren-for}
    \begin{split}
      S(N,a,b)&=c(a)\left[e\left(\frac{b^2}{2a}\right)
        \overline{S(N_1,a_1,b_1)}\right.
      \\&\hskip1cm\left.+e\left(-\frac{a N^2}{2}+Nb\right)
        \,\Fc(\xi-b,a)-\Fc(-b,a)\right].
    \end{split}
  \end{equation}
\end{Th}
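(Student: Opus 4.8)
The plan is to use the two difference equations from Lemma~\ref{Le:diff-eq} as the engine, and to turn the sum $S(N,a,b)$ into a value of $\Fc$ via a telescoping argument. First I would form the generating object
\[
\Phi(\xi)=\Fc(\xi-b,a)
\]
and notice, from \eqref{eq:F-eq}, that $\Phi(\xi)-\Phi(\xi-1)=e\bigl((\xi-b)^2/(2a)\bigr)$. Summing this relation over $\xi=1,2,\dots$ does not directly produce the Gaussian phase $e(-an^2/2+nb)$, so the key manipulation is to instead exploit the \emph{second} difference equation \eqref{eq:G-eq} for $\Gc$, whose shift is $a$ and whose right-hand side is $e(-\xi^2/(2a))$. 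Telescoping $\Gc(\xi,a)-\Gc(0,a)=\sum_{k=0}^{?}\bigl(\Gc((k+1)a,a)-\Gc(ka,a)\bigr)$ along the arithmetic progression $\{ka\}$, evaluated at the shifted point $\xi\mapsto \xi-b$ (or a similarly translated argument chosen so the quadratic exponent becomes $-(b-ka)^2/(2a)$, i.e.\ $-k^2a/2 + kb - b^2/(2a)$), extracts exactly a sum of the shape $e(b^2/(2a))^{-1}\sum_k e(-ak^2/2+kb)$. This is the heart of the computation and the step I expect to be the main obstacle: one must choose the translation of the argument and the range of summation $k=0,\dots,N-1$ so that the telescoped boundary terms are precisely $\Gc$ evaluated at $-b$ and at $\{aN\}-b=\xi-b$ up to an integer shift, and then convert those $\Gc$ values back to $\Fc$ values using \eqref{eq:G-F}. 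The bookkeeping of fractional parts — verifying that $N_1=[aN]$ and $\xi=\{aN\}$ appear correctly, and that the integer part $[1/a]$ enters $b_1$ through the residue-theorem shift in \eqref{eq:F-eq} applied $[1/a]$ times — is where all the arithmetic definitions in \eqref{eq:ab} get pinned down.

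Concretely, the steps in order are: (1) rewrite $S(N,a,b)$ using $e(-an^2/2+nb)=e(b^2/(2a))\,e(-(b-na)^2/(2a))$ up to the constant $e(b^2/(2a))$ pulled out front, so that each term is a value of the right-hand side of \eqref{eq:G-eq}; (2) telescope $\sum_{n=0}^{N-1} e(-(b-na)^2/(2a))=\Gc(\text{endpoint})-\Gc(\text{startpoint})$ using \eqref{eq:G-eq}, getting a difference of two $\Gc$-values whose arguments are $-b$ and $Na-b$; (3) reduce $Na-b$ modulo $1$ to $\xi-b$ using the first difference equation \eqref{eq:F-eq} for $\Fc$, which costs exactly $N_1=[aN]$ applications and, after substituting \eqref{eq:G-F}, produces a \emph{new} Gaussian exponential sum — this is where $\overline{S(N_1,a_1,b_1)}$ is born, with $a_1=\{1/a\}$ coming from iterating the $\Fc$-relation (whose natural shift, after rescaling $\xi\to\xi/a$ in \eqref{eq:F-eq} vs.\ \eqref{eq:G-eq}, is $1/a$) and $b_1$ collecting the linear phase shifts, including the $\frac12[1/a]$ term; (4) take complex conjugates where needed — the orientation of the contour $\gamma(\xi)$ and the sign of the quadratic form mean the new sum appears conjugated — and assemble the constant $c(a)=e(-1/8)a^{-1/2}$ from \eqref{eq:G-F}.

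For step (3) I would use the symmetry relation \eqref{eq:f-sym} together with \eqref{eq:F-eq} to handle the direction in which the argument must be shifted, and I expect the cleanest route is to first prove the formula for $\xi$ in a fundamental domain and then extend by the periodicity built into the difference equations. A subtle point to check carefully is that the contour $\gamma(\xi)$ of $\Fc$ can be legitimately deformed when $\xi-b$ is translated by integers and by multiples of $a$, so that no poles of $1/(e(p-\xi)-1)$ are crossed without being accounted for by residues; each uncounted pole contributes precisely one Gaussian term $e(\text{quadratic})$, and these are exactly the terms of $S(N_1,a_1,b_1)$ — so in fact the residue calculation \emph{is} the renormalization, and verifying that the residues sum to the claimed conjugated sum with the claimed parameters $(a_1,b_1)$ is the real content. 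Once the pole count matches the definitions in \eqref{eq:ab}, identity \eqref{eq:exa-ren-for} follows, and the boundary pieces $e(-aN^2/2+Nb)\,\Fc(\xi-b,a)$ and $-\Fc(-b,a)$ are just the evaluations of $\Gc$ at the two endpoints after undoing \eqref{eq:G-F}.
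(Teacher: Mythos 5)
Your plan is essentially the paper's proof: telescope \eqref{eq:G-eq} along $-b, a-b, \dots, Na-b$ to express $S(N,a,b)$ via $\Fc(Na-b,a)$ and $\Fc(-b,a)$, then telescope \eqref{eq:F-eq} exactly $N_1=[aN]$ times to write $\Fc(Na-b,a)-\Fc(\xi-b,a)$ as a new quadratic sum, which mod-$1$ arithmetic identifies as $e\bigl((Na-b)^2/(2a)\bigr)\,\overline{S(N_1,a_1,b_1)}$, and substitute back. The one substantive gap is that you defer the mod-$1$ computation that actually produces $a_1,b_1$ (namely rewriting $\tfrac{k^2}{2a}+\tfrac{kb}{a}\equiv\tfrac{k^2}{2}a_1-kb_1$), which is the real content; also note your side-hedges about \eqref{eq:f-sym}, contour deformation, and fundamental domains are unnecessary here, since once Lemma~\ref{Le:diff-eq} is in hand the renormalization is pure algebra on the two difference equations.
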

\noindent To our knowledge, such renormalization formulas (though
without explicit description of the terms containing $\mathcal F$)
first appeared in~\cite{MR1555099} and have since then a long
tradition. The formula~\eqref{eq:exa-ren-for} is analogous to the less
general one derived in~\cite{MR2182060}. It should also be compared to
Theorems 4 and 5 in~\cite{MR0563894}.
\begin{proof}[Proof of Theorem~\ref{th:S2}]
  The idea of the proof is to compute the quantity ${\mathcal
    F}(Na-b,a)$ in two different ways: first,
  using~(\ref{eq:G-eq}), and then, using (\ref{eq:F-eq}).\\
  By means of~(\ref{eq:G-eq}), we get
  \begin{align*}
    \Gc(Na-b,b)&=\sum_{k=0}^{N-1}e\,\left(-\frac{(ka-b)^2}
      {2a}\right) +\Gc(-b,a)\\
    &=e\left(-\frac{b^2}{2a}\right)\,S(N,a,b)+\Gc(-b,a).
  \end{align*}
  Note that this relation and~(\ref{eq:G-F}) imply that
  \begin{equation}
    \label{eq:s-g}
    S(N,a,b)=c(a)\,
    \left[ e\,\left(-\frac{N^2a}2+Nb\right)\,\Fc(Na-b,a)-
      \Fc(-b,a)\right]
  \end{equation}
  On the other hand, using~(\ref{eq:F-eq}), we obtain
  \begin{align*}
    \Fc(Na-b,b)&-\Fc(\xi-b,a)=\sum_{k=0}^{N_1-1}e\,
    \left(\frac{(Na-k-b)^2}{2a}\right)\\
    &=e\,\left(\frac{(Na-b)^2}{2a}\right)\,
    \sum_{k=0}^{N_1-1}e\,\left(\frac{k^2}{2a}-\frac{k(Na-b)}a\right).
  \end{align*}
  As $e(l)=1$ for all $l\in\Z$, and as, modulo 1, one has
  \begin{equation*}
    \begin{split}
      \frac{k^2}{2a}+\frac ba k&= \frac{k(k+1)}{2}\frac1a+ \left(\frac
        ba-\frac1{2a}\right) k \\&=
      \frac{k(k+1)}{2}a_1-k\left(b_1+\frac{a_1}2\right)=
      \frac{k^2}{2}a_1-kb_1,
    \end{split}
  \end{equation*}
  we get finally
  \begin{equation*}
    \Fc(Na-b,b)=
    e\,\left(\frac{(Na-b)^2}{2a}\right)\,\overline{S(N_1,a_1,b_1)}
    +\Fc(\xi-b,a).  
  \end{equation*}
  Plugging this formula into~(\ref{eq:s-g}), we
  obtain~(\ref{eq:exa-ren-for}). This completes the proof of
  Theorem~\ref{th:S2}.
\end{proof}
\subsection{Multiple renormalizations}
\label{sec:mult-renorm}
The renormalization formula~(\ref{eq:exa-ren-for}) expresses the
Gaussian sum $S(N,a,b)$ in terms of the sum $S(N_1,a_1,b_1)$
containing a smaller number of terms. We can renormalize this new sum
and so on. After a finite number of renormalizations, the number of
terms in the exponential sum is reduced to one. Let us now describe
the formulas obtained in this way when $a$ is irrational.\\
For $l\geq 0$, we let
\begin{gather}
  \label{eq:als}
  a_{l+1}= \left\{\frac1{a_{l}}\right\},\quad a_0=a,\quad\quad
  N_{l+1}=[a_lN_l],\quad N_0=N,\\
  \label{eq:bls}
  b_{l+1}\equiv
  \left\{-\frac{b_l}{a_l}+\frac12\left[\frac1{a_l}\right]\right\}_0\,,
  \quad b_0=b.
\end{gather}
In the sequel, when required, we will sometimes write $N_l(N)=N_l$,
$b_l(b)=b_l$ and $a_l(a)=a_l$ to mark the dependency
on the initial value of the sequence.\\
The sequence $\{N_l\}$ is strictly decreasing until it reaches the
value zero and then becomes constant. Denote by $L(N)$ the unique
natural number such that
\begin{equation}
  \label{eq:LdeN}
  N_{L(N)+1}=0\quad\text{ and }\quad N_{L(N)}\geq 1.
\end{equation} 
Theorem~\ref{th:S2} immediately implies
\begin{Cor} \label{cor:mult-ren} One has
  \begin{equation}
    \label{eq:S-multiple}
    S(N,a,b)=
    \sum_{l=0}^{L}\frac{e(\theta_l)}{(a_0a_1\dots
      a_{l})^{1/2}}\,
    \Delta \Fc_l^{*l},\\
  \end{equation}
  where
  \begin{equation}
    \label{eq:delta-F}
    \Delta\Fc_l=e(-a_lN_l^2/2+N_lb_l)\Fc(\xi_l-b_l,a_l)-
    \Fc(-b_l,a_l).
  \end{equation}
  and
  \begin{itemize}
  \item $*l$ denotes the complex conjugation applied $l$ times
  \item $\xi_{l}=\{a_{l}N_{l}\}$,
  \item $\theta_{l+1}=\theta_l+(-1)^l\left(\frac18+\frac{b_l^2}
      {2a_l}\right)$ where $\theta_0=-1/8$.
  \end{itemize}
\end{Cor}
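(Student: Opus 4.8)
The plan is to prove Corollary~\ref{cor:mult-ren} by induction on $L=L(N)$, using Theorem~\ref{th:S2} as the one-step renormalization. The base case is $L=0$: here $N_1=[aN]=0$, so the renormalized sum $S(N_1,a_1,b_1)=S(0,a_1,b_1)=0$ vanishes, and~\eqref{eq:exa-ren-for} reduces to $S(N,a,b)=c(a)\left[e(-aN^2/2+Nb)\,\Fc(\xi-b,a)-\Fc(-b,a)\right]$, which is exactly the $l=0$ term $e(\theta_0)(a_0)^{-1/2}\,\Delta\Fc_0$ once one checks that $c(a)=e(-1/8)a^{-1/2}=e(\theta_0)\,a_0^{-1/2}$ and that $\Delta\Fc_0=e(-a_0N_0^2/2+N_0b_0)\Fc(\xi_0-b_0,a_0)-\Fc(-b_0,a_0)$ matches the bracket. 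So the base case is just bookkeeping.

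For the inductive step, suppose the formula holds for all initial data with $L(N)\le L-1$. Given $N$ with $L(N)=L$, apply Theorem~\ref{th:S2} once: writing $a_1=\{1/a\}$, $b_1\equiv\{-b/a+\tfrac12[1/a]\}_0$, $N_1=[aN]$, $\xi=\xi_0=\{aN\}$, we get
\begin{equation*}
  S(N,a,b)=c(a)\left[e\!\left(\tfrac{b^2}{2a}\right)\overline{S(N_1,a_1,b_1)}+\Delta\Fc_0\right].
\end{equation*}
Now the sequences attached to $(N_1,a_1,b_1)$ are precisely the shifted sequences $a_l(a_1)=a_{l+1}(a)$, $b_l(b_1)=b_{l+1}(b)$, $N_l(N_1)=N_{l+1}(N)$ — this is immediate from the recursions~\eqref{eq:als}–\eqref{eq:bls} — and correspondingly $L(N_1)=L-1$. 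By the induction hypothesis,
\begin{equation*}
  S(N_1,a_1,b_1)=\sum_{l=0}^{L-1}\frac{e(\tilde\theta_l)}{(a_1a_2\cdots a_{l+1})^{1/2}}\,\Delta\Fc_{l+1}^{*l},
\end{equation*}
where $\tilde\theta_l$ is the phase sequence built from $(a_1,b_1)$, namely $\tilde\theta_0=-1/8$ and $\tilde\theta_{l+1}=\tilde\theta_l+(-1)^l(\tfrac18+\tfrac{b_{l+1}^2}{2a_{l+1}})$. Conjugating, reindexing $l\mapsto l-1$, multiplying by $c(a)\,e(b^2/2a)$, and adding the $\Delta\Fc_0$ term gives the claimed sum from $l=0$ to $L$; the conjugation count is handled because $\overline{\Delta\Fc_{l+1}^{*l}}=\Delta\Fc_{l+1}^{*(l+1)}$.

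The only real content — and the step I expect to be the main obstacle — is verifying that the phases and prefactors line up exactly. For the prefactors, $c(a)=a_0^{-1/2}e(-1/8)$ and $(a_1\cdots a_{l+1})^{-1/2}$ combine to $(a_0a_1\cdots a_{l+1})^{-1/2}$, which is the right power; so that part is easy. For the phases one must check that
\begin{equation*}
  \theta_{l}=-\tfrac18-\tfrac{b_0^2}{2a_0}-\overline{\tilde\theta_{l-1}}\quad\text{(in the exponent, i.e. up to sign under conjugation)},
\end{equation*}
more precisely that $e(\theta_l)=e(-1/8)\,\overline{e(b_0^2/2a_0)\,e(-\tilde\theta_{l-1})}$ for $1\le l\le L$. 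This follows by unwinding the recursion for $\theta$: one has $\theta_l=-\tfrac18+\sum_{j=0}^{l-1}(-1)^j(\tfrac18+\tfrac{b_j^2}{2a_j})$, and separating the $j=0$ term from the rest, the remaining sum $\sum_{j=1}^{l-1}(-1)^j(\tfrac18+\tfrac{b_j^2}{2a_j})=-\sum_{k=0}^{l-2}(-1)^k(\tfrac18+\tfrac{b_{k+1}^2}{2a_{k+1}})=-(\tilde\theta_{l-1}+\tfrac18)$, and complex conjugation of $\overline{S(N_1,a_1,b_1)}$ flips the overall sign of that phase, which is exactly what the alternating factor $(-1)^l$ in the definition of $\theta_{l+1}$ encodes. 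Carrying this out carefully — keeping track of which terms get conjugated, and checking the $\pm1/8$ contributions coming from each application of $c(a)=e(-1/8)a^{-1/2}$ — is the delicate bookkeeping; once it is done the corollary follows. I would present it as a clean induction with these phase/prefactor identities isolated as a short computation.
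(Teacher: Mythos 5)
Your induction is exactly the argument the paper has in mind (it says Theorem~\ref{th:S2} ``immediately implies'' the corollary and leaves the iteration implicit), and the phase bookkeeping you isolate — namely $\theta_l=-\tfrac18+\tfrac{b_0^2}{2a_0}-\tilde\theta_{l-1}$ for $l\ge1$, together with $c(a)=e(-1/8)a_0^{-1/2}$ and $\overline{\Delta\Fc_{l+1}^{*l}}=\Delta\Fc_{l+1}^{*(l+1)}$ — is correct and complete. Nothing to add.
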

\section{Asymptotics of the exponential sum}
\label{sec:asymptotics-exp-sum}
From formula~(\ref{eq:S-multiple}), we now derive a representation for
$S(N,a,b)$ that, for small values of $a_L$, becomes an asymptotic
representation. This representation explains the curlicues structures
in the graphs of the exponential sum that we have mentioned already
and that are shown in Figure~\ref{fig:fresnel}.
\subsection{Preliminaries}
\label{sec:preliminaries}
We first discuss some analytic objects used to describe the
asymptotics of the exponential sums.\\
Recall that $L(N)$ is defined by~(\ref{eq:LdeN}).  The function $N\to
L(N)$ is a non-decreasing function of $N$. Define
\begin{equation}
  \label{eq:1}  N^-(L)=\min\{N;\ L(N)=L\}\text{ and } N^+(L)=\max\{N;\
  L(N)=L\}
\end{equation}
Clearly,
\begin{equation}\label{Npm}
  N^+(L-1)=N^-(L)-1. 
\end{equation}
One has
\begin{Le}
  \label{le:1}
  Let $L\in\N$. Then
  \begin{equation}
    \frac1{a_0a_1\dots a_{L-1}}<N^-(L)< 
    \frac1{a_0a_1\dots a_{L-1}}\,(1+4a_{L-1}).
  \end{equation}
\end{Le}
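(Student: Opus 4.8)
The plan is to unfold the recursion defining $N_l$ through the renormalization. Recall that $N_{l+1}=[a_lN_l]$, so $a_lN_l - 1 < N_{l+1} \le a_lN_l$, i.e. $N_{l+1} \le a_lN_l < N_{l+1}+1$. Starting from $N=N_0$ and iterating down to level $L-1$, one obtains a telescoping estimate for $N=N_0$ in terms of $N_L$: multiplying the inequalities $N_{l+1} \le a_lN_l$ over $l=0,\dots,L-1$ gives $N_L \le a_0a_1\cdots a_{L-1}\,N_0$, while from $a_lN_l < N_{l+1}+1$ one gets $N_0 < \dfrac{1}{a_0a_1\cdots a_{L-1}}(N_L + \text{error terms})$, where the error accumulates the $+1$'s weighted by partial products of the $a_j$'s. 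Since all $a_j \in (0,1)$ (as $a$ is irrational, $a_{l+1}=\{1/a_l\}\in(0,1)$), these partial products are bounded, but one must be careful: the products $a_l a_{l+1}\cdots a_{L-1}$ going \emph{up} the chain are the relevant small quantities, not down, so the naive bound is not immediately summable. The right tool is the classical continued-fraction fact that consecutive products $a_l a_{l+1}$ are bounded away from $1$; more precisely, $a_l a_{l+1} = a_l\{1/a_l\} < a_l \cdot 1$, but also one has $a_l a_{l+1} < 1/2$ whenever, say... — actually the clean fact is that for any $l$, $a_l a_{l+1} \le 1/2$ is \emph{not} always true, so instead I would use that among any two consecutive indices the product of the corresponding $a$'s is at most... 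Let me instead use the sharper route below.

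Here is the approach I would actually carry out. By definition of $L=L(N)$ we have $N_{L+1}=0$ and $N_L \ge 1$; moreover $N^-(L)$ is the smallest $N$ with $L(N)=L$, so I want to characterize it directly. If $L(N)=L$ then $N_L(N)\ge 1$ and $N_{L+1}(N)=[a_LN_L(N)]=0$, i.e. $a_L N_L(N) < 1$. The map $N\mapsto N_l(N)$ is nondecreasing, so $N^-(L)$ is the smallest $N$ making $N_{L-1}(N)$ large enough that $N_L = [a_{L-1}N_{L-1}] \ge 1$, i.e. $a_{L-1}N_{L-1}(N) \ge 1$. Working one level at a time: $N_{L}\ge 1 \iff a_{L-1}N_{L-1}\ge 1 \iff N_{L-1} \ge \lceil 1/a_{L-1}\rceil$; and since $a_{L-1}=\{1/a_{L-2}\}$ one has $1/a_{L-1} = [1/a_{L-1}] + a_L\cdot(\text{something})$... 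The cleaner statement is: the minimal admissible value of $N_{L-1}$ is $\lceil 1/a_{L-1}\rceil$, hence $N^-(L)$ equals the smallest $N$ with $N_{L-1}(N) = \lceil 1/a_{L-1}\rceil$, and then one pushes this back through levels $L-2,\dots,0$, at each step inverting $N_{l+1}=[a_lN_l]$: the smallest $N_l$ giving a prescribed $N_{l+1}=m$ is $\lceil m/a_l\rceil$. Composing these ceiling operations and bounding $\lceil x\rceil$ between $x$ and $x+1$ yields
\begin{equation}
  \label{eq:unfold}
  \prod_{j=0}^{L-1}\frac1{a_j} \;\le\; N^-(L) \;<\; \prod_{j=0}^{L-1}\frac1{a_j}\,+\,\sum_{k=0}^{L-1}\ \prod_{j=0}^{k-1}\frac1{a_j},
\end{equation}
where the left inequality is strict because $N^-(L)$ is an integer exceeding $\prod 1/a_j$ unless that product is itself an integer, which cannot happen for irrational $a$ since $\prod_{j=0}^{L-1}(1/a_j)$ equals (up to the relevant sign/convergent bookkeeping) a denominator-type quantity that is never an integer multiple... — in any case $N^-(L) > \prod_{j=0}^{L-1}1/a_j$ follows since each inversion forces $N_l \ge 1/a_l \cdot N_{l+1}$ and the chain ends at $N_L\ge 1 > a_L N_L$, i.e. $N_L$ strictly exceeds nothing, but the ceilings are strict somewhere.

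To finish, I must bound the error sum in \eqref{eq:unfold} by $4a_{L-1}\prod_{j=0}^{L-1}1/a_j = 4\prod_{j=0}^{L-2}1/a_j \cdot (a_{L-1}/a_{L-1})$ — wait, that gives $4 a_{L-1}\cdot \prod_{j=0}^{L-1}(1/a_j) = 4\prod_{j=0}^{L-2}(1/a_j)$. So I need $\sum_{k=0}^{L-1}\prod_{j=0}^{k-1}(1/a_j) \le 4\prod_{j=0}^{L-2}(1/a_j)$, equivalently, after dividing by the largest (last, $k=L-1$) term $\prod_{j=0}^{L-2}(1/a_j)$, that $\sum_{k=0}^{L-1}\prod_{j=k}^{L-2}a_j \le 4$. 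This is a \emph{geometrically decaying} tail read from the top: the $k=L-1$ term is $1$, the $k=L-2$ term is $a_{L-2}$, the $k=L-3$ term is $a_{L-3}a_{L-2}$, etc. The key obstacle — and the only substantive point — is that these partial products decay geometrically, which is exactly the classical estimate $a_l a_{l+1} < 1/2$ for \emph{every} $l$: indeed $a_{l+1}=\{1/a_l\}$, so if $a_l > 1/2$ then $1/a_l < 2$ hence $[1/a_l]=1$ and $a_{l+1}=1/a_l-1 < 1$, giving $a_l a_{l+1} = 1 - a_l < 1/2$; while if $a_l \le 1/2$ then $a_l a_{l+1} \le a_l \le 1/2$. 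Hence consecutive pairs of factors contribute at most $1/2$, so $\sum_{k=0}^{L-1}\prod_{j=k}^{L-2}a_j \le 2\sum_{m\ge 0}2^{-m} = 4$, which is the constant appearing in the Lemma. Pulling this back through \eqref{eq:unfold} gives precisely $N^-(L) < \dfrac{1}{a_0a_1\cdots a_{L-1}}(1+4a_{L-1})$, and the lower bound is the left-hand inequality of \eqref{eq:unfold}; the only care needed is to make the lower bound strict, which follows since at least the final ceiling $N_L \ge \lceil 1/(a_{L-1}N_{L-1}^{-1}\cdots)\rceil$ — more simply, $N^-(L)$ is an integer and $\prod 1/a_j$ is irrational, so they cannot be equal. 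That completes the proof.
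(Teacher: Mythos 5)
Your proof is correct and follows essentially the same route as the paper's: both unfold the nested integer-part recursion (you via ceilings applied directly to $N^-(L)$, the paper via floors applied to $N^+(L-1)=N^-(L)-1$), and both extract the constant $4$ from the same key estimate $a_la_{l+1}<1/2$ summed geometrically. The hesitation in your opening paragraph is harmless, since you discard that route and later prove $a_la_{l+1}<1/2$ correctly in all cases.
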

\begin{proof}[Proof of Lemma~\ref{le:1}]
  Using the definition of $N^\pm(L)$, we get
  \begin{gather}
    \label{eq:nm-low}
    1\leq [a_{L-1}[\dots[a_1[a_0N^-(L)]]\dots]<
    a_{L-1}\dots a_1a_0N^-(L);\\
    \label{eq:nm-up}
    \begin{split}
      1&> a_L [a_{L-1}[\dots[a_1[a_0N^+(L)]]\dots] \\&> a_{L}\dots
      a_1a_0N^+(L)-a_L-a_La_{L-1}\dots - a_L\dots a_2a_1\\
      &=a_{L}\dots a_1a_0N^-(L+1)-a_L-a_La_{L-1}\dots - a_L\dots
      a_1a_0.
    \end{split}
  \end{gather}
  Inequality~(\ref{eq:nm-low}) implies the lower bound for
  $N^-(L)$.\\
  To get the upper bound we use the well known estimate
  \begin{equation}
    \label{eq:two-a-l}
    \forall l\in\N\quad a_la_{l-1}<\frac12
  \end{equation}
  that immediately follows from the representation
  $a_{l-1}=\frac1{n_l+a_l}$, where $n_l$ is a positive integer;
  indeed, one computes
  \begin{equation*}
    a_la_{l-1}=1-n_la_{l-1}=1-\frac{n_l}{n_l+a_l}=
    \frac{a_l}{n_l+a_l}<
    \frac1{n_l+1}\leq \frac12.
  \end{equation*}
  Estimates~(\ref{eq:nm-up}) and~(\ref{eq:two-a-l}) imply that
  $a_{L}\dots a_1a_0\,N^-(L+1)<1+4a_{L}$.  This completes the proof of
  Lemma~\ref{le:1}.
\end{proof}
\noindent Fix $L\in\N$ and for $N^-(L)\leq N\leq N^+(L)$, consider the
quantity
\begin{equation}
  \label{eq:last-xi}
  \xi=\xi_L(N)=a_{L}N_{L}(N)
\end{equation}
The definitions of $N^-(L)$ and $N^+(L)$, see~(\ref{eq:1}), imply
\begin{equation}
  \label{eq:xi-bounds}
  a_L\leq \xi_L(N)<1.
\end{equation}
On the interval $N^-(L)\leq N\leq N^+(L)$, the function $N\to
\xi_L(N)=\xi$ is a non-decreasing function of $N$. One has
\begin{Le}\label{le:xi-possible-values}
  As $N$ increases from $N^-(L)$ to $N^+(L)$, $\xi_L(N)$ runs through
  all the values $a_L$, $2a_L$, $3a_L$, $\dots $ that are smaller than
  1.
\end{Le}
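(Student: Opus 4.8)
The plan is to track the recursion $N_{L+1}=[a_L N_L]$ directly. First I would note that since $N\mapsto L(N)$ is non-decreasing, the range $N^-(L)\le N\le N^+(L)$ is exactly the set of $N$ for which the recursion, run $L$ times, lands on a positive integer $N_L$, but run $L+1$ times lands on $0$. Writing $M=N_L(N)$, the condition $N_{L+1}=[a_L M]=0$ is equivalent to $a_L M<1$, i.e. $M<1/a_L$; and $M\ge 1$ is already guaranteed by $N_L\ge1$ in the definition of $L(N)$. Hence the admissible values of $M=N_L(N)$ are precisely the integers in $[1,1/a_L)$, so $\xi_L(N)=a_L M$ ranges over $\{a_L,2a_L,3a_L,\dots\}\cap(0,1)$, which is what we want — provided that $M$ actually assumes every such integer value as $N$ runs through $[N^-(L),N^+(L)]$.

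So the main point to establish is surjectivity: for each integer $M$ with $1\le M<1/a_L$ there is some $N$ with $L(N)=L$ and $N_L(N)=M$. I would prove this by ``lifting'' $M$ back up through the recursion one step at a time. Given a target value $v$ for $N_{l+1}=[a_l N_l]$, the preimage $\{N_l:[a_l N_l]=v\}$ is the integer interval $[\,v/a_l\,,\,(v+1)/a_l\,)\cap\Z$, which is nonempty because its length is $1/a_l>1$ (as $a_l<1$); moreover, as $v$ runs over consecutive integers, these preimage blocks are consecutive and cover all of $\{N_l:N_l\ge \lceil 1/a_l\rceil\}$ without gaps, and in particular the blocks for $v=1,2,\dots$ partition $\{N_l:N_l\ge 1,\ [a_lN_l]\ge1\}$. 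Iterating this observation from level $l=L-1$ down to $l=0$, starting from the target $M$ at level $L$, produces an honest $N=N_0$ with $N_L(N)=M$; and since along the way every intermediate $N_l$ satisfies $N_l\ge1$ and $[a_lN_l]\ge1$ for $l<L$ while $[a_LM]=0$, this $N$ indeed has $L(N)=L$. Monotonicity of $N\mapsto \xi_L(N)$ on $[N^-(L),N^+(L)]$ (already noted in the text) then forces $\xi_L$ to run through the values $a_L,2a_L,\dots$ in increasing order with no value skipped, since each is attained and the function is integer-multiple-of-$a_L$ valued.

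The one genuinely delicate point — and the step I would expect to be the main obstacle — is checking the endpoints: one must confirm that the smallest admissible $M$ (namely $M=1$) really does correspond to $N=N^-(L)$ and the largest admissible $M$ (namely $M=\lceil 1/a_L\rceil-1=[1/a_L]$, or $[1/a_L]$ adjusted when $1/a_L\in\Z$, which cannot happen for irrational $a$) corresponds to $N=N^+(L)$, so that the full interval $[N^-(L),N^+(L)]$ is swept and nothing lies outside the claimed list. This amounts to verifying that the extreme preimage blocks constructed above are precisely the ones containing $N^-(L)$ and $N^+(L)$, which follows from the defining minimality/maximality in~(\ref{eq:1}) together with the bounds~(\ref{eq:xi-bounds}) on $\xi_L$; I would spell this out by observing that $N^-(L)$ is characterized by $N_{L-1}(N^-(L))$ being the minimal integer with $[a_{L-1}N_{L-1}]\ge1$, which recursively forces $N_L(N^-(L))=1$, and symmetrically at the top end. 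With the endpoints pinned down, the surjectivity argument and monotonicity together give the statement.
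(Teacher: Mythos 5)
Your proof is correct and rests on the same core observation as the paper's: that each step $k\mapsto[a_l k]$ of the recursion is surjective from $\Z_{\geq 0}$ onto itself because the preimage of any integer $v$ is an interval of length $1/a_l>1$. The paper packages this as a forward induction on $l$, introducing auxiliary non-decreasing functions $\tilde\xi_l(N)$ and showing each hits every positive multiple of $a_l$; you package it as a backward ``lift'' of a prescribed value $M$ of $N_L$ down to a starting integer $N_0=N$, checking along the way that $N_l\geq 1$ and $N_{l+1}\geq1$ for $l<L$ while $N_{L+1}=0$, so that $L(N)=L$. These are two phrasings of the same argument. One small remark: your third paragraph, flagging the identification of the endpoints $N^\pm(L)$ as ``the main obstacle,'' is superfluous --- once you know (i) $N_L(N)\in\{1,\dots,[1/a_L]\}$ whenever $L(N)=L$, (ii) each such value is attained, and (iii) $N\mapsto N_L(N)$ is non-decreasing while $\{N:L(N)=L\}$ is precisely the interval $[N^-(L),N^+(L)]$ by monotonicity of $L(\cdot)$, the conclusion is immediate and no separate endpoint check is needed.
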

\begin{proof}[Proof of Lemma~\ref{le:xi-possible-values}]
  For $l\in\N$, define $\tilde \xi_{l+1}(N)=a_l[\tilde \xi_l(N)]$
  where $\tilde\xi_0(N)=a_0 N$. All the functions $N\mapsto\tilde
  \xi_l(N)$ are non-decreasing functions of $N$ such that
  \begin{itemize}
  \item $\tilde\xi_l(0)=0$ and $\tilde\xi_l(N)\to+\infty$ as
    $N\to+\infty$,
  \item $\tilde\xi_l(N)=0$ if $N< N_-(L)$ and $\tilde\xi_l(N)>1$ if
    $N> N_+(L)$,
  \item $\tilde \xi_{L(N)}(N)=\xi_{L(N)}(N)$ where $L(N)$ is defined
    in~(\ref{eq:LdeN}).
  \end{itemize}
  So, it suffices to check that, for fixed $l$, $\tilde\xi_l(N)$ takes
  all the values $a_l$, $2a_l$, $3a_l$, $\dots$ as $N$ increases. For
  $l=0$, this is obvious. Assume that it holds for some $l>0$.  Show
  that, for any $m\in\N^*$, $\tilde\xi_{l+1}(N)$ takes the value
  $ma_{l+1}$ for some $N$. Pick $m\in\N^*$ and consider the largest
  $N$ such that $\tilde\xi_l(N)<m$. One has $m\leq\tilde\xi_l(N+1)=
  \tilde\xi_l(N)+a_l<m+1$. So, $\tilde\xi_{l+1}(N+1)=
  m\,a_{l+1}$. This completes the proof of
  Lemma~\ref{le:xi-possible-values}.
\end{proof}
\subsection{Asymptotics}
\label{sec:asymptotics}
Recall that $-1/2<b_L\leq 1/2$. We prove
\begin{Th}
  \label{th:as-exp-sum}
  Let $a$ be irrational and $L$ be a positive integer. Assume that $N^-(L)\leq
  N \le N^+(L)$. Define $\xi_L(N)$ by~\eqref{eq:last-xi}.  Then, for
  $\xi_L(N)-b_L\leq 1/2$, one has
  \begin{equation}\label{eq:xi-le-half}
    S(N,a,b)=\frac{e(\theta_{L+1})}{\sqrt{a_0a_1\dots a_L}}\,
    \left(\int_{-\frac{b_L}{\sqrt{a_L}}}^{\frac{\xi_L(N)-b_L}{\sqrt{a_L}}}
      e(-\tau^2/2)\,d\tau+O\left(\sqrt{a_L}\right)\right)^{*L} 
  \end{equation}
  and, for $\xi_L(N)-b_L\geq 1/2$, one has
  \begin{equation}
    \label{eq:xi-ge-half}
    \begin{split}
      S(N,a,b)&= \frac{e(\theta_{L+1})}{\sqrt{a_0a_1\dots a_L}}\,
      \left(\int_{-\frac{b_L}{\sqrt{a_L}}}^{\infty}e(-\tau^2/2)\,d\tau
        +O\left(\sqrt{a_L}\right)+\right.  \\ &\hskip1.2cm+{\dsize
        e\left(\frac{b_L-\xi_L(N)+1/2}{2a_L}\right)} \left.
        \int_{\frac{1-(\xi_L(N)-b_L)}{\sqrt{a_L}}}^{\infty}
        e(-\tau^2/2)\,d\tau\right)^{*L}
    \end{split}
  \end{equation}
  where $*L$ and $\theta_{l}$ are defined in
  Corollary~\ref{cor:mult-ren}.
\end{Th}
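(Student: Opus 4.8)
The plan is to feed the multiple renormalization formula~\eqref{eq:S-multiple} of Corollary~\ref{cor:mult-ren} the information that $a_L$ is small: all the terms with $l<L$ then turn out to be negligible, and the surviving term $l=L$ is evaluated by Proposition~\ref{pro:Fc-as}.

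\emph{Discarding the terms $l<L$.} Proposition~\ref{pro:Fc-as} gives $|\Fc(\xi,a)|\le C$ uniformly for $a\in(0,1)$ and $|\xi|\le1/2$ (the remainder is $O(a^{1/2})$ and $|f(t)|=|F(t)|$ is bounded); together with the functional equation~\eqref{eq:F-eq} this yields a uniform bound on $\Fc$ over any fixed bounded $\xi$-interval. Since $\xi_l\in[0,1)$ and $b_l\in(-1/2,1/2]$, the quantities $-b_l$ and $\xi_l-b_l$ stay bounded and hence $|\Delta\Fc_l|$ is uniformly bounded. Writing $(a_0\cdots a_l)^{-1/2}=(a_{l+1}\cdots a_{L-1})^{1/2}(a_0\cdots a_{L-1})^{-1/2}$ and using~\eqref{eq:two-a-l} to bound the products $a_{l+1}\cdots a_{L-1}$ by a geometric sequence, we get $\sum_{l=0}^{L-1}(a_0\cdots a_l)^{-1/2}|\Delta\Fc_l|=O((a_0\cdots a_{L-1})^{-1/2})=O(\sqrt{a_L}\,(a_0\cdots a_L)^{-1/2})$ with a uniform constant. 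Therefore $S(N,a,b)=e(\theta_L)(a_0\cdots a_L)^{-1/2}\,\Delta\Fc_L^{*L}+O(\sqrt{a_L}\,(a_0\cdots a_L)^{-1/2})$.

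\emph{Evaluating the last term.} Using the recursion $\theta_{L+1}=\theta_L+(-1)^L(1/8+b_L^2/(2a_L))$ and the fact that $*L$ commutes with complex conjugation, rewrite $e(\theta_L)\Delta\Fc_L^{*L}=e(\theta_{L+1})(e(-1/8-b_L^2/(2a_L))\Delta\Fc_L)^{*L}$. Since at the last level $\xi_L(N)=a_LN_L<1$, a direct computation gives $-a_LN_L^2/2+N_Lb_L=b_L^2/(2a_L)-(\xi_L-b_L)^2/(2a_L)$, whence $e(-1/8-b_L^2/(2a_L))\Delta\Fc_L=e(-1/8)[\,e(-(\xi_L-b_L)^2/(2a_L))\,\Fc(\xi_L-b_L,a_L)-e(-b_L^2/(2a_L))\,\Fc(-b_L,a_L)\,]$. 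Now apply Proposition~\ref{pro:Fc-as}; since $f(t)=e(t^2/2)F(t)$, the prefactors $e(-(\cdot)^2/(2a_L))$ exactly cancel the quadratic phase inside $f$, so each $\Fc$-term becomes $e(1/8)$ times the Fresnel primitive $F$ at the rescaled argument, plus $O(\sqrt{a_L})$. This applies directly to $\Fc(-b_L,a_L)$ (here $-b_L\in[-1/2,1/2)$) and to $\Fc(\xi_L-b_L,a_L)$ when $\xi_L-b_L\le1/2$ (note $\xi_L-b_L>-1/2$ since $\xi_L\ge a_L>0$). When $\xi_L-b_L\ge1/2$, one first uses~\eqref{eq:F-eq} to write $\Fc(\xi_L-b_L,a_L)=\Fc(\xi_L-b_L-1,a_L)+e((\xi_L-b_L)^2/(2a_L))$ and applies Proposition~\ref{pro:Fc-as} to the reduced argument $\xi_L-b_L-1\in[-1/2,1/2)$.

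\emph{Assembling and the main obstacle.} Collecting the above and rewriting differences of Fresnel primitives as genuine integrals via $F(+\infty)=e(-1/8)$ and the evenness of $e(-\tau^2/2)$ (which gives $F(-x)=\int_x^\infty e(-\tau^2/2)\,d\tau$), the case $\xi_L-b_L\le1/2$ yields $e(-1/8-b_L^2/(2a_L))\Delta\Fc_L=\int_{-b_L/\sqrt{a_L}}^{(\xi_L-b_L)/\sqrt{a_L}}e(-\tau^2/2)\,d\tau+O(\sqrt{a_L})$, and absorbing the conjugation-invariant outer error into the parenthesis gives~\eqref{eq:xi-le-half}. In the case $\xi_L-b_L\ge1/2$ the extra term coming from~\eqref{eq:F-eq} contributes the constant $1$ after multiplication by $e(-(\xi_L-b_L)^2/(2a_L))$, which combines with the $-b_L$ contribution into $\int_{-b_L/\sqrt{a_L}}^{\infty}e(-\tau^2/2)\,d\tau$, while the reduced argument produces the second tail integral multiplied by a quadratic phase; one checks this phase is $e((b_L-\xi_L(N)+1/2)/(2a_L))$ and the lower limit is $(1-(\xi_L(N)-b_L))/\sqrt{a_L}$, which is~\eqref{eq:xi-ge-half}. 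The estimate of the terms $l<L$ is routine; the main obstacle is exactly this last phase bookkeeping, where one must carry several quadratic-in-$\xi_L$ exponentials through the functional equation and the identity $f=e(t^2/2)F$, reduce them modulo $1$, and verify that they collapse to the stated phase and integration limit, all while keeping the various $O(\sqrt{a_L})$ remainders uniform so they can be consolidated inside a single $(\,\cdots+O(\sqrt{a_L})\,)^{*L}$.
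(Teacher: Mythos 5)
Your proof is correct in substance and follows essentially the same plan as the paper: isolate the $l=L$ term of~\eqref{eq:S-multiple}, bound the $l<L$ tail by $O(\sqrt{a_L}\,(a_0\cdots a_L)^{-1/2})$ via the uniform local boundedness of $\Fc$ and~\eqref{eq:two-a-l} (the paper's~\eqref{eq:error-S}), and expand $\Delta\Fc_L$ with Proposition~\ref{pro:Fc-as} after moving $e(\theta_L)$ to $e(\theta_{L+1})$ through the conjugations. The one genuine difference is the reduction in the regime $\xi_L-b_L\geq 1/2$: the paper combines the symmetry~\eqref{eq:f-sym} with~\eqref{eq:F-eq} to obtain~\eqref{eq:Fc-sym1}, sending $\xi_L-b_L$ to $1-(\xi_L-b_L)\in(-1/2,1/2]$, whereas you use~\eqref{eq:F-eq} alone, sending $\xi_L-b_L$ to $\xi_L-b_L-1\in[-1/2,1/2)$; the two are reconciled by $F(-x)=F(\infty)-F(x)$, and your route is somewhat more economical since it invokes only one functional identity and produces only one extra exponential rather than three terms (two exponentials and $1/c(a_L)$) to be absorbed. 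Everything else — the phase algebra $\theta_{L+1}=\theta_L+(-1)^L(1/8+b_L^2/(2a_L))$, the cancellation of the quadratic phase in $f(t)=e(t^2/2)F(t)$, the conversion of the $F$-differences to genuine integrals via evenness and $F(+\infty)=e(-1/8)$ — checks out.

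One point of bookkeeping worth flagging: the phase in the $\xi_L-b_L\geq 1/2$ case works out to $e\bigl(\tfrac{1-2(\xi_L-b_L)}{2a_L}\bigr)=e\bigl(\tfrac{b_L-\xi_L+1/2}{a_L}\bigr)$, which is what the paper's own~\eqref{eq:D-F-4} displays; the statement of the theorem writes $e\bigl(\tfrac{b_L-\xi_L(N)+1/2}{2a_L}\bigr)$ (denominator $2a_L$ rather than $a_L$), which appears to be a typo. Your write-up reproduces the theorem's $2a_L$, but your actual computation, carried out as you indicate, gives the $a_L$ of~\eqref{eq:D-F-4}; you should be aware of this discrepancy rather than take the stated formula at face value.
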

\noindent Formulas~\eqref{eq:xi-le-half} and~\eqref{eq:xi-ge-half}
give asymptotics for $S(N,a,b)$ when $a_L$ is small.
\begin{proof}[Proof of Theorem~\ref{th:as-exp-sum}]
  As we will see later on, the $L$-th term in~(\ref{eq:S-multiple}) is
  the leading term in this expansion. To get the formulas for the
  leading term, let us study the expression for $\Delta\Fc_L$. To
  simplify the notations, we write $\xi=\xi_L(N)$.
  By~(\ref{eq:delta-F}) and~(\ref{eq:last-xi}),
  \begin{equation}
    \label{eq:D-F-1}
    \Delta\Fc_L=
    e\left(-\frac{\xi^2}{2a_L}+\frac{\xi b_L}{a_L}\right)\,
    \Fc(\xi-b_L,a_L)-\Fc(-b_L,a_L).
  \end{equation}
  Now, assume that $\xi-b_L\leq 1/2$. Replacing $\Fc$ by its
  representation~(\ref{eq:Fc-as}), we get
  \begin{equation}
    \label{eq:D-F-2}
    \Delta\Fc_L=
    e\left(\frac{b_L^2}{2a_L}+\frac{1}{8}\right)\,
    \int_{-\frac{b_L}{\sqrt{a_L}}}^{\frac{\xi-b_L}{\sqrt{a_L}}}
    e(-\tau^2/2)\,d\tau+O(\sqrt{a_L}).
  \end{equation}
  this implies that, up to the term
  $\frac{O(\,\sqrt{a_L}\,)}{\sqrt{a_0a_1\dots a_L}}$, the $L$-th term
  in~(\ref{eq:S-multiple}) coincides
  with the leading term in~(\ref{eq:xi-le-half}).\\
  Assume that $\xi-b_L\geq 1/2$. Now, we express
  $\Delta\Fc(\xi-b_L,a_L)$ in terms of $\Delta\Fc( (1-(\xi-b_L),a_L)$
  that can be directly described
  by~(\ref{eq:Fc-as}). By~(\ref{eq:f-sym}) and~(\ref{eq:F-eq}), we get
  \begin{equation}
    \label{eq:Fc-sym1}
    \Fc(\xi,a)=-\Fc(1-\xi,a)+e\left(\frac{\xi^2}{2a}\right)+
    e\left(\frac{(1-\xi)^2}{2a}\right)-1/c(a).
  \end{equation}
  This and~\eqref{eq:D-F-1} imply that
  \begin{equation}
    \label{eq:D-F-3}
    \begin{split}
      \Delta\Fc_L&= e\left(-\frac{\xi^2}{2a_L}+\frac{\xi
          b_L}{a_L}\right)\, \left[
        e\left(\frac{(\xi-b_L)^2}{2a_L}\right)+
        e\left(\frac{(1-\xi+b_L)^2}{2a_L}\right)\right.\\&\hskip2.5cm\left.
        -\frac1{c(a_L)}-\Fc(1-\xi+b_L,a_L)\right]-\Fc(-b_L,a_L).
    \end{split}
  \end{equation}
  As $0<\xi<1$, $|b_L|\leq 1/2$ and $\xi-b_L\geq 1/2$, one has
  $-1/2\leq 1-(\xi-b_L)< 1/2$. So, in~(\ref{eq:D-F-3}), we replace
  $\Fc$ by its representation~(\ref{eq:Fc-as}) and use
  \begin{equation*}
    \int_{-\infty}^{\infty}e(-\tau^2/2)\,d\tau=e(-1/8)\text{ and }
    1/c(a)=O(\sqrt{a}),
  \end{equation*}
  to get
  \begin{equation}
    \label{eq:D-F-4}
    \begin{split}
      \Delta\Fc_L&= {\dsize
        e\left(\frac{b_L^2}{2a_L}+\frac{1}{8}\right)}\,\, \left(
        \int_{-\frac{b_L}{\sqrt{a_l}}}^{\infty}e(-\tau^2/2)\,d\tau\right.
      \\&\hskip1cm\left. + {\dsize
          e\left(\frac{b_L-\xi+1/2}{a_L}\right)}
        \int_{\frac{1-(\xi-b_L)}{\sqrt{a_L}}}^{\infty}
        e(-\tau^2/2)\,d\tau+O\left(\sqrt{a_L}\right)\,\,\right).
    \end{split}
  \end{equation}  
  When $\xi-b_L\geq 1/2$, this implies that, the $L$-th term
  in~(\ref{eq:S-multiple}) coincides with the leading term
  in~(\ref{eq:xi-ge-half}) up to
  $O(\frac{\sqrt{a_L}}{\sqrt{a_0a_1\dots
      a_L}})$.\\
  To complete the proof, we have to estimate the contribution to
  $S(N,a,b)$ of the sum $\sum_{l=0}^{L-1}\dots$
  in~(\ref{eq:S-multiple}). It follows from
  Proposition~\ref{pro:Fc-as} and equation~(\ref{eq:F-eq})
  $\xi\mapsto\Fc(\xi,a)$ is locally bounded, uniformly in $a$. This
  observation and~(\ref{eq:two-a-l}) imply the uniform estimate
  \begin{equation}
    \label{eq:error-S}
    \left|\sum_{l=0}^{L-1}\frac{e(\theta_l)}
      {(a_0a_1\dots a_{l})^{1/2}}\,
      \Delta \Fc_l^{*l}\right|\leq \frac{C}
    {(a_0a_1\dots a_{L-1})^{1/2}}.
  \end{equation}
  This estimate,~(\ref{eq:D-F-2}) and~(\ref{eq:D-F-4})
  imply~(\ref{eq:xi-le-half}) and~(\ref{eq:xi-ge-half}). This
  completes the proof of Theorem~\ref{th:as-exp-sum}.
\end{proof}
\noindent The following corollary of Theorem~\ref{th:as-exp-sum} will
be of use later on.
\begin{Cor}
  \label{cor:S-est}
  Fix $L\in\N$ and $N^-(L)\leq N\leq N^+(L)$. Write
  $\xi=\xi_L(N)$. For $\xi-b_L\leq 1/2$,
  \begin{equation}
    \label{ineq:xi-le-half}
    \begin{split}
      \left|\frac{ S(N,a,b)}{\sqrt{N}}\right|&=
      \left|\frac{1+O(a_L/\xi)}{\sqrt{\xi}}\,
        \int_{-\frac{b_L}{\sqrt{a_L}}}^{\frac{\xi-b_L}{\sqrt{a_L}}}
        e(-\tau^2/2)\, d\tau\right|\\&\hskip6cm
      +O\left(\sqrt{a_L/\xi}\,\right),
    \end{split}
  \end{equation}
  and, for $\xi-b_L\geq 1/2$
  \begin{equation}
    \label{ineq:xi-ge-half}
    \begin{split}
      \left|\frac{S(N,a,b)}{\sqrt{N}}\right|&\leq
      \frac{C}{\sqrt{\xi}}\,
      \left(\,\left|\int_{-\frac{b_L}{\sqrt{a_L}}}
          ^{\infty}e(-\tau^2/2)\,d\tau\right|\right.  \\&\hskip1.5cm
      \left. +\left|\int_{\frac{1-(\xi-b_L)}{\sqrt{a_L}}}^{\infty}
          e(-\tau^2/2)\,d\tau\right|\right)
      +O\left(\sqrt{a_L/\xi}\,\right).
    \end{split}
  \end{equation}
  The error terms estimates are uniform in $L$, $N$, $a$ and $b$.
\end{Cor}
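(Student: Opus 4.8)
The plan is to derive Corollary~\ref{cor:S-est} directly from Theorem~\ref{th:as-exp-sum} by dividing the asymptotic formulas by $\sqrt{N}$ and controlling the ratio $\sqrt{a_0a_1\dots a_L}\,\sqrt{N}$ using the bounds on $N^-(L)$ and $\xi_L(N)$ obtained in Section~\ref{sec:preliminaries}. First I would observe that, since $N^-(L)\leq N\leq N^+(L)=N^-(L+1)-1$, Lemma~\ref{le:1} applied at level $L$ and at level $L+1$ gives
\begin{equation*}
  \frac1{a_0\dots a_{L-1}}<N<\frac1{a_0\dots a_L}(1+4a_L),
\end{equation*}
so that, recalling $\xi=\xi_L(N)=a_LN_L(N)$ and the definition of the $N_l$, one has the two-sided comparison $N\,a_0a_1\dots a_L\sim\xi$; more precisely $\xi\leq N a_0\dots a_L<\xi+a_0\dots a_L+\dots<\xi(1+Ca_L/\xi)$ for a universal $C$, using the telescoping estimate already used in the proof of Lemma~\ref{le:1} together with~\eqref{eq:two-a-l}. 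Hence $\sqrt{a_0\dots a_L}\,\sqrt N=\sqrt\xi\,(1+O(a_L/\xi))$, and the prefactor $e(\theta_{L+1})$ has modulus one and disappears upon taking absolute values.

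Next I would substitute this into~\eqref{eq:xi-le-half}: dividing by $\sqrt N$ and using $|e(\theta_{L+1})|=1$,
\begin{equation*}
  \left|\frac{S(N,a,b)}{\sqrt N}\right|
  =\left|\frac{1+O(a_L/\xi)}{\sqrt\xi}
    \left(\int_{-b_L/\sqrt{a_L}}^{(\xi-b_L)/\sqrt{a_L}}e(-\tau^2/2)\,d\tau
      +O(\sqrt{a_L})\right)\right|.
\end{equation*}
Here I would split off the $O(\sqrt{a_L})$ term: it contributes $O(\sqrt{a_L}/\sqrt\xi)\cdot(1+O(a_L/\xi))=O(\sqrt{a_L/\xi})$, absorbing the lower-order cross term since $a_L/\xi\leq 1$ (as $\xi\geq a_L$ by~\eqref{eq:xi-bounds}), which yields~\eqref{ineq:xi-le-half}. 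For the regime $\xi-b_L\geq1/2$ I would do the same with~\eqref{eq:xi-ge-half}: take absolute values, bound the modulus of the sum of the two Fresnel-type integrals by the sum of their moduli (the exponential factor $e((b_L-\xi+1/2)/(2a_L))$ has modulus one), pull out $1/\sqrt\xi$, and collect the $O(\sqrt{a_L})$ into $O(\sqrt{a_L/\xi})$; since here only an upper bound is claimed, the constant $C$ can simply be taken as $2$ times the implied constant, and the factor $1+O(a_L/\xi)$ is bounded and absorbed into $C$.

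The only real point requiring care — and the step I expect to be the main obstacle — is making the comparison $N\,a_0a_1\dots a_L=\xi(1+O(a_L/\xi))$ fully rigorous with an error of exactly the stated shape, uniformly in all parameters. This hinges on controlling $N a_0\dots a_L-\xi$ by a geometric-type sum $a_0\dots a_L\cdot(1+a_L^{-1}+(a_La_{L-1})^{-1}+\dots)$ — no, rather by the telescoping argument from the proof of Lemma~\ref{le:1}: $N_l=[a_{l-1}N_{l-1}]=a_{l-1}N_{l-1}-\{a_{l-1}N_{l-1}\}$ with fractional part in $[0,1)$, so iterating gives $a_0\dots a_L N = N_{L+1}'+\sum_{j}(a_L\dots a_j)\,\theta_j$ with $0\le\theta_j<1$; since the sum is $<a_L+a_La_{L-1}+\dots<2a_L$ by~\eqref{eq:two-a-l} (grouped in consecutive pairs), and $N_{L+1}'=a_LN_L=\xi$, one gets $0\le a_0\dots a_L N-\xi<Ca_L$, hence the multiplicative error $O(a_L/\xi)$. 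Everything else is bookkeeping: absolute values, the triangle inequality, and absorbing lower-order terms using $\xi\ge a_L$ so that $a_L/\xi\le 1$ and $\sqrt{a_L/\xi}$ dominates $a_L/\xi$. I would close by noting that all implied constants originate from Proposition~\ref{pro:Fc-as}, Lemma~\ref{le:1} and~\eqref{eq:two-a-l}, none of which involves $a$, $b$, $N$ or $L$, giving the claimed uniformity.
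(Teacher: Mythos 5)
Your proposal is correct and follows essentially the same route as the paper: derive the corollary from Theorem~\ref{th:as-exp-sum} by dividing by $\sqrt N$, establish the key comparison $a_L\cdots a_0 N=\xi+O(a_L)$ (equivalently $\frac1{\sqrt{a_L\cdots a_0 N}}=\frac{1+O(a_L/\xi)}{\sqrt\xi}$), and use the lower bound $\xi\geq a_L$ from~\eqref{eq:xi-bounds} to absorb remainders into $O(\sqrt{a_L/\xi})$. The telescoping estimate you spell out is exactly the computation the paper uses to justify its equation~\eqref{eq:xi-aN} (with the minor quibble that the geometric bound gives $4a_L$, not $2a_L$, but that is immaterial).
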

\begin{proof}
  The corollary follows from Theorem~\ref{th:as-exp-sum}, the
  representation
  \begin{equation}
    \label{eq:xi-aN}
    \frac1{\sqrt{a_L\dots a_1a_0 N}}=
    \frac{1+O(a_L/\xi)}{\sqrt{\xi}},\quad\quad 
    |O(a_L/\xi)|\leq 4 a_L/\xi,
  \end{equation}
  and the lower bound from~(\ref{eq:xi-bounds}).  To
  check~(\ref{eq:xi-aN}), we note that~(\ref{eq:last-xi}) implies that
  \begin{equation*}
    \begin{split}
      a_L\dots a_1a_0N\geq \xi&\geq a_L\dots a_1a_0N-
      a_L-a_La_{L-1}\dots-a_L\dots a_1\\&> a_L\dots a_1a_0N-4a_L,
    \end{split}
  \end{equation*}
  and as $\xi\geq a_L$, these estimates imply~(\ref{eq:xi-aN}). This
  completes the proof of Corollary~\ref{cor:S-est}.
\end{proof}
\subsection{Analysis of the curlicues}
\label{sec:analysis-curlicues}
The formulas~\eqref{eq:xi-le-half} and~\eqref{eq:xi-ge-half} and
Lem\-ma~\ref{le:xi-possible-values} explain the curlicue structures
seen in the graphs of the exponential sums and discussed in many
papers (see e.g.~\cite{MR1140726,MR928946,MR1443869}).\\
The graph of an exponential sum is just the graph obtained by linearly
interpolating between the values of $S(N,a,b)$ obtained for
consecutive $N$. In Fig.~\ref{fig:fresnel}, we show an example of such
a graph.
%    
%%%%%%%%%%%%%%%%%%%%%%%%%%%%%%%%%%%%%%%%%%%%%%%%%%%%%%%%%%%%%%%%%%%%%% %
\begin{figure}[htbp]
  \centering \subfigure[The graph of a sum]{
    \includegraphics[bbllx=14,bblly=19,bburx=581,bbury=822,angle=270,width=5cm]{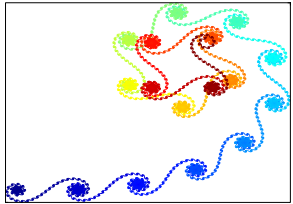}
    \label{fig:graphe}} \hskip1cm \subfigure[A zoom of a detail of
  this graph]{
    \includegraphics[bbllx=14,bblly=19,bburx=581,bbury=822,angle=270,width=5cm]{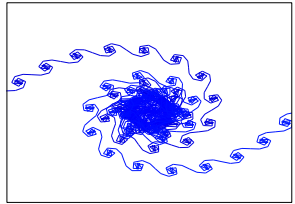}
    \label{fig:zoom_graph}}
  \caption{The graph of an exponential sum}
  \label{fig:fresnel}
\end{figure}
%%%%%%%%%%%%%%%%%%%%%%%%%%%%%%%%%%%%%%%%%%%%%%%%%%%%%%%%%%%%%%%%%%%%%% %
%
One distinctly sees the spiraling structure that were dubbed curlicues
in~\cite{MR928946}. These are seen for $N$ such that $a_{L(N)}$ is
small; indeed, in this case, as formulas~\eqref{eq:xi-le-half}
and~\eqref{eq:xi-ge-half} show, up to a rescaling and possibly a
shift, the graph of the exponential sum is obtained by sampling points
on the graph of the Fresnel integral, the Cornu spiral.  Thanks to
formulas~\eqref{eq:xi-le-half} and~\eqref{eq:xi-ge-half}, one can
compute all the geometric characteristics of the curlicues when
$a_{L(N)}$ is small.\\
In Fig.~\ref{fig:zoom_graph}, we zoomed in on one of the curlicues
shown in Fig.~\ref{fig:graphe}. Now we see the curlicues from the
``previous generation''. They are seen in the case where $a_{L-1}$ is
small and can be explained by the asymptotic analysis of the $(L-1)$th
term in~\eqref{eq:S-multiple}.
\section{Estimates on the exponential sums}
\label{sec:estimates-S}
Using Theorem~\ref{th:as-exp-sum}, we now estimate $S(N,a,b)$
in terms of the sequences $(a_l)_l$ and $(b_l)_l$.\\
For $L\in\N$, define
\begin{equation}
  \label{eq:M}
  M(L,a,b)=\max_{N^-(L)\leq N\leq N^+(L)}
  \left|\frac{S(N,a,b)}{\sqrt{N}}\right|
\end{equation}
We prove
\begin{Pro}
  \label{pro:est-s}
  There exist positive constants $c$ and $C$ independent of $a$, and
  $b$ such that, for $L\in\N$,
  \begin{equation}
    \label{eq:est-s:up}
    M(L,a,b)\leq C\,\frac1{\sqrt{|b_L|}+\sqrt[4]{a_L}},\\
  \end{equation}
  and
  \begin{equation}
    \label{eq:est-s:down}
    \text{if \ } \sqrt{|b_l|}+\sqrt[4]{a_L}\leq c, \text{ \ then \ }
    \frac1C\,\frac1{\sqrt{|b_L|}+\sqrt[4]{a_L}} \leq 
    M(L,a,b)
  \end{equation}
  where $a_L$ and $b_L$ are defined by~(\ref{eq:als})
  and~(\ref{eq:bls}).
\end{Pro}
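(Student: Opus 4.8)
The plan is to extract both bounds from the asymptotic formulas of Theorem~\ref{th:as-exp-sum}, packaged through Corollary~\ref{cor:S-est}. Fix $L$ and recall from~\eqref{eq:xi-bounds} that $a_L\le\xi=\xi_L(N)<1$ as $N$ runs over $[N^-(L),N^+(L)]$, and that by Lemma~\ref{le:xi-possible-values} the values of $\xi$ available are exactly $a_L,2a_L,3a_L,\dots$ below $1$. The Fresnel primitive $F(t)=\int_{-\infty}^t e(-\tau^2/2)\,d\tau$ is a bounded, smooth function with $F(t)\to 0$ as $t\to-\infty$ and $F(t)\to e(-1/8)$ as $t\to+\infty$, and $|F(t)|\asymp\min(1,1/|t|)$ for $t$ away from the origin while $F(0)=\tfrac12 e(-1/8)\ne 0$; these elementary facts about $F$ are the only external input needed.

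For the upper bound~\eqref{eq:est-s:up}, I would split into the two regimes of Corollary~\ref{cor:S-est}. In the regime $\xi-b_L\le 1/2$, the main term in~\eqref{ineq:xi-le-half} is $\xi^{-1/2}\bigl|\int_{-b_L/\sqrt{a_L}}^{(\xi-b_L)/\sqrt{a_L}}e(-\tau^2/2)\,d\tau\bigr|$; the integral is $F\bigl((\xi-b_L)/\sqrt{a_L}\bigr)-F\bigl(-b_L/\sqrt{a_L}\bigr)$, hence bounded by $C\min\bigl(1,\sqrt{a_L}/|b_L|\bigr)$ when $b_L$ and $\xi-b_L$ have the same sign, and more generally one gets that the whole bracket is $O\bigl(\min(\sqrt{\xi}/\sqrt{a_L},\,\sqrt{a_L}/(\sqrt{\xi}\,|b_L|))+\ldots\bigr)$ — the point is that the worst case over the admissible $\xi$ is controlled by $\sqrt{a_L}\,|b_L|^{-1}$ and by $a_L^{-1/2}\cdot a_L=a_L^{1/2}$ when $|b_L|\lesssim\sqrt{a_L}$, so that in all cases $|S(N,a,b)|/\sqrt N\le C(\sqrt{|b_L|}+\sqrt[4]{a_L})^{-1}$; adding the error term $O(\sqrt{a_L/\xi})=O(1)$ (since $\xi\ge a_L$) is harmless. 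In the regime $\xi-b_L\ge 1/2$, estimate~\eqref{ineq:xi-ge-half} bounds $|S(N,a,b)|/\sqrt N$ by $C\xi^{-1/2}\bigl(|F(\infty)-F(-b_L/\sqrt{a_L})|+|F(\infty)-F((1-(\xi-b_L))/\sqrt{a_L})|\bigr)+O(\sqrt{a_L/\xi})$; both Fresnel tails are $O(1)$ and actually $O(\sqrt{a_L}/\mathrm{dist})$, and since here $\xi\ge 1/2$ we even get $|S(N,a,b)|/\sqrt N=O(1)$, which is certainly $\le C(\sqrt{|b_L|}+\sqrt[4]{a_L})^{-1}$ because the denominator is bounded. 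Taking the max over $N$ gives~\eqref{eq:est-s:up}.

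For the lower bound~\eqref{eq:est-s:down}, assume $\sqrt{|b_L|}+\sqrt[4]{a_L}\le c$ with $c$ small. The idea is to exhibit a single good value of $N$ (equivalently, by Lemma~\ref{le:xi-possible-values}, a single admissible value of $\xi$) making the main term in~\eqref{eq:xi-le-half} of size $\asymp(\sqrt{|b_L|}+\sqrt[4]{a_L})^{-1}$. Distinguish two cases. If $|b_L|\ge\sqrt{a_L}$ (so $|b_L|$ dominates $\sqrt[4]{a_L}$ up to constants and $\sqrt{|b_L|}+\sqrt[4]{a_L}\asymp\sqrt{|b_L|}$): choose $\xi$ to be the smallest multiple of $a_L$ with $\xi-b_L\ge 0$, so $0\le\xi-b_L<a_L$, whence by Proposition~\ref{pro:Fc-as}-type expansion the bracket in~\eqref{eq:xi-le-half} equals $\int_{-b_L/\sqrt{a_L}}^{(\xi-b_L)/\sqrt{a_L}}e(-\tau^2/2)\,d\tau+O(\sqrt{a_L})=-F(-b_L/\sqrt{a_L})+O(\sqrt{a_L}/\sqrt{a_L})$... one needs $|F(-b_L/\sqrt{a_L})|\asymp\sqrt{a_L}/|b_L|$ (valid since $b_L/\sqrt{a_L}$ is large), divided by $\sqrt{\xi}\asymp\sqrt{|b_L|}$, giving $\asymp\sqrt{a_L}/|b_L|^{3/2}$ — this is the delicate spot, because it is $\le(\sqrt{|b_L|})^{-1}$ only when $a_L\le|b_L|^2$, i.e. exactly the case at hand, and one must also check the leftover error $O(\sqrt{a_L/\xi})=O(\sqrt{a_L}/\sqrt{|b_L|})$ does not swamp it, which again holds precisely because $a_L\le|b_L|^2$. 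If instead $|b_L|\le\sqrt{a_L}$ (so $\sqrt{|b_L|}+\sqrt[4]{a_L}\asymp\sqrt[4]{a_L}$): choose $\xi$ to be the multiple of $a_L$ closest to, say, $\tfrac12$, so $\xi-b_L\ge 1/2-\sqrt{a_L}\ge 1/4$ for $c$ small, putting us in~\eqref{ineq:xi-ge-half}; here $-b_L/\sqrt{a_L}$ is bounded, so $|F(\infty)-F(-b_L/\sqrt{a_L})|\ge|F(\infty)-F(\text{bounded arg})|\ge c_0>0$ uniformly, while the other Fresnel tail and the error $O(\sqrt{a_L})$ are small, and $1/\sqrt{\xi}\asymp\sqrt 2$; one has to verify the two $O(1)$-sized Fresnel contributions cannot cancel, which follows from choosing $\xi$ with $\xi-b_L$ additionally near $1$ so the second tail $\int_{(1-(\xi-b_L))/\sqrt{a_L}}^\infty$ is $O(\sqrt{a_L})$ — so in fact the cleanest choice is $\xi$ the largest multiple of $a_L$ below $1$, making $1-(\xi-b_L)\asymp$ bounded away from $0$ after accounting for $b_L$, isolating the single non-small term $|F(\infty)-F(-b_L/\sqrt{a_L})|\asymp 1$. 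Then $M(L,a,b)\ge|S(N,a,b)|/\sqrt N\ge c_0/C$, and since $(\sqrt{|b_L|}+\sqrt[4]{a_L})^{-1}\ge c^{-1}$ is bounded below, absorbing constants gives~\eqref{eq:est-s:down}.

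The main obstacle is the lower bound in the transitional regime where $|b_L|$ and $\sqrt{a_L}$ are comparable: one must choose the test value of $\xi$ so that exactly one Fresnel contribution survives at the correct order $(\sqrt{|b_L|}+\sqrt[4]{a_L})^{-1}$ and is not cancelled by the other contribution or drowned by the $O(\sqrt{a_L/\xi})$ error — this forces a careful case split on the sign and size of $b_L$ and an exploitation of Lemma~\ref{le:xi-possible-values} to know precisely which $\xi$ are attainable. The upper bound is more routine: it is just the observation that $|F|$ is globally bounded and decays like $1/|t|$, combined with $\xi\ge a_L$ and, in the second regime, $\xi\ge 1/2$.
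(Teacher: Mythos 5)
Your upper bound is close in spirit to the paper's, but contains one false step: in the regime $\xi-b_L\ge 1/2$ you assert $\xi\ge 1/2$, which is wrong whenever $b_L<0$ (one only gets $\xi\ge 1/2+b_L$). The paper repairs this by splitting further: if $\xi\le 1/4$ then $\xi-b_L\ge 1/2$ forces $-b_L\ge 1/4$ and $1-(\xi-b_L)\ge 1/4$, so both Fresnel tails in~\eqref{ineq:xi-ge-half} are $O(\sqrt{a_L})$ and the $\xi^{-1/2}$ prefactor is harmless. This is easy to fix, but as written the argument has a hole.

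The lower bound, however, has a genuine and structural gap. In your case $|b_L|\ge\sqrt{a_L}$ you take $\xi$ with $\xi-b_L\in[0,a_L)$ and write the integral as $-F(-b_L/\sqrt{a_L})+O(1)$, concluding a main term of size $\sqrt{a_L}/|b_L|$, hence $\sqrt{a_L}/|b_L|^{3/2}$ after dividing by $\sqrt{\xi}$. This is wrong in two ways: (1) the integral is $F\bigl((\xi-b_L)/\sqrt{a_L}\bigr)-F\bigl(-b_L/\sqrt{a_L}\bigr)\approx F(0)-F(-t)$ with $t=b_L/\sqrt{a_L}\ge1$, which tends to $F(0)=\tfrac12e(-1/8)\ne0$ as $t\to\infty$, i.e.\ it is of order $1$, not of order $1/t$; you have dropped the $F(0)$ contribution entirely; (2) even granting your size estimate, $\sqrt{a_L}/|b_L|^{3/2}\le 1/\sqrt{|b_L|}$ in this regime, which is the wrong direction — you yourself note ``it is $\le(\sqrt{|b_L|})^{-1}$ only when $a_L\le|b_L|^2$'', but for a lower bound you need $\ge$, not $\le$. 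Moreover, your choice of $\xi$ does not distinguish the sign of $b_L$: if $b_L<0$, the condition $\xi-b_L\ge0$ is vacuous, so your rule gives $\xi=a_L$, and the integral $\int_{t}^{t+\sqrt{a_L}}e(-\tau^2/2)\,d\tau$ has modulus $\le\sqrt{a_L}$, which after division by $\sqrt{\xi}=\sqrt{a_L}$ gives at best $O(1)$ instead of the required $|b_L|^{-1/2}$. The paper handles $b_L>0$ and $b_L<0$ as separate cases and, crucially, picks $\xi_0$ of order $a_L/|b_L|$ (\emph{not} of order $|b_L|$) so that the integral is a short arc of unit length $\asymp s/(2t)$ near $\pm t$; the factor $g(t,s)=\sqrt{2/s}\,t\int_{\pm t}^{\pm t+s/(2t)}e(-\tau^2/2)\,d\tau$ is then shown to be bounded away from $0$ uniformly in $t\ge1$, $s\in[1/2,3/2]$, using that the Cornu spiral has no self-intersections together with the explicit limit $|g(t,s)|\to\pi^{-1}\sqrt{2/s}\,\sin(\pi s/2)$ obtained by integration by parts. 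That non-vanishing on a compact-in-the-limit parameter set is the engine of the lower bound, and nothing in your argument supplies it.

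Your treatment of the case $|b_L|\le\sqrt{a_L}$ has a similar problem. You propose to take $\xi$ near $1$ so that $\xi-b_L\ge1/2$ and use the exact formula~\eqref{eq:xi-ge-half}, claiming that the second tail $\int_{(1-(\xi-b_L))/\sqrt{a_L}}^\infty$ can be made $O(\sqrt{a_L})$. But with $\xi$ the largest multiple of $a_L$ below $1$ one has $1-(\xi-b_L)=(1-\xi)+b_L=O(\sqrt{a_L})$, so the lower limit $(1-(\xi-b_L))/\sqrt{a_L}$ stays bounded and that tail is $O(1)$, not $O(\sqrt{a_L})$. You are then left with two $O(1)$-sized Fresnel contributions multiplied by an oscillating phase, and you have no argument preventing them from cancelling. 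The paper avoids this entirely by choosing $\xi_0=\sqrt{a_L}$ (so $\xi-b_L\le2\sqrt{a_L}+a_L<1/2$), staying in~\eqref{ineq:xi-le-half} where the main term is a single integral, and invoking compactness plus $g\ne0$.

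In short: the upper bound needs a small fix; the lower bound as written does not prove what is claimed. The missing ideas are (i) choosing $\xi$ proportional to $a_L/|b_L|$ or $\sqrt{a_L}$ rather than to $|b_L|$ or $1$, so that exactly one short Fresnel arc carries the main term and the formula~\eqref{ineq:xi-le-half} applies, and (ii) the non-vanishing of the resulting normalized arc-length factor $g(t,s)$ via the no-self-intersection property of the Cornu spiral and the explicit $t\to\infty$ limit.
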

\begin{proof}[Proof of Proposition~\ref{pro:est-s}]
  {\it Preliminaries.} \ Below, we consider only $N$ satisfying
  $N^-(L)\leq N\leq N^+(L)$.  All the constants $C$ in the proof
  are independent of $L$, $N$, $a$ and $b$. \\
  The analysis is based on Corollary~\ref{cor:S-est}.  To
  obtain~(\ref{eq:est-s:up}) from Corollary~\ref{cor:S-est}, we
  systematically use three simple estimates
  \begin{gather}
    \label{est:1}
    \forall x,y\in\R,\quad
    \left| \int_{x}^{y}e(-\tau^2/2)\,d\tau\right|\leq C,\\
    \label{est:2}
    \forall x,y\in\R,\quad
    \left| \int_{x}^{y}e(-\tau^2/2)\,d\tau\right|\leq |x-y|,\\
    \label{est:3}
    \forall x>0,\quad \left| \int_{\pm\infty}^{\pm
        x}e(-\tau^2/2)\,d\tau\right|\leq \frac{C}{x},
  \end{gather}
  To simplify the notations, we write $\xi=\xi_L(N)$. Recall that
  $|b_L|\leq 1/2$ and $a_L\leq \xi\leq 1$.  Note that this implies
  that $-1/2\leq -b_L+\xi\leq 3/2$. First, we get upper bounds for
  $S(N,a,b)/\sqrt{N}$. Therefore, depending on the values of $\xi$ and
  $b_L$, we consider several cases.
  \begin{itemize}
  \item Let $-b_L+\xi\geq 1/2$. One has
    \begin{equation}
      \label{eq:case1}
      \left|\frac{S(N,a,b)}{\sqrt{N}}\right|\leq C.
    \end{equation}
    If $\xi\geq 1/4$, this estimate follows
    from~(\ref{ineq:xi-ge-half}) and~(\ref{est:1}). If $\xi\leq 1/4$
    then $-b_L\geq 1/4$ and $1-(\xi-b_L)\geq 1/4$. We estimate both
    integrals in~(\ref{ineq:xi-ge-half}) using~(\ref{est:3}) to obtain
    $ \left|\frac{S(N,a,b)}{\sqrt{N}}\right|\le
    C\sqrt{a_L/\xi}$. Then,~(\ref{eq:xi-bounds})
    yields~(\ref{eq:case1}).
  \item Let $-b_L+\xi\leq 1/2$. We now have to consider three
    sub-cases depending on the value of $b_L$. In all these cases, we
    base our analysis on~(\ref{ineq:xi-le-half}).
    By~(\ref{eq:xi-bounds}) the terms $1+O(a_L/\xi)$ and
    $O(\sqrt{a_l/\xi})$ in this formula are bounded by a constant, and
    we only have to estimate the term $T=\left|\frac{1}{\sqrt{\xi}}\,
      \int_{-\frac{b_L}{\sqrt{a_L}}}^{\frac{\xi-b_L}{\sqrt{a_L}}}
      e(-\tau^2/2)\, d\tau\right|$.
    \begin{itemize}
    \item When $-b_L\ge\sqrt{a_L}$, one has
      \begin{equation}
        \label{eq:case2a}
        T\leq \frac{C}{\sqrt{|b_L|}}. 
      \end{equation}
      If $\xi\leq a_L/|b_L|$, one estimate the integral
      using~(\ref{est:2}), otherwise one uses~(\ref{est:3}). In both
      cases, this yields~\eqref{eq:case2a}.
    \item When $ |b_L|\leq \sqrt{a_L}$, one has
      \begin{equation}
        \label{eq:case2b}
        T\leq \frac{C}{\sqrt[4]{a_L}}. 
      \end{equation}
      For $\xi\le \sqrt{a_L}$, one uses~(\ref{est:2}), otherwise one
      uses~(\ref{est:1}). This leads to~\eqref{eq:case2b}.
    \item When $ -b_L\leq -\sqrt{a_L}$, one has
      \begin{equation}
        \label{eq:case2c}
        T\leq \frac{C}{\sqrt{|b_L|}}. 
      \end{equation}
      If $\xi\ge b_L/2$, then~(\ref{est:1}) yields~(\ref{eq:case2c}).
      If $\xi\le a_L/b_L$, then we get~(\ref{eq:case2c})
      using~(\ref{est:2}). Now, assume that $\xi\le b_L/2$, and that
      $\xi\ge a_L/b_L$. The first inequality then implies that
      $-b_L+\xi\le-b_L/2$, and, by means of~(\ref{est:3}), we get
      $T\le C\,(\sqrt{a_l/\xi})/ b_L$. As $\xi\ge a_L/b_L$, this
      implies~(\ref{eq:case2c}).
    \end{itemize}
  \end{itemize}
  Estimates~(\ref{eq:case1})~---~(\ref{eq:case2c}) all imply the upper
  bound~(\ref{eq:est-s:up}).
  \vskip.2cm\par To prove the lower bound, we consider the leading
  term in the representations given in Corollary~\ref{cor:S-est} for
  well chosen values of $\xi$. We consider three cases depending on
  the value of $b_L$.
  \begin{itemize}
  \item When $-b_L\leq -\sqrt{a_L}$. Recall that the possible values
    $\xi$ are described in Lemma~\ref{le:xi-possible-values}. Let
    $\xi_0=\frac{a_L}{2b_L}$ and choose $N$ so that $|\xi-\xi_0|\le
    a_L$. Then, one has
    \begin{equation*}
      -b_L+\xi\leq
      -\sqrt{a_L}+\sqrt{a_L}/2+a_L<a_L/2<1/2    ,
    \end{equation*}
    and we can use~(\ref{ineq:xi-le-half}).\\
    Let $t=b_L/\sqrt{a_L}$ and $s=\xi/\xi_0\in[1-2b_L,1+2b_L]$.
    Assuming that $c$ in~(\ref{eq:est-s:down}) is smaller than $1/16$,
    we get $s\in[1/2,3/2]$. \\
    Represent the leading term in~(\ref{ineq:xi-le-half}) in the form
    $(b_L)^{-1/2}g(t,s)$ where
    \begin{equation*}
      g(t,s)=\sqrt{\frac2s}\,t\,\int_{-t}^{-t+\frac{s}{2t}} 
      e(-\tau^2/2)\,d\tau. 
    \end{equation*}
    Note that:
    \begin{enumerate}
    \item $g$ never vanishes as the Cornu spiral i.e. the graph of the
      Fresnel integral $x\to \int_{-\infty}^xe(-\tau^2/2)\,d\tau$,
      $x\in\R$, has no self-intersections,
    \item $|g(t,s)|\to \frac1\pi\sqrt{\frac2s}sin(\pi s/2)$ as
      $t\to\infty$ uniformly in $s$; one checks this by integration by
      parts.
    \end{enumerate}
    Hence, for any $s$, $\D\inf_{t\geq1}|g(t,s)|\geq C>0$. This
    implies that the leading term in~(\ref{ineq:xi-le-half}) is
    bounded away from $0$ by $C/\sqrt{b_L}$.  On the other hand, for
    $\xi=s\xi_0$, the error term in~(\ref{ineq:xi-le-half}) is bounded
    by $C\sqrt{b_L}$.  So, if $\sqrt{b_L}<c$, and $c$ is small enough,
    we see that the right hand side in ~(\ref{ineq:xi-le-half}) is
    bounded away from $0$ by $C/\sqrt{b_L}$.  This completes the proof
    of~(\ref{eq:est-s:down}) in the case where $-b_L\leq -\sqrt{a_L}$.
  \item When $-b_L\geq \sqrt{a_L}$. One proves the lower bound almost
    in the same way as in the previous case. Now, we define
    $\xi_0=\frac{a_L}{2|b_L|}$ and choose $N$ as before. We get
    $-b_L+\xi\le |b_L|+\sqrt{a_L}/2+a_L$, and the last expression is
    smaller than $1/2$ if $c$ in~(\ref{eq:est-s:down})
    is chosen small enough. \\
    We define $s$ as above and let $t=|b_L|/\sqrt{a_L}$. Hence,
    $s\in[1/2,3/2]$ and $t\geq 1$. Then, we write the leading term
    in~(\ref{ineq:xi-le-half}) as $|b_L|^{-1/2}g(t,s)$ where
    \begin{equation*}
      g(t,s)=\sqrt{\frac2s}\,t\,\int_{t}^{t+\frac{s}{2t}}
      e(-\tau^2/2)\,d\tau. 
    \end{equation*}
    The analysis is then analogous to the one done in the previous
    case; we omit further details.
  \item When $|b_L|\leq \sqrt{a_L}$. The plan of the proof remains the
    same as in the previous cases. Now, we define $\xi_0=\sqrt{a_L}$.
    The number $N$ is chosen as before.  We get $-b_L+\xi\le
    2\sqrt{a_L}+a_L$, and so this expression is
    smaller than $1/2$ if $c$ in~(\ref{eq:est-s:down}) is chosen small enough. \\
    We define $s$ as before, and we let $t=b_L/\sqrt{a_L}$. We get
    $|t|\le 1$ and $s\in[1/2,3/2]$ (if $c$ is chosen small enough).
    The leading term in~(\ref{ineq:xi-le-half}) is equal to
    $(a_L)^{-1/4}g(t,s)$, with
    \begin{equation*}
      g(t,s)=\sqrt{\frac1{s}}\,\int_{-t}^{-t+s}e(-\tau^2/2)\,d\tau.
    \end{equation*}
    Again $g\ne0$, and so, on the compact set $(t,s)\in
    [-1,1]\times [1/2,3/2]$, the factor $g$ is bounded away from $0$ by a
    constant $C$.  Now, representation~(\ref{ineq:xi-le-half}) implies
    that
    \begin{equation*}
      |S(N,a,b)|\geq C/\sqrt[4]{a_L}-
      C\sqrt[4]{a_L}
    \end{equation*}
    (if $c$ is chosen small enough), and we
    obtain~(\ref{eq:est-s:down}).
  \end{itemize}
  This completes the proof of the lower bound and, so, the proof of
  Proposition~\ref{pro:est-s}.
\end{proof}
\section{The proof of Theorem~\ref{thr:2}}
\label{sec:almost-sure-growth}
We now turn the proofs of Theorem~\ref{thr:2} and Theorem~\ref{thr:2a}
in the next section. Both will be deduced from
Proposition~\ref{pro:est-s} and the study of certain dynamical
systems.
\subsection{Reduction of the proof of Theorem~\ref{thr:2} to the
  analysis of a dynamical system}
\label{sec:reduct-anlys-dynam}
We first reduce the proof of Theorem~\ref{thr:2} to the proof of two
lemmas describing properties of the dynamical system defined on the
square $K:=[0,1)\times(-1/2,1/2]$ by the formulas~\eqref{eq:als}
and~\eqref{eq:bls}. The idea of such a reduction was inspired to us by
the proof of Theorem II, Chapter 7, from~\cite{MR0306130}.\\
Note that it suffices to prove Theorem~\ref{thr:2} in the case when
\begin{equation}
  \label{eq:g-add-cond}
  \forall l\in\N,\quad|g(l)|\leq 1/2,\quad\text{and}\quad
  \lim_{l\to\infty}g(l)=0
\end{equation}
which we assume from now on.\\
We begin by formulating the two lemmas referred to above.
\smallpagebreak Let $\varphi:\R_+\to \R_+$ be a non increasing
function.  Let $\gamma(a,b)$ be the trajectory of the dynamical system
defined by~\eqref{eq:als} and~\eqref{eq:bls} that begins at $(a,b)\in
K$. Let $\mathfrak N(L,a,b)$ be the number of the conditions
\begin{equation}
  \label{eq:trajectory-conditions}
  \text{`` \ } 
  \sqrt[4]{a_l}\le\varphi(l)\quad\text{and}\quad \sqrt{|b_l|}\leq \varphi(l)
  \text{ \ ''}
\end{equation}
with $0\leq l\leq L$ that are satisfied along $\gamma(a,b)$. Note that
\begin{equation}
  \label{eq:N-chi}
  \mathfrak N(L,a,b)=\mathfrak N(L,\varphi,a,b)=\sum_{l=0}^L \chi(
  \sqrt[4]{a_l}\le\varphi(l))\,\chi(\sqrt{|b_l|}\leq \varphi(l)),
\end{equation}
where $\chi(\text{``statement''})$ is equal to 0 if the ``statement''
is false and is equal to 1 otherwise.\\
Let $m$ be the measure on $K$ defined by the formula $m(D)=\frac1{\ln
  2}\,\int_D\frac{da\,db}{1+a}$ for $D\subset K$ measurable. Note that
$m$ is a probability measure. We denote by $\Vert\mathfrak
N(L,\cdot,\cdot)\Vert_1$ and $\Vert\mathfrak N(L,\cdot,\cdot)\Vert_2$,
respectively, the $L^1(K,\,m)$ and $L^2(K,\,m)$ norms of the function
$(a,b)\to \mathfrak N(L,a,b)$.
\begin{Rem}
  The measure $\frac1{\ln2}\frac{da}{1+a}$ is the invariant measure
  for the Gauss transformation $a\to \left\{\frac1a\right\}$ on
  $(0,1)$ (see~\cite{MR832433}).
\end{Rem}
\noindent In what follows, $C$ denotes various positive constants that
are independent of $L$, $a$ and $b$.\\
We prove
\begin{Le}
  \label{le:frakN:1} Let $\varphi:\R_+\to\R_+$ be a non increasing
  function such that, for all $l\in\N$, one has $\varphi(l)\le
  1/2$. Then,
  \begin{equation}
    \label{eq:Nfrak-g6}
    \Vert\mathfrak N(L,\varphi,\cdot,\cdot)\Vert_1\leq C\quad\forall
    L\in\N \quad\Longleftrightarrow\quad
    \sum_{N\geq 1}\varphi^6(N)<\infty.
  \end{equation} 
\end{Le}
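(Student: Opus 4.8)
The plan is to compute $\Vert\mathfrak N(L,\varphi,\cdot,\cdot)\Vert_1$ more or less explicitly by integrating the definition~\eqref{eq:N-chi} term by term, and to recognize the result as a tail sum comparable to $\sum_l\varphi^6(l)$. Writing
\[
  \Vert\mathfrak N(L,\varphi,\cdot,\cdot)\Vert_1=\sum_{l=0}^L\int_K
  \chi(\sqrt[4]{a_l}\le\varphi(l))\,\chi(\sqrt{|b_l|}\le\varphi(l))\,dm(a,b),
\]
the key point is that the $l$-th summand is the $m$-measure of the set where the trajectory $\gamma(a,b)$, at time $l$, lands in the small box $\{a_l\le\varphi^4(l),\ |b_l|\le\varphi^2(l)\}$. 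So the main work is to evaluate this measure. I would first deal with the $a$-variable: since the map $a\mapsto a_l$ is (up to the $b$-dependence, which enters $b_l$ but not $a_l$) the $l$-fold iterate of the Gauss map, and $m$ restricted to the $a$-variable pushes forward under the Gauss map, the $m$-measure of $\{a_l\le t\}$ is $\frac1{\ln2}\int_0^t\frac{da}{1+a}=\frac1{\ln 2}\ln(1+t)\asymp t$ for small $t$. Taking $t=\varphi^4(l)$ gives a factor $\asymp\varphi^4(l)$. Then I would integrate out $b$: for fixed $a$ (hence a fixed value of the linear-fractional map $b\mapsto b_l$), the condition $|b_l|\le\varphi^2(l)$ cuts out, in $b$, a union of intervals of total length $\asymp\varphi^2(l)$ (the map $b\mapsto b_l=\{-b_l/a_l+\cdots\}_0$ iterated is piecewise affine with controlled slopes; the slope of the $l$-th iterate is $1/(a_0a_1\cdots a_{l-1})$, but the number of branches is exactly the reciprocal of that, so the total preimage length of an interval of length $2\varphi^2(l)$ is again $\asymp\varphi^2(l)$, independently of $a$). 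Multiplying the two contributions and summing, $\Vert\mathfrak N(L,\varphi,\cdot,\cdot)\Vert_1\asymp\sum_{l=0}^L\varphi^4(l)\varphi^2(l)=\sum_{l=0}^L\varphi^6(l)$, which gives both implications of~\eqref{eq:Nfrak-g6} at once (uniform boundedness in $L$ of the partial sums of a nonnegative series is equivalent to convergence of the series), and since $\varphi$ is nonincreasing the change from $\sum_l$ to $\sum_N$ is harmless.

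More carefully, I would set up the measure computation using the skew-product structure of the dynamical system~\eqref{eq:als}--\eqref{eq:bls}: the $a$-dynamics is autonomous (the Gauss map $G$) and the $b$-dynamics is a cocycle over it, $b_{l+1}=\psi_{a_l}(b_l)$ with $\psi_\alpha(\beta)=\{-\beta/\alpha+\tfrac12[1/\alpha]\}_0$. Thus I would write, for fixed $a$, $b\mapsto b_l$ as a composition $\psi_{a_{l-1}}\circ\cdots\circ\psi_{a_0}$, each $\psi_{a_j}$ being the (reduction mod $1$, shifted to $(-1/2,1/2]$ of the) affine contraction/expansion $\beta\mapsto-\beta/a_j+\text{const}$. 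The composite is affine on each of its branches with derivative of modulus $1/(a_0\cdots a_{l-1})$; the number of branches meeting $(-1/2,1/2]$ is $\asymp 1/(a_0\cdots a_{l-1})$ (this is essentially Lemma~\ref{le:1}, which says $N^-(l)\asymp 1/(a_0\cdots a_{l-1})$, the number of $N$'s with $L(N)=l-1$). Hence for Lebesgue-a.e.\ fixed $a$,
\[
  \operatorname{mes}\{b\in(-1/2,1/2]:\ |b_l|\le\varphi^2(l)\}=2\varphi^2(l)\cdot(a_0\cdots a_{l-1})\cdot\frac{\#\{\text{branches}\}}{1}\asymp\varphi^2(l),
\]
uniformly in $a$. (When $a$ is rational the $b$-map is only defined up to the finitely many steps; this set of $a$ has measure zero and is irrelevant.) Integrating first in $b$ and then in $a$, using that $dm=\frac1{\ln2}\frac{da\,db}{1+a}$ with $1\le 1/(1+a)\le 1$ up to absolute constants, gives
\[
  \int_K\chi(a_l\le\varphi^4(l))\,\chi(|b_l|\le\varphi^2(l))\,dm
  \asymp\varphi^2(l)\cdot\operatorname{mes}\{a:\ a_l\le\varphi^4(l)\}
  \asymp\varphi^2(l)\cdot\varphi^4(l),
\]
where the last estimate uses that $\{a:a_l\le t\}=G^{-l}([0,t))$ has Lebesgue measure, hence $m$-measure, $\asymp t$ for $t$ small — because $m_a:=\frac1{\ln2}\frac{da}{1+a}$ is $G$-invariant and equivalent to Lebesgue, so $m_a(G^{-l}[0,t))=m_a([0,t))=\frac{\ln(1+t)}{\ln 2}\asymp t$ (and the Lebesgue measure of $G^{-l}[0,t)$ is comparable to $m_a$ of it with absolute constants). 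Summing over $0\le l\le L$ yields $\Vert\mathfrak N(L,\varphi,\cdot,\cdot)\Vert_1\asymp\sum_{l=0}^L\varphi^6(l)$, and~\eqref{eq:Nfrak-g6} follows.

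The main obstacle, I expect, is the uniform two-sided control of $\operatorname{mes}\{b:\ |b_l|\le\varphi^2(l)\}$ \emph{uniformly in $a$} and in $l$ — i.e.\ showing that the expansion factor $(a_0\cdots a_{l-1})^{-1}$ of the $b$-cocycle is exactly compensated by the number of branches, with constants not depending on $a$. For the upper bound this is soft (the composed map is Lipschitz-expanding with the stated derivative, so preimages of an interval of length $2\varphi^2(l)$ have total length at most $2\varphi^2(l)$ times the number of branches, which is $\le 1/(a_0\cdots a_{l-1})$ up to a constant by the argument in Lemma~\ref{le:1}). For the lower bound one must check that at least a definite fraction of the $\asymp 1/(a_0\cdots a_{l-1})$ branches actually produces a full preimage interval landing inside $(-1/2,1/2]$ and not truncated by the boundary; counting, as in Lemma~\ref{le:xi-possible-values}, the integers $N$ with $L(N)=l-1$ — equivalently, the lattice of branch points — takes care of this, losing only an absolute constant. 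Everything else (the $G$-invariance computation, the passage from $\sum_l$ to $\sum_N$ via monotonicity of $\varphi$, and the equivalence ``partial sums bounded $\iff$ series converges'') is routine.
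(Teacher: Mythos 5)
Your overall scheme is the same as the paper's: write $\Vert\mathfrak N(L,\cdot,\cdot)\Vert_1$ as $\sum_{l=0}^L m\{a_l\le\varphi^4(l),\ |b_l|\le\varphi^2(l)\}$, integrate out $b$ first for fixed $a$, then use Gauss-invariance of $\frac{da}{\ln2(1+a)}$ to get a factor $\ln(1+\varphi^4(l))\asymp\varphi^4(l)$, and conclude $\Vert\mathfrak N\Vert_1\asymp\sum_l\varphi^6(l)$. You also correctly identify the crux: one must show that, uniformly in $a$ and $l$, $\mathrm{mes}\{b:|b_l|\le\varphi^2(l)\}\asymp\varphi^2(l)$, both above and below, and you flag the lower bound as the obstacle.

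That lower bound is, however, exactly where your argument has a genuine gap, and the ``branch-counting compensates expansion'' heuristic does not close it. In transfer-operator language, what you need is that $(P_{a_{l-1}}\cdots P_{a_0}\mathbf 1)(b_l)$ stays bounded below by an absolute constant, uniformly in $l$, $a$ and $b_l$. This is equivalent to saying that every $b_l\in(-1/2,1/2]$ has $\gtrsim 1/(a_0\cdots a_{l-1})$ preimages under the composite $b$-map, and that is a statement about the composite, not a product of one-step bounds: each single operator satisfies $P_{a_j}\mathbf 1\in(1/2,2)$ pointwise, but iterating that naively gives the useless interval $((1/2)^l,2^l)$. The paper's way around this is Theorem~\ref{th:inv-fam-den} and Lemma~\ref{le:AlBl}: the two-parameter family of piecewise-constant densities $f(\cdot\,|\,a,A,B)$ is exactly invariant under $P_a$, and the explicit recursion~\eqref{eq:A1B1-AB} (equivalently~\eqref{eq:Bl-B}) together with $a_na_{n-1}<1/2$ shows $1/2<B_l<1$ and $A_l=B_l+a_{l-1}B_{l-1}\in(1/2,2)$ for all $l$, hence the pushforward density stays uniformly comparable to $1$. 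Nothing in your proposal supplies a substitute for this: the appeal to Lemma~\ref{le:1} controls $N^-(L)\asymp 1/(a_0\cdots a_{L-1})$, which is about the $N$-renormalization, not the number of complete branches of the $b$-cocycle (the two quantities share the product $a_0\cdots a_{l-1}$ but are not the same count, and in particular Lemma~\ref{le:1} says nothing about how many branches are truncated by the boundary of $(-1/2,1/2]$ at each step). Likewise Lemma~\ref{le:xi-possible-values} is about the values taken by $\xi_L(N)=a_LN_L$, not about preimages of $b_l$-intervals. So the $\Rightarrow$ direction of~\eqref{eq:Nfrak-g6} (which requires the lower bound $m\{a_l\le\varphi^4(l),|b_l|\le\varphi^2(l)\}\gtrsim\varphi^6(l)$) is not established by your argument; you would need to prove a statement equivalent to Theorem~\ref{th:inv-fam-den} and Lemma~\ref{le:AlBl}, and that is the non-routine content of the paper's proof, not a step you can wave off.

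Two smaller remarks: your upper-bound sentence contains a slip (``preimages have total length at most $2\varphi^2(l)$ times the number of branches'' should have an extra factor $a_0\cdots a_{l-1}$ from the slope), though the intended estimate is right; and note the paper's computation of $I(l)$ keeps careful track of the split of the density at $|b_l|=a_l/2$ into the values $A_l$ and $B_l$, which only yields the clean bound $\varphi^2(l)<I(l)<3\varphi^2(l)$ on the set $\{a_l\le\varphi^4(l)\}$ where the inner $A_l$-interval is swallowed by $\{|b_l|\le\varphi^2(l)\}$ — a detail your heuristic version glosses over.
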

\noindent and
\begin{Le} \label{le:frakN:2} Let $\varphi:\R_+\to\R_+$ be a non
  increasing function satisfying~(\ref{eq:g-add-cond}).  If
  $\sum_{N\geq 1}\varphi^6(N)$ diverges, then, for all $L\in\N$,
  \begin{equation*}
    \Vert\mathfrak N(L,\varphi,\cdot,\cdot)\Vert_2=
    (1+\delta(L))\,\Vert\mathfrak N(L,\cdot,\cdot)\Vert_1\,
  \end{equation*}
  with $\delta(L)\to 0$ as $L\to\infty$.
\end{Le}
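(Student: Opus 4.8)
The statement is a standard second-moment (variance) estimate for the counting function $\mathfrak N(L,\varphi,\cdot,\cdot)$, and the natural route is to show that
\[
\Vert\mathfrak N(L,\varphi,\cdot,\cdot)\Vert_2^2 = \Vert\mathfrak N(L,\varphi,\cdot,\cdot)\Vert_1^2\,(1+o(1))
\]
as $L\to\infty$, which is equivalent to the claim by taking square roots. Writing $\psi_l(a,b)=\chi(\sqrt[4]{a_l}\le\varphi(l))\,\chi(\sqrt{|b_l|}\le\varphi(l))$, one has
\[
\Vert\mathfrak N(L)\Vert_2^2=\sum_{0\le k,l\le L}\int_K \psi_k\psi_l\,dm,\qquad \Vert\mathfrak N(L)\Vert_1=\sum_{0\le l\le L}\int_K \psi_l\,dm .
\]
The plan is: (i) compute $\int_K\psi_l\,dm$ and show it is comparable to $\varphi^6(l)$ (more precisely $\int_K\psi_l\,dm \asymp \varphi^6(l)$), using the mixing properties of the Gauss-type map $T:(a,b)\mapsto(a_1,b_1)$; (ii) estimate the off-diagonal correlations $\int_K\psi_k\psi_l\,dm$ for $k<l$ by $\int_K\psi_k\,dm\cdot\int_K\psi_l\,dm\cdot(1+\text{error})$ with a summable error coming from the exponential decay of correlations; (iii) since $\sum_l\varphi^6(l)$ diverges, $\Vert\mathfrak N(L)\Vert_1\to\infty$, so the diagonal terms $\sum_l\int\psi_l^2\,dm=\sum_l\int\psi_l\,dm=\Vert\mathfrak N(L)\Vert_1$ are $o(\Vert\mathfrak N(L)\Vert_1^2)$, and likewise the accumulated correlation errors are $o(\Vert\mathfrak N(L)\Vert_1^2)$; (iv) conclude $\Vert\mathfrak N(L)\Vert_2^2=(1+o(1))\Vert\mathfrak N(L)\Vert_1^2$.

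The key analytic input is quantitative mixing for the skew-product $T$ on $K=[0,1)\times(-1/2,1/2]$ with respect to the measure $m$. The $a$-coordinate evolves by the Gauss map, whose transfer operator has a spectral gap on a suitable function space (bounded variation, or Lipschitz/Hölder), giving exponential decay of correlations $\bigl|\int f\cdot(g\circ T^n)\,dm-\int f\,dm\int g\,dm\bigr|\le C\rho^n\|f\|\,\|g\|$ with $\rho<1$. The $b$-fibre map $b\mapsto b_1$ is, for each fixed $a$, an affine map of the circle $\R/\Z$ (composition of $b\mapsto -b/a+\tfrac12[1/a]$ with reduction mod $1$), which is measure-preserving on the fibre, so the product structure is compatible with Lebesgue measure in $b$; the indicator $\chi(\sqrt{|b_l|}\le\varphi(l))$ is the indicator of an interval of length $2\varphi^2(l)$ around $0$, hence the fibrewise integral of $\psi_l$ contributes a factor $\asymp\varphi^2(l)$, while the $a$-part $\chi(\sqrt[4]{a_l}\le\varphi(l))=\chi(a_l\le\varphi^4(l))$ contributes, via the invariant density $\tfrac{1}{\ln 2}\tfrac{1}{1+a}$ pulled back under the Gauss map and bounded distortion, a factor $\asymp\varphi^4(l)$ — the two together giving $\varphi^6(l)$ and matching the exponent in Lemma~\ref{le:frakN:1}.

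The main obstacle is controlling the correlations $\int_K\psi_k\psi_l\,dm$ for $k<l$ when $l-k$ is small, and more seriously when the sets $\{a_k\le\varphi^4(k)\}$ shrink as $k$ grows: one must decouple the $a$-dynamics from the $b$-dynamics and apply decay of correlations with test functions (the indicators $\psi_k$, $\psi_l$) whose relevant norms blow up as the sets shrink. The standard fix is to approximate the sharp indicators by smooth (Lipschitz) bump functions at scale comparable to the set size, so that $\|\psi_l\|_{\mathrm{Lip}}\asymp \varphi(l)^{-c}$ for some fixed $c$, and then check that $\rho^{l-k}\|\psi_k\|_{\mathrm{Lip}}\|\psi_l\|_{\mathrm{Lip}}$ summed over $k<l$ is still $o(\Vert\mathfrak N(L)\Vert_1^2)$ — this uses that $\varphi$ decays at most polynomially fast in the regime where $\sum\varphi^6$ diverges, or, if $\varphi$ decays faster, a direct crude bound showing the contribution of those $l$ is negligible. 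A clean way to organize this is a Borel–Cantelli/Chebyshev-type argument: bound $\Vert\mathfrak N(L)\Vert_2^2-\Vert\mathfrak N(L)\Vert_1^2$ by $C\sum_{k\le l}\int\psi_k\,dm\,\rho^{l-k}(\ldots)+C\Vert\mathfrak N(L)\Vert_1$, absorb the geometric series, and divide by $\Vert\mathfrak N(L)\Vert_1^2\to\infty$. Once the correlation sum is shown to be $O(\Vert\mathfrak N(L)\Vert_1)$ or $o(\Vert\mathfrak N(L)\Vert_1^2)$, the conclusion $\delta(L)\to 0$ follows immediately.
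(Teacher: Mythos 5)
Your outline has the right shape---express $\Vert\mathfrak N(L)\Vert_2^2$ as diagonal terms plus off-diagonal correlations, show the off-diagonal terms are $\bigl(1+o(1)\bigr)\int\psi_k\,dm\int\psi_l\,dm$, and divide by $\Vert\mathfrak N(L)\Vert_1^2\to\infty$---and your identification of the one-term contribution $\asymp\varphi^6(l)$ is also right. But the mechanism you propose for step (ii) is not the one the paper uses, and as stated it has a genuine gap.

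First, a smaller inaccuracy: the fibre map $b\mapsto b_1=\{-b/a+\tfrac12[1/a]\}_0$ is an \emph{expanding} piecewise-affine map, and Lebesgue density $1$ is \emph{not} invariant under its Perron--Frobenius operator $P_a$ given in~\eqref{eq:perron-frobenius-fibre}. The paper devotes Section~\ref{sec:analys-dynam-syst} (Theorem~\ref{th:inv-fam-den} and Lemmas~\ref{le:AlBl},~\ref{le:second-family}) to constructing a two-parameter family of piecewise-constant densities $f(\cdot\,|\,a,A,B)$ closed under the cocycle precisely because the constant density is not exactly preserved; it is only shown a posteriori that along the relevant orbits $A_l,B_l=1+O(\varphi^2(l))$.

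Second, and more seriously, the plan of regularizing $\psi_l$ to Lipschitz bumps and invoking exponential decay of correlations for the skew product cannot produce the bound you need. That route yields an \emph{additive} error of order $\rho^{(m-l)/c}$ (after optimizing the bump scale against the Lipschitz norm), which is independent of $l$ and must be compared to the shrinking main term $\int\psi_l\,dm\int\psi_m\,dm\asymp\varphi^6(l)\varphi^6(m)$. Summing over $0\le l<m\le L$ the additive errors contribute $\asymp L$, whereas $\Vert\mathfrak N(L)\Vert_1^2$ can grow as slowly as $(\ln\ln L)^2$ when $\varphi^6$ is barely non-summable; even the sharper version of this argument (interpolating with the trivial bound $\min(p_l,p_m)$) leaves an error $\sum_l p_l\ln(1/p_l)$ that dominates $(\sum_l p_l)^2$ in the borderline regime. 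The paper sidesteps this with two observations that you would need and do not have: (a) in the $b$-direction \emph{no} mixing estimate is used at all---once $a_l\le\varphi^4(l)$, one step of the fibre dynamics already spreads a narrow interval onto a density $f(\cdot\,|\,a_{l+1},A,B)$ with $A,B=1+O(\varphi^2(l))$ (Lemma~\ref{le:second-family}), so the conditional correlation error $O(\varphi^2(l))$ in Lemma~\ref{le:Ilm} is uniform in $m-l$; and (b) in the $a$-direction the paper uses Gordin's \emph{$\psi$-mixing} estimate~\eqref{eq:gauss-mixing} for the Gauss map, which is multiplicative: the error is $\le A\,P(a_l\le\alpha)\,P(a_m\le\beta)\,e^{-\lambda(m-l)}$, i.e.\ a small \emph{fraction} of the main term. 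This multiplicative structure is what makes $R_L$ in Lemma~\ref{le:N2-gauss} into a sum of $\varphi^6(l)\varphi^6(m)\,O(\varphi^2(l)+\varphi^2(m)+e^{-(m-l)/C})$, which is indeed $o(\Vert\mathfrak N(L)\Vert_1^2)$. Without the $\psi$-mixing and without the one-step fibre equidistribution, your error estimate is too coarse to close the argument.
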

\noindent We prove these two lemmas in the
sections~\ref{sec:analys-dynam-syst},~\ref{sec:NF1}
and~\ref{sec:proof-lemma-refl-1}. We now use them derive
Theorem~\ref{thr:2}.
\subsubsection{The proof of the implication ``$ \Longleftarrow $''
  in~\eqref{eq:3}}
\label{sec:proof-implication-}
In this part of the proof, we choose $\varphi(l)=g(l)$.\\
Note that $\D\Vert\mathfrak N(L,\cdot,\cdot)\Vert_1=\sum_{l=0}^L
m(K_l)$ where
\begin{equation*}
  K_l=\{(a,b);\  \sqrt[4]{a_l}\le\varphi(l)\text{ and }\sqrt{|b_l|}\leq
  \varphi(l)\}.
\end{equation*}
Therefore, Lemma~\ref{le:frakN:1} implies that $\D\sum_{l=0}^\infty
m(K_l)<\infty$.  Therefore, by the Borel-Cantelli lemma, for almost
all $(a,b)\in K$, only a finite number of the
conditions~(\ref{eq:trajectory-conditions}) is satisfied along
$\gamma(a,b)$. Denote the set of such ``good'' $(a,b)$ by $G$. \\
Now, pick $(a,b)\in G$. Let $L_0\in\N$ be large enough so that either
$\sqrt[4]{a_l}\geq g(l)$ or $\sqrt{|b_l|}\geq g(l)$ for all $l\geq
L_0$.  Pick an $L\geq L_0$. Using Proposition~\ref{pro:est-s}, we get
\begin{equation*}
  \max_{N^-(L)\leq N\leq N^+(L)}g(\ln N)\,\frac{|S(N,a,b)|}{\sqrt{N}}\leq 
  C \frac{g(\ln N^-(L))}{g(L)}
\end{equation*}
as $g$ is a non increasing function. And now, as $g$ is a non
increasing function, the implication ``$\Longleftarrow$'' follows from
\begin{Le}
  \label{le:N-L}
  For almost all $0<a<1$, when $ L\to\infty$, one has
  \begin{equation*}
    \ln N^\pm(L)= L(A+o(1))
  \end{equation*}
  where
  \begin{equation}
    \label{eq:4}
    A=\frac1{\ln 2}\int_0^1\frac{\ln(1/a)\,da}{1+a}>1.
  \end{equation}
\end{Le}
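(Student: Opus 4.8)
The plan is to relate $\ln N^\pm(L)$ to the Birkhoff sums of the function $a\mapsto\ln(1/a)$ along the orbit of the Gauss map, and then to apply the pointwise ergodic theorem. First I would recall from Lemma~\ref{le:1} that
\begin{equation*}
  \frac1{a_0a_1\cdots a_{L-1}}<N^-(L)<\frac1{a_0a_1\cdots a_{L-1}}\,(1+4a_{L-1}),
\end{equation*}
and that by \eqref{Npm}, $N^+(L)=N^-(L+1)-1$, so the same two-sided bound with $L$ replaced by $L+1$ controls $N^+(L)$. Taking logarithms, the factor $(1+4a_{L-1})\in(1,5)$ contributes only an $O(1)$ error, uniformly in $L$, hence
\begin{equation*}
  \ln N^\pm(L)=\sum_{l=0}^{L-1}\ln\frac1{a_l}+O(1)
  =\sum_{l=0}^{L-1}\ln\frac1{a_l}+o(L),
\end{equation*}
and it suffices to show that $\frac1L\sum_{l=0}^{L-1}\ln(1/a_l)\to A$ for almost every $a$.

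The second step is ergodic. Write $T$ for the Gauss map $Ta=\{1/a\}$ on $(0,1)$, so that $a_l=T^l a$, and let $h(a)=\ln(1/a)$. The Gauss measure $d\mu=\frac1{\ln2}\frac{da}{1+a}$ is $T$-invariant (as noted in the Remark) and $T$ is ergodic with respect to it; moreover $h\in L^1((0,1),\mu)$ since $\int_0^1\ln(1/a)\,\frac{da}{1+a}$ is finite (the only singularity is the integrable one at $a=0$). Birkhoff's pointwise ergodic theorem then gives, for $\mu$-almost every $a$ and hence for Lebesgue-almost every $a$,
\begin{equation*}
  \frac1L\sum_{l=0}^{L-1} h(T^l a)\ \longrightarrow\ \int_0^1 h\,d\mu
  =\frac1{\ln2}\int_0^1\frac{\ln(1/a)\,da}{1+a}=A.
\end{equation*}
Combining with the first step yields $\ln N^\pm(L)=L(A+o(1))$ almost surely.

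Finally I would check the strict inequality $A>1$. The cleanest route is Jensen's inequality: since $-\ln$ is convex and $\mu$ is a probability measure,
\begin{equation*}
  A=\int_0^1 -\ln a\,d\mu(a) > -\ln\!\int_0^1 a\,d\mu(a)
  =-\ln\!\left(\frac1{\ln2}\int_0^1\frac{a\,da}{1+a}\right),
\end{equation*}
the inequality being strict because $a\mapsto a$ is not $\mu$-a.e.\ constant; and a direct computation of $\int_0^1\frac{a\,da}{1+a}=1-\ln2$ gives $-\ln\big((1-\ln2)/\ln2\big)>0$, in fact this lower bound is comfortably larger than $1$, so $A>1$. (Alternatively one may simply evaluate $A$ numerically: $A=\pi^2/(12\ln2)\approx1.186$.) The only genuine subtlety is the verification of ergodicity of the Gauss map, but this is classical and may be quoted from \cite{MR832433}; everything else is bookkeeping with the $O(1)$ error term from Lemma~\ref{le:1} and an application of Birkhoff's theorem.
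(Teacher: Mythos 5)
Your reduction of $\ln N^\pm(L)$ to the Birkhoff sum $\sum_{l=0}^{L-1}\ln(1/a_l)$ via Lemma~\ref{le:1} and~\eqref{Npm}, followed by the Birkhoff--Khinchin theorem for the Gauss map with its invariant measure $\frac{da}{\ln 2\,(1+a)}$, is exactly the paper's argument; that part is fine.

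The flaw is in your primary proof that $A>1$. Jensen gives $A>-\ln\bigl(\int_0^1 a\,d\mu\bigr)$ with $\int_0^1 a\,d\mu=\frac{1-\ln 2}{\ln 2}\approx 0.4427$, so the bound is $-\ln(0.4427)\approx 0.815$, which is \emph{not} larger than $1$; your claim that this is ``comfortably larger than $1$'' is a numerical error, and the Jensen route as stated does not close the argument. Your parenthetical fallback does work: $\int_0^1\frac{\ln(1/a)}{1+a}\,da=\sum_{n\ge 0}\frac{(-1)^n}{(n+1)^2}=\frac{\pi^2}{12}$, whence $A=\frac{\pi^2}{12\ln 2}\approx 1.187>1$, so the lemma survives, but you should promote that computation (or some other valid bound) to the main argument. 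For comparison, the paper integrates by parts to get $A=\frac1{\ln 2}\int_0^1\frac{\ln(1+a)}{a}\,da$ and then uses $\ln(1+a)\ge a-\frac{a^2}{2}$ to obtain $A\ge\frac{3}{4\ln 2}>1$, which avoids both the series evaluation and the (insufficient) convexity bound.
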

\begin{proof}[Proof of Lemma~\ref{le:N-L}]
  Let $a\not\in\Q$. Lemma~\ref{le:1} implies that
  \begin{equation*}
    % \label{eq:NpmL}
    \frac{\ln(N^-(L))}L=\frac1L\,\sum_{l=0}^{L-1}\ln (1/a_l)\,\,+O(1/L),\quad 
    L\to\infty.
  \end{equation*}
  Recall that the Gauss map $a\to\{1/a\}$ on $(0,1)$ is ergodic, and
  that its invariant measure is $\frac{da}{\ln 2\,(1+a)}$
  (see~\cite{MR832433}).  Therefore, by the Birkhoff-Khinchin Ergodic
  Theorem (\cite{MR832433}), for almost all $a\in(0,1)$, the limit $\D
  \lim_{L\to\infty}\frac1L\,\sum_{l=0}^{L-1}\ln (1/a_l)$ exists and is
  equal to $A$ defined in~\eqref{eq:4}. This completes
  the proof of the asymptotics of $\ln N^-$.\\
  Integrating by parts, we get
  \begin{equation*}
    A=\frac1{\ln 2}\int_0^1\frac{\ln(1+a)}a da\ge
    \frac1{\ln 2}\int_0^1 \left(1-\frac{a}2\right)da=\frac3{4\ln 2}>1.  
  \end{equation*}
  Finally, the asymptotics of $N^+$ follows from~(\ref{Npm}) and the
  asymptotics of $N^-$. This completes the proof of
  Lemma~\ref{le:N-L}.
\end{proof}
\noindent This completes the proof of the implication
``$\Longleftarrow$'' in~\eqref{eq:3}.
\subsubsection{The proof of the implication ``$ \Longrightarrow $''
  in~\eqref{eq:3}}
\label{sec:proof-implication--1}
It suffices to prove that, for almost all $(a,b)\in K$, one has
\begin{equation}
  \label{eq:divergence}
  \sum_{N\geq 1}g^6(N)=+\infty\Longrightarrow
  \limsup_{N\to+\infty}\left(g(\ln N)\,\frac{|S(N,a,b)\,|}
    {\sqrt{N}}\right)=+\infty.
\end{equation}
Let $A$ be the constant defined in~(\ref{eq:4}).  We choose
$\varphi:\R_+\to\R_+$ so that
\begin{itemize}
\item $\D\sum_{l=1}^{\infty}\varphi^6(l)=+\infty$;
\item $r(x):=\varphi(x)/g(2Ax)$ be a monotonously decreasing function;
\item $\D\lim_{x\to\infty}r(x)=0$;
\item $\varphi(x)\leq 1/2$.
\end{itemize}
\begin{Rem} The third and the forth conditions guarantee that
  $\varphi$ satisfies the conditions~(\ref{eq:g-add-cond}).
\end{Rem}
The existence of such a function $\varphi$ follows from
\begin{Le}
  \label{le:varphi-g}
  Let $f:[0,+\infty)\to \R_+$ be a non increasing function such that
  $\D\sum_{l=1}^\infty f(l)=+\infty$. Then,
  \begin{itemize}
  \item for any $C>0$, one has $\D\sum_{l=1}^\infty f(Cl)=+\infty$.
  \item there exists $u:\,[1,+\infty)\to [0,1]$, a monotonously
    decreasing function, such that $\D\lim_{l\to\infty}u(l)=0$ and the
    series $\D\sum_{l=1}^\infty u(l)f(l)$ diverges.
  \end{itemize}
\end{Le}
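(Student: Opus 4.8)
The plan is to prove the two assertions separately, with the second being the substantive one. The first assertion is elementary: since $f$ is non increasing, for $C\ge 1$ one compares $\sum_{l\ge1}f(Cl)$ with $\frac1C\int_C^\infty f(x)\,dx$ from below (grouping the terms $f(Ck),\dots,f(Ck+C-1)$, each of which is $\ge f(C(k+1))$), and divergence of $\sum_l f(l)$ is equivalent to divergence of $\int^\infty f$; for $0<C<1$ one notes $f(Cl)\ge f(l)$ directly. So in all cases $\sum_l f(Cl)=+\infty$.

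For the second assertion, the idea is to slice the divergent series into consecutive blocks on each of which the partial sum of $f$ is at least $1$, and then damp each block by a constant that tends to $0$ slowly enough that the damped series still diverges. Concretely, set $n_0=1$ and, inductively, let $n_{k+1}$ be the smallest integer $>n_k$ with $\sum_{l=n_k}^{n_{k+1}-1}f(l)\ge 1$; such $n_{k+1}$ exists for every $k$ because $\sum_l f(l)=+\infty$. Define $u$ to be the step function equal to $1/(k+1)$ on the block $n_k\le l<n_{k+1}$ (and, say, $u\equiv 1$ on $[1,n_0]$ if needed to start at $1$). Then $u:[1,\infty)\to[0,1]$ is non increasing since $1/(k+1)$ decreases with $k$, and $u(l)\to0$ because the blocks, being nonempty, are infinitely many so $k\to\infty$ as $l\to\infty$. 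Finally
\begin{equation*}
  \sum_{l=1}^\infty u(l)f(l)\ \ge\ \sum_{k=0}^\infty
  \frac1{k+1}\sum_{l=n_k}^{n_{k+1}-1}f(l)\ \ge\ \sum_{k=0}^\infty\frac1{k+1}\ =\ +\infty,
\end{equation*}
which is the desired divergence.

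The only point that needs a small amount of care is the monotonicity of $u$: the construction above already gives a non increasing step function, so there is no obstacle, but one should check that $u$ can be taken with values in $[0,1]$ — this holds since $1/(k+1)\le1$ for all $k\ge0$. If one prefers a function defined on all of $[1,+\infty)$ rather than just at integers, one simply extends the step function to be constant on each interval $[n_k,n_{k+1})$; it remains non increasing and tends to $0$. There is no genuine difficulty here; the block-decomposition trick (reminiscent of the classical fact that one can always slow down a divergent series' divergence without killing it) does all the work, and applying the first assertion with $C=2A$ to $f=g^6$ then yields a function $\varphi$ with all four required properties by setting $\varphi(x)=g(2Ax)\,u(x)^{1/6}$ (up to the harmless truncation $\varphi\le1/2$).
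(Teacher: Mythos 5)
Your proof is correct, and for the substantive (second) bullet it takes a genuinely different route from the paper. The first bullet is handled essentially as in the paper, via the equivalence, for positive non increasing $f$, between divergence of $\sum_l f(l)$ and of $\int^\infty f(x)\,dx$. For the second bullet, the paper's construction is continuous and explicit: it sets $u(x)=\alpha\bigl(\int_0^x f(t)\,dt\bigr)^{\alpha-1}$ for a small $\alpha\in(0,1)$, so that $u(x)f(x)$ is exactly the derivative of $\bigl(\int_0^x f\bigr)^{\alpha}$ and the divergence of $\int u f$ is immediate (an Abel--Dini type argument); smallness of $\alpha$ enforces $u\le1$. You instead cut the series into consecutive finite blocks each carrying mass at least $1$ and damp the $k$-th block by $1/(k+1)$, so the damped series dominates the harmonic series. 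Both are standard ways of slowing down a divergent series without killing it; the paper's formula yields a strictly decreasing $u$ in one line, while your block decomposition is more elementary (no integral, no exponent to tune) and produces only a non increasing step function --- which suffices, since the later conditions on $r(x)=\varphi(x)/g(2Ax)$ use only weak monotonicity and the limit $r(x)\to0$, and one could in any case perturb $u$ to be strictly decreasing without affecting divergence. Your closing remark on how the lemma is then used (apply it to $f=g^6(2A\cdot)$ and set $\varphi(x)=g(2Ax)\,u(x)^{1/6}$, truncated at $1/2$) correctly reflects the intended application.
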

\begin{proof}[Proof of Lemma~\ref{le:varphi-g}]
  The first statement follows from the fact that for any positive
  valued monotonously non increasing function the series
  $\D\sum_{l=1}^\infty f(l)$ and the integral $\D\int_1^\infty f(x)dx$
  diverge  simultaneously.\\
  To prove the second statement, we pick $0<\alpha<1$ and define
  \begin{equation*}
    u(x)=\alpha\,\left(\int_{0}^x f(x)\,dx\right)^{\alpha-1}.  
  \end{equation*}
  Clearly, $u:\ [1,+\infty)\to \R_+$ is monotonously decreasing, and
  $u(x)$ tends to zero as $x$ tends to infinity. Furthermore, one has
  $\D\int_1^{+\infty} u(x) f(x) dx=+\infty$.  Finally, to satisfy the
  condition $u(x)\leq 1$, it suffices to choose the constant $\alpha$
  small enough. This completes the proof of the second statement.\\
  The proof of Lemma~\ref{le:varphi-g} is complete.
\end{proof}
\noindent Using Lemmas~\ref{le:frakN:1} and~\ref{le:frakN:2}, we now
prove
\begin{Le}
  \label{le: infty-of-conditions}
  There exists a set $B\subset K$ such that $m(B)=1$, and that, for
  all $(a,b)\in B$, there is an infinite sub-sequence of
  conditions~(\ref{eq:trajectory-conditions}) that are satisfied along
  $\gamma(a,b)$.
\end{Le}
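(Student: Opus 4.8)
The plan is to run a second--moment argument on the counting function $\mathfrak N(L,\varphi,\cdot,\cdot)$, using Lemmas~\ref{le:frakN:1} and~\ref{le:frakN:2} as the two inputs. Since, by~\eqref{eq:N-chi}, $L\mapsto\mathfrak N(L,\varphi,a,b)$ is non decreasing, the limit $\mathfrak N(\infty,\varphi,a,b):=\lim_{L\to\infty}\mathfrak N(L,\varphi,a,b)$ exists in $[0,+\infty]$, and it suffices to prove that the set $B=\{(a,b)\in K;\ \mathfrak N(\infty,\varphi,a,b)=+\infty\}$ has $m$-measure $1$: indeed, $\mathfrak N(\infty,\varphi,a,b)=+\infty$ says exactly that infinitely many of the conditions~\eqref{eq:trajectory-conditions} are satisfied along $\gamma(a,b)$.

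First I would record that, since $\sum_{l\ge1}\varphi^6(l)$ diverges by the choice of $\varphi$, the contrapositive of Lemma~\ref{le:frakN:1} shows that $\Vert\mathfrak N(L,\varphi,\cdot,\cdot)\Vert_1$ is unbounded in $L$, hence (being non decreasing in $L$) tends to $+\infty$. Next, as $\varphi$ satisfies~\eqref{eq:g-add-cond} and $\sum\varphi^6$ diverges, Lemma~\ref{le:frakN:2} applies and gives $\Vert\mathfrak N(L,\varphi,\cdot,\cdot)\Vert_2=(1+\delta(L))\,\Vert\mathfrak N(L,\varphi,\cdot,\cdot)\Vert_1$ with $\delta(L)\to0$. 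Since $\mathfrak N\ge0$, the $L^1(K,m)$ norm is the $m$-mean, so the $m$-variance of $\mathfrak N(L,\varphi,\cdot,\cdot)$ equals $\big((1+\delta(L))^2-1\big)\Vert\mathfrak N(L,\varphi,\cdot,\cdot)\Vert_1^2$, which is $o$ of the mean squared. Chebyshev's inequality then bounds the $m$-measure of the set where $\mathfrak N(L,\varphi,a,b)\le\tfrac12\Vert\mathfrak N(L,\varphi,\cdot,\cdot)\Vert_1$ by $4\big((1+\delta(L))^2-1\big)$, which tends to $0$.

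Finally I would upgrade this to an almost-sure statement using monotonicity in $L$. Fix $M\in\N$; for $L$ large enough one has $\tfrac12\Vert\mathfrak N(L,\varphi,\cdot,\cdot)\Vert_1>M$, so the set where $\mathfrak N(L,\varphi,a,b)>M$ has $m$-measure at least $1-4\big((1+\delta(L))^2-1\big)$. By monotonicity in $L$ this set is contained in $\{(a,b);\ \mathfrak N(\infty,\varphi,a,b)>M\}$, so letting $L\to\infty$ gives $m(\{\mathfrak N(\infty,\varphi,\cdot,\cdot)>M\})=1$. Intersecting over $M\in\N$ yields $m(B)=1$, which is the assertion.

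The substance of the argument sits in Lemmas~\ref{le:frakN:1} and~\ref{le:frakN:2}, proved elsewhere; within this deduction the only point I would treat with care is the passage from ``for each $L$, a set of large $m$-measure on which $\mathfrak N(L,\varphi,\cdot,\cdot)$ is large'' to ``a single set of full $m$-measure'', which is precisely where the monotonicity of $L\mapsto\mathfrak N(L,\varphi,a,b)$ enters. Everything else---the variance computation and the Chebyshev bound---is routine.
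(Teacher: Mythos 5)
Your proof is correct and follows essentially the same second-moment route as the paper: both combine Lemmas~\ref{le:frakN:1} and~\ref{le:frakN:2} with a concentration estimate and then use the monotonicity of $L\mapsto\mathfrak N(L,\varphi,a,b)$ to pass to a single full-measure set. The only difference is that you apply Chebyshev's inequality to the variance, whereas the paper invokes a Paley--Zygmund type bound (its Lemma~\ref{le:Zy-Po}) and then lets its parameter $\varepsilon$ tend to $0$; the two devices are interchangeable here.
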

\begin{proof}
  We shall use the
  \begin{Le}
    \label{le:Zy-Po}
    Let $K$ be as defined above. Let $\mu$ be a probability measure on
    $K$ and pick $f:K\to\R_+$. Assume that, for some positive constant
    $c$, one has
    \begin{equation*}
      c \Vert f\Vert_{L^2(K,\mu)}\leq \Vert f\Vert_{L^1(K,\mu)}.  
    \end{equation*}
    Then, for any $0<d<c$, one has
    \begin{equation*}
      \mu\left(\, (x,y)\in K\,:\,\, f(x,y)>d\,\Vert
        f\Vert_2\right)\geq (c-d)^2.    
    \end{equation*}
  \end{Le}
  \noindent This actually is a version of the Zygmund-Polya
  Lemma. When $\mu$ is the Lebesgue measure, its proof can be found
  for example in~\cite{MR0306130} (Lemma 2, chapter 7). The same proof
  works in our case.\\
  Pick $\varepsilon\in(0,1/2)$. By Lemma~\ref{le:frakN:2}, for
  sufficiently large $L$, we get
  \begin{equation*}
    (1-\varepsilon)\Vert\mathfrak
    N(L,\cdot,\cdot)\Vert_2\leq \Vert\mathfrak N(L,\cdot,\cdot
    )\Vert_1.
  \end{equation*}
  For such $L$, by Lemma~\ref{le:Zy-Po}, one has
  \begin{equation*}
    m\left(\{ (a,b)\in K\,:\,\, \mathfrak
      N(L,a,b)> \varepsilon\,\Vert\mathfrak
      N(L,\cdot,\cdot)\Vert_1\}\right)\ge (1-2\varepsilon)^2.  
  \end{equation*}
  In view of Lemma~\ref{le:frakN:1}, this implies that the measure of
  the set of $(a,b)$ for which $\mathfrak N(L,a,b)\to+\infty$ as
  $L\to\infty$ is bounded from below by $1-2\varepsilon$. As
  $\varepsilon>0$ can be taken arbitrarily small, this proves
  Lemma~\ref{le: infty-of-conditions}.
\end{proof}
\noindent Now, pick $(a,b)\in B$. There are infinitely many $l$ for
which condition~(\ref{eq:trajectory-conditions}) is satisfied along
$\gamma(a,b)$.  Assume that $L$ is one of them. Using
Proposition~\ref{pro:est-s}, as $g$ is non increasing, we get
\begin{equation*}
  \max_{N^-(L)\leq N\leq N^+(L)}g(\ln N)\,\frac{|S(N,a,b)|}{\sqrt{N}}\geq 
  C \frac{g(\ln N^+(L))}{\varphi(L)}.
\end{equation*}
Combined with Lemma~\ref{le:N-L}, this implies that, for $L$
sufficiently large,
\begin{equation*}
  \max_{N^-(L)\leq N\leq N^+(L)}g(\ln
  N)\,\frac{|S(N,a,b)|}{\sqrt{N}}\geq C\frac{g(2AL)}{\varphi(L)}.
\end{equation*}
For our choice of $\varphi$, the right hand side is equal to $1/r(L)$,
and so, tends to $+\infty$ as $L\to \infty$. This
yields~(\ref{eq:divergence}) and completes the proof of
Theorem~\ref{thr:2}.\qed
\subsection{Analysis of the dynamical system: an invariant family of
  densities}
\label{sec:analys-dynam-syst}
Let $(a_L,b_L)$ be related to $(a,b)$ by~(\ref{eq:als})
and~(\ref{eq:bls}).  In the next subsections, for a fixed $a$, we
study integrals of the form $\int_{-1/2}^{1/2} g(b_L(a,b))\,f(b)\,db$,
where $f(\cdot)$ is considered as a density of a measure. We change
the variable $b$ to $b_L$ to get
\begin{equation*}
  \int_{-1/2}^{1/2} g(b_L)\,f(b)\,db=
  \int_{-1/2}^{1/2} g(b_L) (P_{a_{L-1}}\dots
  P_{a_1}P_af)(b_L) db_L,
\end{equation*}
where
\begin{equation}
  \label{eq:perron-frobenius-fibre}
  \begin{split}
    (P_{a_l}f)(b)=a_l\sum_{m\in\Z\,:\, -1/2<b(m)\leq 1/2}f(b(m))\\
    \text{and  }\quad b(m):=a_l\,(-b+[1/a_l]/2+m).
  \end{split}
\end{equation}
The operator $P_{a_l}$ is the Perron-Frobenius operator of the map
acting on $(-1/2,1/2]$ defined in~(\ref{eq:bls}). In the present
section, we describe a family of densities $f(\cdot)$ invariant under
the cocycle $(a,f(\cdot))\mapsto (\{1/a\},\,P_af(x,\cdot))$ and study
properties of this family.\\
Fix $0<a<1$ and pick $A\geq0$, $B\geq 0$ such that
\begin{equation}
  \label{eq:AB-measure}
  a A+(1-a) B=1.
\end{equation}
The function
\begin{equation}
  \label{eq:density}
  f(b\,|\,a,A,B)=\left\{ \begin{array}{ll}
      A, &  \text{ if } |b|< a/2\\
      B, &  \text{ if } |b|> a/2
    \end{array}\right.
\end{equation}
is the density of a probability measure on  $(-1/2,1/2]$.\\
Our central observation is
\begin{Th}
  \label{th:inv-fam-den}
  Fix $a\in(0,1)$ and choose $A$ and $B$ as above. Then
  \begin{equation}
    \label{eq:Paf}
    P_a f(\cdot\,|\,a,A,B)=f(\cdot\,|\,a_1,A_1,B_1),
  \end{equation}
  where $a_1$ is related to $a$ by~(\ref{eq:ab}), and
  \begin{equation}
    \label{eq:A1B1-AB}
    \begin{pmatrix} A_1\\ B_1\end{pmatrix}=S(a)\,
    \begin{pmatrix} A\\ B\end{pmatrix},\quad
    S(a)=\begin{pmatrix} a & 1-aa_1\\
      a & 1-a-aa_1 \end{pmatrix}.
  \end{equation}
  In addition, one has
  \begin{equation}
    \label{eq:FP:mesure-preserving}
    a_1A_1+(1-a_1)B_1=aA+(1-a)B=1.
  \end{equation}
\end{Th}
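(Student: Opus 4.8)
The plan is to prove Theorem~\ref{th:inv-fam-den} by a direct computation with the explicit Perron--Frobenius operator~\eqref{eq:perron-frobenius-fibre}, exploiting the fact that the pre-images of a point $b\in(-1/2,1/2]$ under the fibre map fall into exactly two groups according to whether they land in the interval $|b(m)|<a/2$ or outside it. First I would fix $b\in(-1/2,1/2]$ and analyze the set of $m\in\Z$ for which $b(m)=a(-b+[1/a]/2+m)\in(-1/2,1/2]$; since consecutive values $b(m)$ differ by $a$, this set has cardinality either $[1/a]$ or $[1/a]+1$, and the ``extra'' pre-image (when it exists) is precisely the one governed by the fractional part $a_1=\{1/a\}$. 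The key bookkeeping step is to count, among these pre-images, how many satisfy $|b(m)|<a/2$ (contributing a factor $A$ to the sum) versus $|b(m)|>a/2$ (contributing $B$): exactly one pre-image lies in $(-a/2,a/2)$ when $b$ itself lies in a certain subinterval of length comparable to $a_1$, and none otherwise — and this dichotomy is exactly the dichotomy $|b|<a_1/2$ versus $|b|>a_1/2$ that defines the new density $f(\cdot\,|\,a_1,A_1,B_1)$.

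Carrying this out, for $|b|<a_1/2$ one finds that $(P_af)(b)$ equals $a$ times a sum of the form $A + (\text{number of outer pre-images})\cdot B$, and for $|b|>a_1/2$ it equals $a$ times $(\text{number of outer pre-images})\cdot B$ with one fewer term; matching the combinatorics with $1/a = [1/a]+a_1$ yields precisely $A_1 = aA + (1-aa_1)B$ and $B_1 = aA + (1-a-aa_1)B$, which is~\eqref{eq:A1B1-AB}. The piecewise-constant structure of the output is automatic from this case analysis, so~\eqref{eq:Paf} follows. I would double-check the edge cases — the endpoints $b=\pm a_1/2$ and the half-open convention $-1/2<\{x\}_0\le 1/2$ — since the placement of a single pre-image at a boundary is where sign/rounding errors creep in, but these form a measure-zero set and do not affect the density identity.

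For the final assertion~\eqref{eq:FP:mesure-preserving}, the cleanest route is to observe that $P_a$ is a Perron--Frobenius operator and hence preserves the total integral: $\int_{-1/2}^{1/2}(P_af)(b)\,db = \int_{-1/2}^{1/2}f(b)\,db$. The left side is $a_1A_1+(1-a_1)B_1$ by~\eqref{eq:density} applied to the parameters $(a_1,A_1,B_1)$, and the right side is $aA+(1-a)B=1$ by hypothesis~\eqref{eq:AB-measure}. Alternatively, one verifies directly from~\eqref{eq:A1B1-AB} that $a_1(aA+(1-aa_1)B)+(1-a_1)(aA+(1-a-aa_1)B)$ collapses to $aA+(1-a)B$ after using $aa_1 = a[1/a]a_1\cdot(\text{nothing})$ — more simply, the coefficient of $A$ is $a_1a+(1-a_1)a = a$... this needs the identity relating $a$, $a_1$, and $[1/a]$, namely $1 = a([1/a]+a_1)$; substituting makes the $B$-coefficient equal $1-a$ and we are done.

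I expect the main obstacle to be the pre-image counting: making airtight the claim that exactly one pre-image lies in $(-a/2,a/2)$ precisely when $|b|<a_1/2$. This requires writing $b(m)\in(-a/2,a/2)$ as a congruence condition $-b+[1/a]/2+m \in (-1/2,1/2)$, i.e. $m$ near $b-[1/a]/2$, and then checking that the unique candidate integer $m$ also satisfies the global constraint $b(m)\in(-1/2,1/2]$ — which it does iff $|b|<a_1/2$, because $a\cdot 1 = a[1/a]+aa_1$ and the ``room'' left over after tiling $(-1/2,1/2]$ by $[1/a]$ steps of size $a$ is exactly $aa_1$, whose rescaled half-width is $a_1/2$. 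Once this geometric picture is pinned down the rest is arithmetic, so I would state it as a short lemma or inline claim and then let~\eqref{eq:A1B1-AB} and~\eqref{eq:FP:mesure-preserving} drop out.
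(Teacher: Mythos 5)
Your overall plan — compute $(P_af)(b)$ directly by classifying the pre-images $b(m)$ according to whether $|b(m)|<a/2$ or $|b(m)|>a/2$ — is the same as the paper's, but your key bookkeeping claim is incorrect, and it leads to a wrong answer.

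You assert that exactly one pre-image lies in $(-a/2,a/2)$ when $|b|<a_1/2$ and none when $|b|>a_1/2$. In fact there is \emph{always} exactly one pre-image in $(-a/2,a/2)$, regardless of where $b$ sits. To see this (using the paper's relabeled index, with $a=1/(2n+a_1)$ and the pre-images written as $b(m)=(m-b)a$): the condition $|(m-b)a|<a/2$ is $|m-b|<1/2$, and since $b\in(-1/2,1/2]$, the unique integer solution is $m=0$ (up to the measure-zero boundary $b=1/2$); moreover $m=0$ always lies inside the global admissible range of indices. So the $m=0$ pre-image always contributes an $A$, for all $b$. What the dichotomy $|b|\lessgtr a_1/2$ actually controls is the \emph{total number} of pre-images: it is $[1/a]+1$ for $|b|<a_1/2$ and $[1/a]$ for $|b|>a_1/2$; the extra pre-image appears among the outer ones and contributes a $B$, not an $A$. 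This is exactly the paper's count for even $[1/a]=2n$: for $a_1/2<b<1/2$ one gets $nB+A+(n-1)B$, i.e.\ one $A$ and $2n-1$ $B$'s, whereas for $|b|<a_1/2$ one gets one $A$ and $2n$ $B$'s. Your picture, with the $A$-contribution vanishing for $|b|>a_1/2$, would give $A_1-B_1=aA$ instead of the correct $A_1-B_1=aB$, and one can check that $a_1A_1+(1-a_1)B_1=aA+(1-a)B$ then fails unless $A=B$ — so the normalization identity you rely on in the last paragraph would itself be violated. The intuition you offer at the end about the ``room'' $aa_1$ left after tiling $(-1/2,1/2]$ by steps of size $a$ is in fact the right intuition, but it governs the total number of pre-images (hence the number of $B$-contributions), not whether the inner $A$-contribution occurs. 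Correcting the count to ``one $A$ always, and either $[1/a]-1$ or $[1/a]$ $B$'s'' recovers the paper's matrix $S(a)$.

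Two smaller remarks. First, you should also explicitly treat the odd case $[1/a]=2n+1$; the paper says it is analogous, and it is, but the half-integer shift in the formula $\{-b/a+[1/a]/2\}_0$ changes which integer plays the role of $m=0$. Second, your observation that~\eqref{eq:FP:mesure-preserving} can be obtained for free from the fact that a Perron--Frobenius operator preserves total mass is a nice and arguably cleaner alternative to the paper's direct algebraic verification, and the direct computation you sketch (the $A$-coefficient being $a_1a+(1-a_1)a=a$ and the $B$-coefficient collapsing to $1-a$ via $a([1/a]+a_1)=1$) is also correct — but both of course only become proofs once~\eqref{eq:A1B1-AB} is established correctly.
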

\begin{proof}
  Represent $a$ in the form $a=\frac1{N+a_1}$ where $N=[1/a]$ and
  $a_1=\{1/a\}$.  Assume that $N$ is even, i.e.,
  \begin{equation}
    \label{eq:n-even}
    a=\frac1{2n+a_1},\quad\quad  n\in\N,\quad 0\leq a_1<1.
  \end{equation}
  Then, the general formula~(\ref{eq:perron-frobenius-fibre}) can be
  rewritten in the form
  \begin{equation}
    \label{eq:perron-frobenius-fibre-even}
    (P_af)(b)=a\,\cdot\,\left\{\begin{array}{ll}
        \D\sum_{m=-n+1}^{n} f((m-b)a),& \text{ if } b>a_1/2,\\ 
        \D\sum_{m=-n}^{n} f((m-b)a),& \text{ if } |b|\leq a_1/2,\\ 
        \D\sum_{m=-n}^{n-1} f((m-b)a),& \text{ if } b<-a_1/2.
      \end{array}\right.
  \end{equation}
  So, applying $P_a$ to $f(\cdot\,|\,a,A,B)$, and assuming that
  $a_1/2<b_1<1/2$, we get
  \begin{align*}
    (P_af(\cdot\,|\,a,A,B))(b_1)&=a\,\left(
      \D\sum_{m=1}^{n} f((m-b_1)a|a,A,B)\right.\\
    &\hspace{-1cm}
    +f(-b_1a|a,A,B)+\left.\D\sum_{m=-n+1}^{-1} f((m-b_1)a|a,A,B)\right)\\
    &=a(nB+A+(n-1)B)=a(A+(2n-1)B)\\
    % &=a(A+(\frac1a-a_1-1)B)
    &\hspace{.5cm}=aA+(1-a-aa_1)B
  \end{align*}
  as $0<a<1$.\\
  As $f(.|a,A,B)$ is even, we get the same result for
  $-1/2<b_1<-a_1/2$. In the same way as above, we compute
  $(P_af(\cdot\,|\,a,A,B)(a,b_1)=aA+(1-aa_1)B$ for $|b_1|<a_1/2$.\\
  The thus obtained formulas imply~(\ref{eq:Paf})
  and~(\ref{eq:A1B1-AB}) when  $[1/a]$ is even. \\
  The case of odd $[1/a]$ is treated analogously to the case of even $[1/a]$.\\
  Finally, using~(\ref{eq:A1B1-AB}), we get
  \begin{align*}
    a_1A_1+(1-a_1)B_1&=a_1(aA+(1-aa_1)B)\\ &\hspace{2cm}+(1-a_1)(aA+(1-a-aa_1)B)\\
    &=aA+(1-a)B
  \end{align*}
  which proves~(\ref{eq:FP:mesure-preserving}) as $A$ and $B$
  satisfy~(\ref{eq:AB-measure}). This completes the proof of
  Theorem~\ref{th:inv-fam-den}.
\end{proof}
\noindent We now analyze the properties of the
transformation~(\ref{eq:A1B1-AB}).  Let $a\in(0,1)\setminus\Q$.
Consider the sequence $a_0,a_1,a_2,\,\dots $ defined
by~(\ref{eq:als}).  We prove
\begin{Le}
  \label{le:AlBl}
  Pick $l>1$. One has
  \begin{equation*}
    % \label{eq:Paf}
    P_{a_{l-1}}P_{a_{l-2}}\dots P_{a_1}P_{a} f(\cdot\,|\,a,A,B)=f(\cdot\,|\,a_l,A_l,B_l),
  \end{equation*}
  where
  \begin{gather}
    \label{eq:Bl-B}
    B_l=1-\sum_{m=0}^{l-2}(-1)^m\prod_{n=l-m}^{l}a_{n}a_{n-1}+
    (-1)^l\prod_{n=1}^{l}a_{n}a_{n-1}\, B,\\
    \label{eq:Al-B}
    A_l=B_l+a_{l-1}B_{l-1}.
  \end{gather}
\end{Le}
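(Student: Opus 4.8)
The plan is to iterate the one-step transformation from Theorem~\ref{th:inv-fam-den} and solve the resulting linear recursion explicitly. By Theorem~\ref{th:inv-fam-den}, each application of $P_{a_j}$ sends a density $f(\cdot\,|\,a_j,A_j,B_j)$ to $f(\cdot\,|\,a_{j+1},A_{j+1},B_{j+1})$ with $(A_{j+1},B_{j+1})^t=S(a_j)(A_j,B_j)^t$; composing $l$ such steps gives $f(\cdot\,|\,a_l,A_l,B_l)$, which proves the first displayed claim. So the real content is to verify that the closed forms~\eqref{eq:Bl-B} and~\eqref{eq:Al-B} satisfy the recursion encoded in $S(a_j)$, namely
\begin{gather*}
  A_{j+1}=a_jA_j+(1-a_ja_{j+1})B_j,\\
  B_{j+1}=a_jA_j+(1-a_j-a_ja_{j+1})B_j.
\end{gather*}

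First I would record the simplification coming from~\eqref{eq:FP:mesure-preserving}: since $a_jA_j+(1-a_j)B_j=1$ for every $j$, we have $a_jA_j=1-(1-a_j)B_j$, and substituting this into the two recursion lines collapses them to
\begin{gather*}
  A_{j+1}=1-(1-a_j)B_j+(1-a_ja_{j+1})B_j-B_j = 1 - a_j B_j - a_j a_{j+1} B_j + a_j B_j\ \text{(rearranged)},\\
  B_{j+1}=1-(1-a_j)B_j+(1-a_j-a_ja_{j+1})B_j-B_j=1-(2a_j+a_ja_{j+1})B_j+a_jB_j.
\end{gather*}
Carrying out the bookkeeping, the $B$-recursion becomes $B_{j+1}=1-a_ja_{j+1}B_j-a_jB_j+\text{(linear-in-}B_j\text{ terms that cancel)}$; the clean statement one is aiming at is $B_{j+1}=1-a_{j+1}a_j(1+?)B_j$, which I would pin down precisely by expanding both sides of~\eqref{eq:Bl-B} for consecutive indices. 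Concretely, writing $\beta_j:=1-B_j$ or peeling off the top term of the sum in~\eqref{eq:Bl-B} should exhibit the recursion $B_{l}=1-a_{l-1}a_{l-2}\cdots$ in telescoping form: the alternating sum $\sum_{m=0}^{l-2}(-1)^m\prod_{n=l-m}^{l}a_na_{n-1}$ is built exactly so that $B_{l+1}$ obtained from the recursion reproduces the same sum shifted by one index plus the new leading term, while the trailing $(-1)^l\prod_{n=1}^l a_na_{n-1}B$ tracks the dependence on the initial datum $B_0=B$. Once the $B$-formula is checked, \eqref{eq:Al-B} for $A_l$ follows either directly from the first recursion line (which, after using $a_{l-1}A_{l-1}=1-(1-a_{l-1})B_{l-1}$, reads $A_l=1-(1-a_{l-1})B_{l-1}+(1-a_{l-1}a_l)B_{l-1}-\dots$, i.e. $A_l-B_l=a_{l-1}B_{l-1}$) or from the constraint $a_lA_l+(1-a_l)B_l=1$ combined with the $B$-recursion.

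I would organize the write-up as an induction on $l$: the base case $l=2$ is a direct substitution into Theorem~\ref{th:inv-fam-den} applied twice, and the inductive step takes $f(\cdot\,|\,a_l,A_l,B_l)$, applies $P_{a_l}$ via~\eqref{eq:A1B1-AB}, and matches the output against~\eqref{eq:Bl-B}--\eqref{eq:Al-B} with $l$ replaced by $l+1$, using~\eqref{eq:FP:mesure-preserving} to keep the algebra linear. The main obstacle is purely combinatorial: correctly handling the index ranges in the product $\prod_{n=l-m}^{l}a_na_{n-1}$ and the sign pattern $(-1)^m$ so that the telescoping in the inductive step comes out exactly, with no off-by-one error in where the new factor $a_la_{l-1}$ enters and how the old top term $(-1)^{l-2}a_2a_1a_3a_2\cdots$ (the $m=l-2$ term) gets absorbed. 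There is no analytic difficulty here — every step is an identity between finite products — so the proof is a careful verification rather than a conceptual argument, and I would present it compactly, perhaps even just checking the recursion and invoking Theorem~\ref{th:inv-fam-den} without belaboring the induction.
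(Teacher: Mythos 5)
Your approach is essentially the paper's: both reduce the matter to a first-order recursion for $B_l$ and iterate. Where you use the constraint $a_jA_j+(1-a_j)B_j=1$ to eliminate $A_j$ directly from the $B$-line of the recursion, the paper first subtracts the two recursion lines to get $A_l-B_l=a_{l-1}B_{l-1}$ (this is~\eqref{eq:Al-B} already), then substitutes into the $B$-line to get the second-order recursion $B_{l+1}=(1-a_{l+1}a_l)B_l+a_la_{l-1}B_{l-1}$, and finally notices that $B_l+a_la_{l-1}B_{l-1}$ is conserved and equal to $1$ (which, via~\eqref{eq:Al-B}, is exactly your constraint $a_lA_l+(1-a_l)B_l=1$). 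Both routes arrive at $B_l=1-a_la_{l-1}B_{l-1}$, which iterates to~\eqref{eq:Bl-B}, and both then get~\eqref{eq:Al-B} essentially for free. So the conceptual content is the same; your version just collapses the second-order detour.

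However, your displayed algebra is wrong and, taken at face value, would derail the argument. Substituting $a_jA_j=1-(1-a_j)B_j$ into $B_{j+1}=a_jA_j+(1-a_j-a_ja_{j+1})B_j$ gives
\begin{equation*}
  B_{j+1}=1-(1-a_j)B_j+(1-a_j-a_ja_{j+1})B_j=1-a_ja_{j+1}B_j,
\end{equation*}
with no stray ``$-B_j$'' term; the expressions you wrote, $1-(1-a_j)B_j+(1-a_j-a_ja_{j+1})B_j-B_j$ and $1-(2a_j+a_ja_{j+1})B_j+a_jB_j$, are not equal to each other, nor to the correct right-hand side. (The same spurious $-B_j$ appears in your $A_{j+1}$ line.) You flag this as bookkeeping to be pinned down, but it is worth being explicit that the clean target is exactly $B_{j+1}=1-a_ja_{j+1}B_j$; from there the closed form~\eqref{eq:Bl-B} is immediate by peeling off the top term, and~\eqref{eq:Al-B} follows either from the constraint or from subtracting the two recursion lines as the paper does.
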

\begin{proof}
  Let $A_0=A$ and $B_0=B$.  By Theorem~\ref{th:inv-fam-den}, for $l\in
  \N$,
  \begin{eqnarray}
    \label{eq:Al}
    A_l= a_{l-1}A_{l-1}+(1-a_{l-1}a_l)B_{l-1},\\
    \label{eq:Bl}
    B_l=a_{l-1} A_{l-1}+(1-a_{l-1}-a_la_{l-1})B_{l-1}.
  \end{eqnarray}
  Subtracting~(\ref{eq:Bl}) from~(\ref{eq:Al}), we prove
  (\ref{eq:Al-B}).  Furthermore, substituting into~(\ref{eq:Bl}) with
  $l$ replaced by $l+1$ the value of $A_l$ given by~(\ref{eq:Al-B}),
  we get
  \begin{equation*}
    B_{l+1}=(1-a_{l+1}a_{l})B_{l}+a_la_{l-1}B_{l-1},\quad  \forall l\in\N.
  \end{equation*}
  This implies that
  \begin{equation*}
    B_{l+1}+a_{l+1}a_{l}B_{l}=B_{l}+a_la_{l-1}B_{l-1},\quad  \forall l\in\N.
  \end{equation*}
  Now, for $l=1$, equation~(\ref{eq:Bl}) implies that
  \begin{equation*}
    B_1+a_1a_0B_0=aA_0+(1-a)B_0=1.
  \end{equation*}
  This formula and the previous equation for $\{B_l\}_{l\in\N}$ imply
  that
  \begin{equation*}
    B_{l}=1-a_{l}a_{l-1}B_{l-1},\quad  \forall l\in\N.
  \end{equation*}
  This relation allows to express $B_l$ directly in terms of $B_0=B$,
  and one obtains~(\ref{eq:Bl-B}). This completes the proof of
  Lemma~\ref{le:AlBl}.
\end{proof}
\noindent To complete this section, we discuss another family of
densities $f(\cdot\,|\,a,M)$, $M\in\N$, such that
$P_af(\cdot\,|\,a,M)=f(\cdot\,|\,a_1,A,B)$.  We prove
\begin{Le}
  \label{le:second-family}
  For $a\in(0,1)$ and $M\in\N$ satisfying,
  \begin{equation}
    \label{eq:M-new-family}
    M\le\left\{\begin{array}{ll} \frac12\,\left[\frac1a\right] & 
        \text{ if } [1/a] \text{ is even,}\\
        \\
        \frac12\,\left[\frac1a +1\right] & 
        \text{ if } [1/a] \text{ is odd.}
      \end{array}\right.
  \end{equation}
  Let
  \begin{equation}
    \label{eq:second-family}
    f(b|a,M)=\left\{\begin{array}{ll}
        \D\frac{\chi(\,|b|\leq a(M-a_1/2)\,)}{a(2M-a_1)} & 
        \text{ if } [1/a] \text{ is even,}\\
        \\
        \D\frac{\chi(\,|b|\leq a(M-1/2-a_1/2)\,)}{a(2M-1-a_1)} & 
        \text{ if } [1/a] \text{ is odd.}
      \end{array}\right.
  \end{equation}
  Then,
  \begin{equation}
    \label{eq:new-family-trans}
    P_a f(\cdot\,|a,M)=f(\cdot\,|a_1,A_1,B_1),
  \end{equation}
  and
  \begin{equation}
    \label{eq:new-family-trans:1} 
    \text{if } M>1,\text{ then } A_1,B_1=1+O(1/M),
  \end{equation}
  the error estimate being uniform in $a$.
\end{Le}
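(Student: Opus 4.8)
The plan is to verify the transformation rule \eqref{eq:new-family-trans} by direct computation, exactly in the spirit of the proof of Theorem~\ref{th:inv-fam-den}, and then to read off \eqref{eq:new-family-trans:1} from the explicit formulas for $A_1$ and $B_1$ produced along the way. First I would treat the case $[1/a]$ even, writing $a=1/(2n+a_1)$ with $n=[1/a]/2$ as in \eqref{eq:n-even}, so the Perron--Frobenius operator $P_a$ has the three-branch form \eqref{eq:perron-frobenius-fibre-even}. The density $f(\cdot\,|a,M)$ is the uniform density on the centered interval $|b|\le a(M-a_1/2)$; since $M\le n$ by \eqref{eq:M-new-family}, this interval is contained in $(-1/2,1/2]$ and, crucially, its preimages under the affine maps $b\mapsto (m-b)a$ partition into the ``inner'' contributions. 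The key computation is to count, for a given $b_1$ in one of the three regions of \eqref{eq:perron-frobenius-fibre-even}, how many of the points $(m-b_1)a$ fall inside the support of $f(\cdot\,|a,M)$: this is a matter of solving $|(m-b_1)a|\le a(M-a_1/2)$ for integer $m$, i.e. counting integers in an interval of length $2M-a_1$ shifted by $b_1$. One finds that for $|b_1|<a_1/2$ there are $2M$ such $m$, while for $a_1/2<|b_1|<1/2$ there are $2M-1$ (or $2M+1$ for small $M$ near the boundary — this bookkeeping is the delicate point), so that $P_af(\cdot\,|a,M)$ is again piecewise constant with exactly the two levels dictated by \eqref{eq:density}, giving $A_1=a\cdot 2M/(a(2M-a_1))=2M/(2M-a_1)$ and $B_1=(2M-1)/(2M-a_1)$ up to the normalizing factor $a/(a(2M-a_1))$ from \eqref{eq:perron-frobenius-fibre-even}. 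One then checks $a_1A_1+(1-a_1)B_1=1$, consistent with \eqref{eq:AB-measure}.

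Second, I would dispatch the case $[1/a]$ odd analogously, writing $a=1/(2n+1+a_1)$; the three-branch structure of $P_a$ is the obvious odd analogue of \eqref{eq:perron-frobenius-fibre-even}, the support condition \eqref{eq:M-new-family} again keeps the relevant interval inside $(-1/2,1/2]$, and the integer-counting gives $2M-1$ points of the orbit in the support for $b_1$ in the central region and $2M-2$ (resp.\ a boundary correction) outside, yielding $A_1$ and $B_1$ of the stated normalized form with the factor $1/(a(2M-1-a_1))$. The even and odd computations are structurally identical; I would present the even case in full and remark that the odd case is treated ``in the same way,'' as the paper does elsewhere.

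Finally, \eqref{eq:new-family-trans:1} is immediate from the explicit values: in the even case $A_1=2M/(2M-a_1)=1+a_1/(2M-a_1)=1+O(1/M)$ and $B_1=(2M-1)/(2M-a_1)=1+(1-a_1)/(2M-a_1)=1+O(1/M)$, with the constant in $O(1/M)$ absolute since $0\le a_1<1$; the odd case gives $A_1=1+O(1/M)$, $B_1=1+O(1/M)$ in the same way. Hence the error is uniform in $a$, as claimed. The only real obstacle is the boundary bookkeeping in the integer count — getting the number of orbit points in the support of $f(\cdot\,|a,M)$ exactly right, rather than off by one, when $b_1$ is near $\pm a_1/2$ or near $\pm1/2$ — and making sure the condition \eqref{eq:M-new-family} is exactly what is needed to keep the support inside $(-1/2,1/2]$ and to make the counts come out to the clean values $2M$, $2M-1$ (even case) and $2M-1$, $2M-2$ (odd case). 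Everything else is routine substitution mirroring the proof of Theorem~\ref{th:inv-fam-den}.
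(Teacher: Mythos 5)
Your strategy is exactly the paper's: apply $P_a$ in the three-branch form \eqref{eq:perron-frobenius-fibre-even}, count how many of the points $(m-b_1)a$ land in the support of $f(\cdot\,|\,a,M)$, and read off $A_1,B_1$. However, your count is backwards, and this is precisely the ``delicate bookkeeping'' you flagged. For $[1/a]$ even one must count integers $m$ with $|m-b_1|\le M-a_1/2$, i.e.\ integers in the interval $[\,b_1-M+a_1/2,\;b_1+M-a_1/2\,]$. For $0<b_1<a_1/2$ its endpoints lie in $(-M,-M+1)$ and $(M-1,M)$ respectively, so there are $2M-1$ admissible $m$ (namely $-M+1,\dots,M-1$); for $a_1/2<b_1<1/2$ the endpoints lie in $(-M,-M+1)$ and $(M,M+1)$, so there are $2M$ admissible $m$ (namely $-M+1,\dots,M$). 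Thus the \emph{inner} region $|b_1|<a_1/2$ receives the smaller count, giving $A_1=(2M-1)/(2M-a_1)=1-(1-a_1)/(2M-a_1)$ and $B_1=2M/(2M-a_1)=1+a_1/(2M-a_1)$ --- the opposite of what you wrote. Your own proposed sanity check exposes the error: with your values one finds $a_1A_1+(1-a_1)B_1=(2M-1+a_1)/(2M-a_1)\neq 1$, whereas the correct values do satisfy \eqref{eq:AB-measure}. (Also, a count of $2M+1$ can never occur: an interval of length $2M-a_1<2M$ contains at most $2M$ integers.) The same inversion presumably affects your odd case. Since both of your values are nonetheless $1+O(1/M)$, the final estimate \eqref{eq:new-family-trans:1} would survive, but the identity \eqref{eq:new-family-trans} as you would derive it is false, so the counting step must be redone as above. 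Note also that the role of \eqref{eq:M-new-family} is not only to keep the support inside $(-1/2,1/2]$ but to guarantee that all admissible $m$ actually occur in the summation ranges of \eqref{eq:perron-frobenius-fibre-even} (one needs $M\le n$ where $[1/a]=2n$).
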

\begin{proof}
  Assume that $[1/a]$ is even. In the sums in the right hand side
  of~(\ref{eq:perron-frobenius-fibre-even}), only the terms with
  $-M+a_1/2+b_1\leq m\leq M-a_1/2+b_1$ are non zero.  So, for
  $a_1/2<b_1<1/2$, we get
  \begin{equation*}
    \begin{split}
      (P_af(\cdot\,|\,a,M))(b_1)&=a\,
      \D\sum_{m=-M+1}^{M} f((m-b_1)a|a,M)\\
      &=\frac{2M}{2M-a_1}=1+\frac{a_1}{2M-a_1}.
    \end{split}
  \end{equation*}
  And, for $0<b_1<a_1/2$, we obtain
  \begin{equation*}
    \begin{split}
      (P_af(\cdot\,|\,a,M))(a,b_1)&=a\,
      \D\sum_{m=-M+1}^{M-1} f((m-b_1)a|a,M)\\
      &=\frac{2M-1}{2M-a_1}=1-\frac{1-a_1}{2M-a_1}.
    \end{split}
  \end{equation*}
  In the case of negative $b_1$, we obtain the same formulas as for
  $-b_1$. This implies~\eqref{eq:new-family-trans} with
  \begin{equation*}
    A_1=1-\frac{1-a_1}{2M-a_1}\quad\text{ and }\quad
    B_1=1+\frac{a_1}{2M-a_1}.
  \end{equation*}
  As $0<a_1<1$ and $M\geq 1$, we see that $A_1,B_1=1+O(1/M)$. \\
  This completes the proof of Lemma~\ref{le:second-family} for $[1/a]$
  even. To complete the proof of Lemma~\ref{le:second-family}, the
  case of odd $[1/a]$ is treated similarly.
\end{proof}
\subsection{Proof of Lemma~\ref{le:frakN:1}}
\label{sec:NF1}
By~(\ref{eq:N-chi}),
\begin{equation}
  \label{eq:NF1}
  \Vert\mathfrak N(L,\cdot,\cdot)\Vert_1=\sum_{l=0}^L\int_K
  \chi(a_l\le\varphi^4(l))\,
  \chi(b_l\le\varphi^2(l))\,\frac{da\,db}{\ln2\,(1+a)},
\end{equation}
where $(a_l,b_l)$ are related to $(a,b)$ by~(\ref{eq:als})
and~(\ref{eq:bls}).  To transform the right hand side
of~(\ref{eq:NF1}), we first use Fubini's theorem and then, for fixed
$a$, we perform the change of variable $b\to b_l$.  As $f(b|a,1,1)=1$,
Lemma~\ref{le:AlBl} implies that
\begin{equation}
  \label{eq:NF2}
  \Vert\mathfrak
  N(L,\cdot,\cdot)\Vert_1=\frac1{\ln2}\sum_{l=0}^L\int_0^1
  \frac{\chi(a_l\le\varphi^4(l))\cdot I(l) \,da}{1+a}, 
\end{equation}
where
\begin{equation*}
  I(l):=\int_{-1/2}^{1/2}
  \chi(|b_l|\le\varphi^2(l))\,f(b_l|a_l,A _l,B_l)\,db_l,   
\end{equation*}
the coefficients $A_l$ and $B_l$ being defined by~(\ref{eq:Al-B})
and~(\ref{eq:Bl-B}) with $B_0=1$.\\
Recall that $\varphi_l<1/2$.\\
Let us study $I(l)$ under the condition $a_l\leq \varphi^4(l)$.
Using~(\ref{eq:density}), we compute
\begin{align}
  \label{eq:NF3}
  I(l)&=2\left(A_l\int_{0}^{a_l/2}+
    B_l\int_{a_l/2}^{1/2}\right) \chi(b_l\le\varphi^2(l))\,db_l\\
  \nonumber
  &=(A_l\,a_l+B_l\,(2\varphi^2(l)-a_l))=(a_l\,a_{l-1}\,B_{l-1}
  +2B_l\,\varphi^2(l)),
\end{align}
where, in the second step, we used the inequalities $a_l/2\le
\varphi^4(l)/2<\varphi^2(l)$ and $\varphi^2(l)<1/2$ which follows from
$\varphi(l)<1/2$, and, in the last step, we
used~(\ref{eq:Al-B}). \\
Note that it follows from estimate~(\ref{eq:two-a-l}) and
formula~(\ref{eq:Bl-B}) with $B_0=1$ that, for all $l\geq 0$, one has
$1/2< B_l< 1$. Therefore,
\begin{equation}
  \label{eq:FN1:2}
  \varphi^2(l)< 2B_l\,\varphi^2(l)<I(l)<a_l+2\varphi^2(l)<3\varphi^2(l).
\end{equation}
Let us now turn to the study of $\Vert\mathfrak N(L,\cdot,\cdot
)\Vert_1$.  As the density $\frac1{\ln 2\,(1+a)}$ is inva\-riant with
respect to the Gauss transformation $a\to \{1/a\}$, one computes
\begin{equation}
  \label{eq:FN1:3}
  \int_0^1\frac{\chi(a_l\le\varphi^4(l))\,da}{1+a}=
  \int_0^1\frac{\chi(a_l\le\varphi^4(l))\,da_l}{1+a_l}=\ln(1+\varphi^4(l)).
\end{equation}
The inequality~(\ref{eq:FN1:2}) and the equality~(\ref{eq:FN1:3})
imply that
\begin{equation*}
  \frac1{\ln2} \sum_{l=0}^L \ln(1+\varphi^4(l))\varphi^2(l)\leq 
  \Vert\mathfrak N(L,\cdot,\cdot)\Vert_1\leq 
  \frac3{\ln2} \sum_{l=0}^L \ln(1+\varphi^4(l))\varphi^2(l).
\end{equation*}
This implies~(\ref{eq:Nfrak-g6}), hence, completes the proof of
Lemma~\ref{le:frakN:1}.  \qed
\subsection{Proof of Lemma~\ref{le:frakN:2}}
\label{sec:proof-lemma-refl-1}
We now assume that $\D\lim_{l\to\infty}\varphi(l)=0$. This enables us
to get more precise estimates for $\Vert\mathfrak
N(L,\cdot,\cdot)\Vert_1$ in subsection~\ref{sec:estim-vertm-nl}. In
subsection~\ref{sec:estim-vertm-nl-1}, using these estimates, we
approximate $\Vert\mathfrak N(L,\cdot,\cdot)\Vert_2$ with
$\Vert\mathfrak N(L,\cdot,\cdot)\Vert_1$ and, thus, prove
Lemma~\ref{le:frakN:2}.\\
Below, $C$ denotes positive constants independent of $a,b,L$ and other
variables (e.g., indices of summation). Moreover, when writing
$f=O(g)$, we mean that $|f|\leq C|g|$.
\subsubsection{Precise estimates for $\Vert\mathfrak
  N(L,\cdot,\cdot)\Vert_1$}
\label{sec:estim-vertm-nl}
Recall that, in formula~(\ref{eq:NF3}), one has $a_l\leq\varphi^4(l)$
and $B_l$ is computed by~(\ref{eq:Bl-B}) with
$B=1$. Formula~(\ref{eq:Bl-B}) with $B=1$ implies that $1/2<B_m<1$ for
all $m$. Moreover, as $a_l\leq\varphi^4(l)$ and $\varphi(l)$ is small,
we can write $B_l=1+O(\varphi^4(l))$.  So, we replace~(\ref{eq:NF3})
with
\begin{equation}
  \label{eq:Ila-new}
  I(l)=2\varphi^2(l)(1+O(\varphi^2(l))).
\end{equation}
This and~(\ref{eq:NF1}) imply that
\begin{equation}
  \label{eq:NF1new}
  \Vert\mathfrak N(L,\cdot,\cdot)\Vert_1=\sum_{l=0}^L J(l) \text{
    where }
  % \\
  % \label{eq:Jl}
  J(l)=\frac2{\ln 2}\varphi^6(l)\,(1+O(\varphi^2(l))).
\end{equation}
That is the formula that we need to estimate $\Vert\mathfrak
N(L,\cdot,\cdot)\Vert_2$.
\subsubsection{Estimates for $\Vert\mathfrak N(L,\cdot,\cdot)\Vert_2$}
\label{sec:estim-vertm-nl-1}
Using~(\ref{eq:N-chi}), we get
\begin{equation}
  \label{eq:FN2:1}
  \begin{split}
    \Vert&\mathfrak N(L,\cdot,\cdot)\Vert^2_2=\Vert\mathfrak
    N(L,\cdot,\cdot)\Vert_1+\\
    &+\frac2{\ln 2}\sum_{0\leq l<m\leq L}
    \int_0^1\frac{da}{1+a}\chi(a_l\leq \varphi^4(l)) \chi(a_m\leq
    \varphi^4(m))\,I(l,m)
  \end{split}
\end{equation}
where
\begin{equation}
  \label{eq:FN2:2}
  I(l,m)=\int_{-1/2}^{1/2} \chi(|b_l|\leq \varphi^2(l))
  \chi(|b_m|\leq \varphi^2(m))\,db.
\end{equation}
The central ingredient for the proof of Lemma~\ref{le:frakN:2} is
\begin{Le}
  \label{le:Ilm} Let $l<m$.  If $a_l\leq \varphi^4(l)$ and $a_m\le
  \varphi^4(m)$, then
  \begin{equation}
    \label{eq:FN2:Ilm}
    I(l,m)=4\varphi^2(l)\varphi^2(m)
    \left(1+O(\varphi^2(l))\right).
  \end{equation}
\end{Le}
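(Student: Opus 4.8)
The plan is to compose the two changes of variable $b\mapsto b_l\mapsto b_m$ with the invariant families of densities of Lemmas~\ref{le:AlBl} and~\ref{le:second-family}, the point being that $\chi_{J_l}$ is \emph{not} itself a member of an invariant family and so must be replaced by indicators that do transform exactly under the whole cocycle.

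First I would rewrite the integrand $\chi(|b_l|\le\varphi^2(l))\,\chi(|b_m|\le\varphi^2(m))$ of~\eqref{eq:FN2:2} as a function of $b_l$ alone (recall $a$ is fixed, so $b_m$ is determined by $b_l$) and change the variable $b\mapsto b_l$. Since $\mathbf 1=f(\cdot\,|\,a,1,1)$, Lemma~\ref{le:AlBl} gives
\[ I(l,m)=\int_{J_l}\chi(|b_m|\le\varphi^2(m))\;f(b_l\,|\,a_l,A_l,B_l)\,db_l,\qquad J_l:=[-\varphi^2(l),\varphi^2(l)], \]
with $A_l,B_l$ as in~\eqref{eq:Al-B}--\eqref{eq:Bl-B} with $B_0=1$; in particular $B_l=1-a_la_{l-1}B_{l-1}=1+O(\varphi^4(l))$ (using $a_l\le\varphi^4(l)$), $A_l\ge B_l$ and $A_l-B_l=a_{l-1}B_{l-1}=O(1)$. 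A second change of variable $b_l\mapsto b_m$ turns this into $I(l,m)=\int_{J_m}(P_{m,l}H)(b_m)\,db_m$, where $P_{m,l}:=P_{a_{m-1}}\cdots P_{a_l}$, $J_m:=[-\varphi^2(m),\varphi^2(m)]$, and $H:=\chi_{J_l}\,f(\cdot\,|\,a_l,A_l,B_l)=B_l\chi_{J_l}+(A_l-B_l)\chi_{(-a_l/2,a_l/2)}$.

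The crucial remark is that $P_{m,l}$ is positivity preserving, so I can trap $H$ between admissible indicators. By Lemma~\ref{le:second-family} the function $f(\cdot\,|\,a_l,M)$ equals, up to the explicit factor $c(M):=a_l(2M-a_{l+1})$ (for $[1/a_l]$ even; the odd case uses the companion formula), the indicator of the symmetric interval of half-width $c(M)/2=a_l(M-a_{l+1}/2)$; as $M$ runs over the integers allowed by~\eqref{eq:M-new-family} these half-widths form an arithmetic progression of step $a_l$, and since $a_l\le\varphi^4(l)\ll\varphi^2(l)\le1/2$ one can choose admissible $M_1\le M_2$ with $c(M_1)/2\le\varphi^2(l)\le c(M_2)/2$ and $c(M_i)=2\varphi^2(l)(1+O(\varphi^2(l)))$, and also $M=1$ with $a_l/2\le c(1)/2\le a_l$. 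Then pointwise
\[ B_l\,c(M_1)\,f(\cdot\,|\,a_l,M_1)\ \le\ H\ \le\ B_l\,c(M_2)\,f(\cdot\,|\,a_l,M_2)+(A_l-B_l)\,c(1)\,f(\cdot\,|\,a_l,1). \]
Now $P_{a_l}f(\cdot\,|\,a_l,M)=f(\cdot\,|\,a_{l+1},A^\sharp_{l+1},B^\sharp_{l+1})$ with $A^\sharp_{l+1},B^\sharp_{l+1}=1+O(1/M)$ satisfying~\eqref{eq:FP:mesure-preserving} (Lemma~\ref{le:second-family}); iterating Theorem~\ref{th:inv-fam-den} gives $P_{m,l}f(\cdot\,|\,a_l,M)=f(\cdot\,|\,a_m,A^\sharp_m,B^\sharp_m)$, and the recursion $B^\sharp_j=1-a_ja_{j-1}B^\sharp_{j-1}$ (valid for $j\ge l+2$, as in the proof of Lemma~\ref{le:AlBl}) together with $a_m\le\varphi^4(m)$ forces $B^\sharp_m=1+O(\varphi^4(m))$, $A^\sharp_m=O(1)$ for $m\ge l+2$, while $B^\sharp_{l+1}=1+O(1/M)$ for $m=l+1$. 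Hence $\int_{J_m}f(b_m\,|\,a_m,A^\sharp_m,B^\sharp_m)\,db_m=2B^\sharp_m\varphi^2(m)+O(a_m)$, which is $2\varphi^2(m)(1+O(\varphi^2(l)))$ for $M=M_1,M_2$ (then $1/M=O(a_l/\varphi^2(l))=O(\varphi^2(l))$) and $O(\varphi^2(m))$ for $M=1$.

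Integrating the sandwich over $J_m$ against $P_{m,l}$ and using these identities, both bounds for $I(l,m)$ come out equal to $4\varphi^2(l)\varphi^2(m)(1+O(\varphi^2(l)))$: indeed $B_l\,c(M_i)\cdot2\varphi^2(m)(1+O(\varphi^2(l)))=4\varphi^2(l)\varphi^2(m)(1+O(\varphi^2(l)))$ since $B_l=1+O(\varphi^4(l))$ and $c(M_i)=2\varphi^2(l)(1+O(\varphi^2(l)))$, while the extra term in the upper bound is $(A_l-B_l)\,c(1)\cdot O(\varphi^2(m))=O(a_l\varphi^2(m))=O(\varphi^4(l)\varphi^2(m))$, absorbed because $\varphi^2(m)\le\varphi^2(l)$. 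This gives Lemma~\ref{le:Ilm}. The only genuine obstacle is precisely that $\chi_{J_l}$ cannot be transported exactly; the resolution—trapping it between admissible indicators, whose half-widths are spaced by only $a_l=O(\varphi^4(l))$ thanks to the hypothesis $a_l\le\varphi^4(l)$, and using positivity of the Perron--Frobenius operators—makes the error harmless, after which everything is the explicit computation above (and the parity of $[1/a_l]$ plays no role).
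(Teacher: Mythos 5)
Your proof is correct and follows essentially the same route as the paper's: the same decomposition of $\chi_{J_l}f(\cdot\,|\,a_l,A_l,B_l)$ into a ``bulk'' part (the paper's $I_2$) and a ``corner'' part supported on $(-a_l/2,a_l/2)$ (the paper's $I_1$, with $A_l-B_l=a_{l-1}B_{l-1}$), then Lemma~\ref{le:AlBl} for the corner and Lemma~\ref{le:second-family} for the bulk, replacing $J_l$ by an interval of half-width $a_l(M-a_{l+1}/2)$ within $O(a_l)$ of $\varphi^2(l)$. The only difference is presentational: you make the two-sided sandwich with $M_1\le M_2$ explicit and invoke positivity of the Perron--Frobenius operators to transport it, whereas the paper spells out only the upper bound (which is in fact all that is used in Lemma~\ref{le:N2-gauss}).
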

\begin{proof}
  The analysis of the integral $I(l,m)$ begins as the analysis of the
  integral $I(l)$ in the previous section, and one easily computes
  \begin{equation*}
    \begin{split}
      I(l,m)&=\int_{-1/2}^{1/2} \chi(|b_l|\leq \varphi^2(l))
      \chi(|b_m|\leq \varphi^2(m)) f(b_l|a_l,A_l,B_l) db_l\\
      &=\int_{|b_l|< \varphi^2(l)} \chi(|b_m|\leq \varphi^2(m))
      f(b_l|a_l,A_l,B_l) db_l\\
      &=\left(A_l\int_{ |b_l|< a_l/2} +B_l\int_{
          a_l/2<|b_l|<\varphi^2(l)}\right)\, \chi(|b_m|\leq
      \varphi^2(m)) db_l,
    \end{split}
  \end{equation*}
  and
  \begin{equation}
    \label{eq:I1I2}
    I(l,m)=a_la_{l-1}B_{l-1}I_1(l,m)+B_l I_2(l,m),
  \end{equation}
  where $B_l$ and $A_l$ are computed by~(\ref{eq:Bl-B}) with $B=1$,
  and we have set
  \begin{gather}
    \nonumber I_1(l,m)=\frac1{a_l}\int_{|b_l|<a_l/2}\chi(|b_m|\leq
    \varphi^2(m)) db_l,\\
    \label{eq:FN2:4}
    I_2(l,m)=\int_{|b_l|<\varphi^2(l)}\chi(|b_m|\leq \varphi^2(m))
    db_l.
  \end{gather}
  Estimate the integral $I_1(l,m)$. Therefore, we use
  Lemma~\ref{le:AlBl} with the sequence $(a_j)_{j\geq l}$ instead of
  the sequence $(a_j)_{j\geq 0}$. We compute
  \begin{equation*}
    \begin{split}
      I_1(l,m)&=\int_{-1/2}^{1/2}\chi(|b_m|\leq \varphi^2(m))
      f(b_l|a_l,1/a_l,0)\,db_l\\&=\int_{-1/2}^{1/2}\chi(|b_m|\leq
      \varphi^2(m)) f(b_m|a_m,\tilde A_m,\tilde B_m)\,db_m\\
      &= a_m (\tilde A_m-\tilde B_m)+2\tilde B_m\varphi^2(m),
    \end{split}
  \end{equation*}
  where $\tilde A_m$ and $\tilde B_m$ are computed in terms of $\tilde
  A_l=1/a_l$ and $\tilde B_l=0$ by formulas~(\ref{eq:Al})
  and~(\ref{eq:Bl}). Formula~(\ref{eq:Bl-B}) implies that $\tilde
  B_{m-1}\leq 1$, and $\tilde B_m=1+O(a_m)$. These observations
  and~(\ref{eq:Al-B}) lead to the estimate
  \begin{equation}\label{eq:1111}
    I_1(l,m)=O(\varphi^2(m)).
  \end{equation}
  To compute the integral $I_2(l,m)$, we use
  Lemma~\ref{le:second-family} with $a$ and $a_1$ replaced with
  $a_l$ and $a_{l+1}$.\\
  Consider the case when $[1/a_l]$ is even. Choose an integer $M$ so
  that
  \begin{equation}
    \label{eq:FN2:5}
    0\leq a_l(M-a_{l+1}/2)-\varphi^2(l)< a_l.
  \end{equation}
  As $a_l\leq \varphi^4(l)$ and $\varphi(l)<1/2$, one has
  \begin{equation}
    \label{eq:FN2:6}
    M\ge 1/\varphi^2(l)>4.
  \end{equation}
  The definition of $I_2(l,m)$,~\eqref{eq:new-family-trans}
  and~\eqref{eq:new-family-trans:1} yield
  \begin{equation*}
    \begin{split}
      I_2(l,m)&\leq\int_{|b_l|<a_l(M-a_{l+1}/2)}\chi(|b_m|\leq
      \varphi^2(m))\,db_l
      \\&=2a_l(M-a_{l+1}/2)\int_{-1/2}^{1/2}\chi(|b_m|\leq
      \varphi^2(m))\,f(b_l|a_l,M)\,db_l\\
      &=2a_l(M-a_{l+1}/2)\int_{-1/2}^{1/2}\chi(|b_m|\leq
      \varphi^2(m))\, f(b_{l+1}|a_{l+1},A,B)\,db_{l+1}
    \end{split}
  \end{equation*}
  with $A,B=1+O(1/M)$. Moreover, in view of~(\ref{eq:FN2:6}), one has
  \begin{equation}
    \label{eq:FN2:7}
    A,B=1+O(1/M)=1+O(\varphi^2(l)).
  \end{equation}
  If $m=1+l$, we compute
  \begin{equation*}
    I_2(l,m)=2a_l(M-a_{m}/2)\,(a_m(A-B)+2B\varphi^2(m));
  \end{equation*}
  using~(\ref{eq:FN2:7}) and~(\ref{eq:FN2:5}), we finally obtain
  \begin{equation}
    \label{eq:FN2:8}
    I_2(l,m)\leq 4\varphi^2(l)\varphi^2(m)\,(1+O(\varphi^2(l))).
  \end{equation}
  If $m>l+1$, in the last integral for $I_2(l,m)$, we change the
  variable $b_{l+1}$ to $b_m$ and get
  \begin{equation*}
    \begin{split}
      I_2(l,m)&\leq
      2a_l(M-\frac{a_{l+1}}2)\int_{-1/2}^{1/2}\chi(|b_m|\le
      \varphi^2(m))\,
      f(b_{m}|a_{m},\tilde{\tilde A}_m,\tilde{\tilde B}_m)db_{l+1}\\
      &=2a_l(M-a_{l+1}/2)\left(a_m(\tilde{\tilde A}_m-\tilde{\tilde
          B}_m)+ 2\tilde{\tilde B}_m \varphi^2(m)\right)
    \end{split}
  \end{equation*}
  where $\tilde{\tilde A}_m$ and $\tilde{\tilde B}_m$ are obtained
  from $\tilde{\tilde A}_{l+1}=A$ and $\tilde{\tilde B}_{l+1}=B$ by
  formulas~(\ref{eq:Al}) and~(\ref{eq:Bl}). Now,
  using~(\ref{eq:FN2:5}) and Lemma~\ref{le:AlBl} with $(a_j)_{j\geq
    l+1}$ instead of $(a_j)_{j\geq0}$, as $l<m$ and $\varphi$ is non
  increasing, we get
  \begin{equation}
    \label{eq:FN2:9}
    \begin{split}
      I_2(l,m)&\le
      4\varphi^2(l)\varphi^2(m)(1+O(\varphi^2(m))(1+O(\varphi^2(l)))\\
      &\leq 4\varphi^2(l)\varphi^2(m)(1+O(\varphi^2(l))).
    \end{split}
  \end{equation}
  We now complete the proof of Lemma~\ref{le:Ilm}.  First, it follows
  from Lemma~\ref{le:AlBl} that
  \begin{equation}
    \label{eq:7}
    a_l=O(\varphi^4(l)),\quad 
    B_{l-1}\leq 1\quad\text{and}\quad
    B_l=1+O(a_l)=1+O(\varphi^4(l)).
  \end{equation}
  We plug~(\ref{eq:1111}), (\ref{eq:FN2:8}) and~(\ref{eq:FN2:9})
  into~\eqref{eq:I1I2}. Taking into account~(\ref{eq:7}), we
  obtain~(\ref{eq:FN2:Ilm}).  This completes the proof of
  Lemma~\ref{le:Ilm}.
\end{proof}
\noindent We now return to the study of $\|\mathfrak
N(L,\cdot,\cdot)\|_2$. Using well known properties of the Gauss map,
we prove
\begin{Le}
  \label{le:N2-gauss}
  One has
  \begin{equation*}
    \|\mathfrak N(L,\cdot,\cdot)\|_1^2\leq\|\mathfrak
    N(L,\cdot,\cdot)\|_2^2\leq \|\mathfrak
    N(L,\cdot,\cdot)\|^2_1+\|\mathfrak N(L,\cdot,\cdot)\|_1+R_L
  \end{equation*}
  where, for some $C>0$, one has
  \begin{equation*}
    R_L:=
    \D\sum_{m,l=0}^{L}\,\varphi^6(l)\varphi^6(m)\cdot
    O\left(\varphi^2(m)+\varphi^2(l)+e^{-(m-l)/C}\right).
  \end{equation*}
\end{Le}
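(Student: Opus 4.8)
The plan is to bound $\|\mathfrak N(L,\cdot,\cdot)\|_2^2$ directly from the expansion~\eqref{eq:FN2:1}. The lower bound $\|\mathfrak N\|_1^2\le\|\mathfrak N\|_2^2$ is just the Cauchy--Schwarz (or Jensen) inequality for the probability measure $m$, so all the work is in the upper bound. Starting from~\eqref{eq:FN2:1}, the diagonal contribution is exactly $\|\mathfrak N(L,\cdot,\cdot)\|_1$, which accounts for the middle term on the right-hand side. For the off-diagonal sum, I would insert the estimate~\eqref{eq:FN2:Ilm} from Lemma~\ref{le:Ilm}, valid precisely on the set where the two characteristic functions in front of $I(l,m)$ are nonzero, namely where $a_l\le\varphi^4(l)$ and $a_m\le\varphi^4(m)$. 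This turns the $da$-integral into
\begin{equation*}
  \int_0^1\frac{da}{1+a}\,\chi(a_l\le\varphi^4(l))\,\chi(a_m\le\varphi^4(m))\,
  \bigl(4\varphi^2(l)\varphi^2(m)+O(\varphi^2(l)\cdot\varphi^2(l)\varphi^2(m))\bigr).
\end{equation*}

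First I would deal with the main term $4\varphi^2(l)\varphi^2(m)$: the integral of $\chi(a_l\le\varphi^4(l))\chi(a_m\le\varphi^4(m))$ against the Gauss-invariant density is, by the mixing property of the Gauss map (decay of correlations), equal to $\ln(1+\varphi^4(l))\cdot\ln(1+\varphi^4(m))\cdot\bigl(1+O(e^{-(m-l)/C})\bigr)$ for suitable $C>0$; this uses that the indicator of $\{a_l\le\varphi^4(l)\}$ depends only on $a_0,\dots,a_l$ while that of $\{a_m\le\varphi^4(m)\}$ can be pushed forward by $m-l$ steps of the Gauss map, and the Gauss map has exponential decay of correlations on functions of bounded variation (this is classical; see~\cite{MR832433}). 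Since $\ln(1+\varphi^4(l))=\varphi^4(l)(1+O(\varphi^4(l)))$, this contributes $\sum_{0\le l<m\le L}\varphi^6(l)\varphi^6(m)(1+O(e^{-(m-l)/C}))$ after collecting the $\tfrac2{\ln2}$ factors and comparing with the shape of $J(l)$ in~\eqref{eq:NF1new}; the $O(e^{-(m-l)/C})$ piece and the $O(\varphi^4)$ correction from the logarithm both land inside $R_L$, and the leading sum telescopes against $\|\mathfrak N(L,\cdot,\cdot)\|_1^2=\bigl(\sum_l J(l)\bigr)^2=\sum_{l,m}J(l)J(m)$ up to terms of the same $R_L$-type.

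Next I would dispose of the error term $O(\varphi^4(l)\varphi^2(m))$ from~\eqref{eq:FN2:Ilm}: bounding its $da$-integral by $\ln(1+\varphi^4(l))\cdot O(\varphi^4(l)\varphi^2(m))$ and then crudely estimating $\varphi^2(m)\le\varphi^2(l)$ (or keeping it) one sees each such term is $O(\varphi^6(l)\varphi^6(m)\cdot\varphi^2(l))$ after using $\varphi(m)\le\varphi(l)$ to restore the $\varphi^6$ powers — again of exactly the $R_L$ shape. Symmetrizing the double sum over $l<m$ to a sum over all $l,m$ only changes constants. Assembling the diagonal term, the leading off-diagonal term matched against $\|\mathfrak N\|_1^2$, and all the error contributions gathered into $R_L$ gives the claimed two-sided inequality.

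The main obstacle is the quantitative mixing input: one must justify that $\int_0^1\frac{da}{1+a}\chi(a_l\le\alpha)\chi(a_m\le\beta)=\ln(1+\alpha)\ln(1+\beta)(1+O(e^{-(m-l)/C}))$ with a constant $C$ uniform in $\alpha,\beta$ and in the levels $l,m$. This is where one invokes the exponential decay of correlations for the Gauss map with respect to its invariant measure, applied to the (bounded-variation, in fact indicator) observables $\chi(\,\cdot\le\alpha)$ and $\chi(\,\cdot\le\beta)$; the uniformity in the thresholds follows because the relevant Lasota--Yorke / transfer-operator bounds depend only on the total variation, which for an indicator of an interval is bounded by $2$. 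Everything else — substituting Lemma~\ref{le:Ilm}, collecting powers of $\varphi$, and matching against the formula for $\|\mathfrak N(L,\cdot,\cdot)\|_1$ from~\eqref{eq:NF1new} — is bookkeeping.
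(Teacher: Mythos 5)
Your overall architecture coincides with the paper's: Cauchy--Schwarz for the lower bound, substitution of the estimate~\eqref{eq:FN2:Ilm} of Lemma~\ref{le:Ilm} into~\eqref{eq:FN2:1}, and a quantitative mixing estimate for the joint probability $P\left(a_l\le\varphi^4(l),\,a_m\le\varphi^4(m)\right)$. The gap is in how you justify that mixing estimate. You assert the \emph{multiplicative} form $\int_0^1\frac{da}{1+a}\chi(a_l\le\alpha)\chi(a_m\le\beta)=\ln(1+\alpha)\ln(1+\beta)\left(1+O(e^{-(m-l)/C})\right)$ and attribute it to exponential decay of correlations for BV observables. But the Lasota--Yorke/spectral-gap machinery only yields an \emph{additive} error: writing the correlation as $\int(\mathcal{L}^{m-l}\chi_{[0,\alpha]})\,\chi_{[0,\beta]}\,dm$, one gets $P(a_l\le\alpha)P(a_m\le\beta)+O\left(\|\chi_{[0,\alpha]}\|_{BV}\,\beta\,e^{-\lambda(m-l)}\right)$, i.e.\ an error of size $O(\varphi^4(m)e^{-\lambda(m-l)})$, not $O(\varphi^4(l)\varphi^4(m)e^{-\lambda(m-l)})$. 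When $\varphi^4(l)$ is small this additive error dwarfs the product of the two probabilities, so the relative-error form does not follow. Concretely, your off-diagonal exponential contribution would be of size $\sum_{l<m}\varphi^2(l)\varphi^6(m)e^{-(m-l)/C}$ rather than the $\sum_{l,m}\varphi^6(l)\varphi^6(m)e^{-(m-l)/C}$ appearing in $R_L$; since $\varphi^2\ge\varphi^6$, this is strictly larger, and the lemma as stated is not obtained. (That weaker bound would still be $O(\|\mathfrak N(L,\cdot,\cdot)\|_1)$ and hence enough for the downstream Lemma~\ref{le:frakN:2}, but it does not prove the present statement.)

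The input actually needed is the $\psi$-mixing of the continued-fraction digits, i.e.\ Gordin's theorem~\cite{MR0245544}, which is what the paper invokes as~\eqref{eq:gauss-mixing}: the error there is bounded by a constant times the \emph{product} $P(a_l\le 1/\alpha)\,P(a_m\le\beta)$ times $e^{-\lambda(m-l)}$. This stronger statement applies to events measurable with respect to the digit $\sigma$-algebra, which forces the small additional step the paper takes and your argument omits: replace the threshold $\varphi^4(l)$ by $1/s$ with $s$ the integer satisfying $\frac1{s+1}\le\varphi^4(l)<\frac1s$, so that $\{a_l\le 1/s\}$ is a union of cylinders, and absorb the discrepancy $\frac1s-\varphi^4(l)=O(\varphi^8(l))$ into the error. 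The rest of your bookkeeping --- the diagonal term giving $\|\mathfrak N(L,\cdot,\cdot)\|_1$, the $O(\varphi^2(l))$ correction from Lemma~\ref{le:Ilm}, and the matching of the leading double sum against $\|\mathfrak N(L,\cdot,\cdot)\|_1^2$ via~\eqref{eq:NF1new} --- is sound.
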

\begin{proof}
  The lower bound on $\|\mathfrak N(L,\cdot,\cdot)\|_2^2$ is a
  consequence of the Cauchy-Schwarz inequality.\\
  To prove the upper bound, we substitute~(\ref{eq:FN2:Ilm})
  into~(\ref{eq:FN2:1}) to get
  \begin{multline}
    \label{eq:FN2:10}
    \|\mathfrak N(L,\cdot,\cdot)\|_2^2=\|\mathfrak
    N(L,\cdot,\cdot)\|_1\\+ 8\sum_{0\leq l<m\leq L}
    \varphi^2(l)\varphi^2(m)\, P\left(\,a_l\leq \varphi^4(l),\,a_m\le
      \varphi^4(m)\,\right) \left(1+O(\varphi^2(l))\right),
  \end{multline}
  where we have defined
  \begin{equation*}
    P(a_l\leq \alpha, a_m\leq \beta):=\frac1{\ln 2}
    \int_0^1\frac{da}{1+a}\chi(a_l\leq \alpha)
    \chi(a_m\leq \beta)
  \end{equation*}
  i.e. $P(a_l\leq \alpha, a_m\leq \beta)$ is the probability (with
  respect to the invariant measure of the Gauss map) that $a_m<\beta$
  and $a_l<\alpha$. It is controlled by Gordin's Theorem
  (see~\cite{MR0245544}, Theorem 3 and remarks following this
  theorem). By Gordin's Theorem, there exists two constants $A>0$ and
  $\lambda>0$ such that, for all $0\leq l<m<\infty$ and for any
  integer $\alpha>0$ and any real number $\beta>0$, one has
  \begin{multline}
    \label{eq:gauss-mixing}
    |P(a_l\leq 1/\alpha, a_m\leq \beta)-
    P(a_l\leq 1/\alpha)P( a_m\leq \beta)|\\
    \leq A\,P(a_l\leq 1/\alpha)\,P( a_m\leq \beta)\, e^{-\lambda
      (m-l)}
  \end{multline}
  where we have defined
  \begin{equation*}
    P(a_l\leq \alpha):=
    \frac1{\ln 2}\int_0^1\frac{da}{1+a}\chi(a_l\leq \alpha).
  \end{equation*}
  Now, choose a positive integer $s$ so that
  \begin{equation*}
    \frac1{s+1}\leq \varphi^4(l)<\frac1s.
  \end{equation*}
  Note that, as $\varphi(l)<1/2$, such a positive integer exists, and
  that
  \begin{equation}
    \label{eq:FN2:12}
    \frac1s-\varphi^4(l)=O(\varphi^8(l)).
  \end{equation}
  Using~(\ref{eq:gauss-mixing}), we get
  \begin{equation*}
    \begin{split}
      P(\,a_l\leq \varphi^4(l),&\,a_m\leq \varphi^4(m)\,)\leq
      P(\,a_l\leq 1/s,\,a_m\leq \varphi^4(m)\,)\\
      &\leq P(\,a_l\leq 1/s\,)\,P(\,a_m\leq \varphi^4(m)\,)
      (1+Ae^{-\lambda(m-l)}).
    \end{split}
  \end{equation*}
  Using the definition of the invariant measure, we obtain
  \begin{equation*}
    \begin{split}
      P(\,a_m\leq \varphi^4(m)\,)&=P(a\le\varphi^4(m))=
      \frac1{\ln 2}\int_0^{\varphi^4(m)}\frac{da}{1+a}\\
      &=\frac{\ln(1+\varphi^4(m))}{\ln2}=\frac1{\ln
        2}\varphi^4(m)\,(1+O(\varphi^4(m))).
    \end{split}
  \end{equation*}
  In the same way,~(\ref{eq:FN2:12}) yields
  \begin{equation*}
    P(\,a_l\leq 1/s\,)=\frac1{s\,\ln 2}\,\,(1+O(1/s))=
    \frac1{\ln 2}\varphi^4(l)\,(1+O(\varphi^4(l))).  
  \end{equation*}  
  These two results imply that
  \begin{multline*}
    P(a_l\leq \varphi^4(l),\ a_m\leq \varphi^4(m))\\\leq
    \frac1{(\ln2)^2}\,\varphi^4(l)\,\varphi^4(m)
    \left(1+O\left(\varphi^4(l)+\varphi^4(m)+Ae^{-\lambda(m-l)}\right)\right).
  \end{multline*}
  Combining this estimate and~(\ref{eq:FN2:10}),
  recalling~\eqref{eq:NF1new}, we obtain the upper bound on
  $\|\mathfrak N(L,\cdot,\cdot)\|_2^2$ announced in
  Lemma~\ref{le:N2-gauss}. This completes the proof of
  Lemma~\ref{le:N2-gauss}.
\end{proof}
\noindent Now, we can complete the proof of Lemma~\ref{le:frakN:2} by
means of elementary estimates. Recall that by assumption of
Lemma~\ref{le:frakN:2}, $\sum_{l=0}^\infty\varphi^6(l)$
diverges. By~\eqref{eq:NF1new}, this implies that $\D\|\mathfrak
N(L,\cdot,\cdot)\|_1=\sum_{l=0}^LJ(l)\to \infty$ as $L\to\infty$.  So,
to prove that $\|\mathfrak N(L,\cdot,\cdot)\|_2^2=\|\mathfrak
N(L,\cdot,\cdot)\|_1^2(1+o(1))$ when $L\to\infty$, and, thus, to
complete the proof of Lemma~\ref{le:frakN:2}, it suffices to show that
\begin{gather}
  \label{eq:fin1}
  \lim_{L\to\infty}\frac{\sum_{l=0}^L J(l) \varphi^2(l)}
  {\sum_{l=0}^L J(l)}=0,\\
  \label{eq:fin2}
  \lim_{L\to\infty}\frac{\sum_{l,m=0}^L J(l)J(m) e^{-|l-m|/C}}
  {\left(\sum_{l,m=0}^L J(l)\right)^2}=0.
\end{gather}
As $\varphi(l)\to0$ and $\D\sum_{l=0}^LJ(l)\to
\infty$,~(\ref{eq:fin1})
is a standard result of Cesaro convergence.\\
As $J(m)$ is bounded uniformly in $m$,~(\ref{eq:fin2}) follows from
\begin{gather*}
  \frac{\sum_{l,m=0}^L J(l)J(m) e^{-|l-m|/C}} {\left(\sum_{l=0}^L
      J(l)\right)^2}\leq C\,\frac{\sum_{l=0}^L J(l)\sum_{m=0}^L
    e^{-|l-m|/C}} {\left(\sum_{l=0}^L J(l)\right)^2}\le
  \frac{C}{\sum_{l=0}^L J(l)}.
\end{gather*}
This completes the proof of Lemma~\ref{le:frakN:2}
\section{The proof of Theorem~\ref{thr:2a}}
\label{sec:proof-theor-refthr:2}
\noindent Let $g$ be as in Theorem~\ref{thr:2a}. We first prove
\begin{Le}
  \label{thr:1}
  Let $g:\R_+\to\R_+$ be a non increasing function such that
  \begin{equation*}
    \sum_{N\geq 1}g^4(N)<\infty.
  \end{equation*}
  Then, for almost
  all $a\in(0,1)$ and for all $b\in(-1/2,1/2]$, one has
  \begin{equation}
    \label{eq:18}
    \limsup_{N\to+\infty}\left(g(\ln N)\,
      \frac{|S(N,a,b)\,|}{\sqrt{N}}\right)<\infty.
  \end{equation}
\end{Le}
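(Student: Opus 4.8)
The plan is to combine the bound of Proposition~\ref{pro:est-s}, which is \emph{uniform in $b$}, with a Borel--Cantelli argument. The heart of the matter is that an upper bound valid simultaneously for all $b$ needs no anti-concentration (second moment) estimate, and this is precisely why the critical exponent is $4$ here, rather than the $6$ of Theorem~\ref{thr:2}.

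First I would fix an irrational $a$ (the rationals are Lebesgue-null) and an arbitrary $b\in(-1/2,1/2]$. For $N^-(L)\le N\le N^+(L)$ one has $\ln N\ge\ln N^-(L)$, so $g(\ln N)\le g(\ln N^-(L))$ as $g$ is non increasing, while $|S(N,a,b)|/\sqrt N\le M(L,a,b)\le C/(\sqrt{|b_L|}+\sqrt[4]{a_L})\le C\,a_L^{-1/4}$ by Proposition~\ref{pro:est-s}, with $C$ independent of $a,b$. A crude deterministic lower bound for $\ln N^-(L)$ then suffices: Lemma~\ref{le:1} gives $N^-(L)>1/(a_0a_1\cdots a_{L-1})$, and grouping factors in pairs and invoking~\eqref{eq:two-a-l} yields $a_0a_1\cdots a_{L-1}\le 2^{-\lfloor L/2\rfloor}$, whence $\ln N^-(L)\ge c_0L$ for an absolute $c_0>0$ and all $L\ge2$; consequently $g(\ln N^-(L))\le g(c_0L)$. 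Since the blocks $[N^-(L),N^+(L)]$ cover all large integers (by~\eqref{Npm}) with $L(N)\to\infty$, I would then obtain, for every $b$,
\begin{equation*}
  \limsup_{N\to+\infty}\left(g(\ln N)\,\frac{|S(N,a,b)|}{\sqrt N}\right)
  \le C\,\limsup_{L\to+\infty}\frac{g(c_0L)}{\sqrt[4]{a_L}},
\end{equation*}
the right-hand side being independent of $b$.

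It then remains to show that $\limsup_{L}g(c_0L)/\sqrt[4]{a_L}<\infty$ for almost every $a$, for which it is enough to prove that $a_L\ge g^4(c_0L)$ for all large $L$. Here I would use the Gauss map $T:a\mapsto\{1/a\}$ and its invariant probability measure $\mu(da)=\frac{da}{\ln2\,(1+a)}$, which is equivalent to Lebesgue measure on $(0,1)$, so that the statement may be proved $\mu$-almost surely. Because $a_L=T^La$ and $\mu$ is $T$-invariant, $a_L$ has law $\mu$, hence
\begin{equation*}
  \mu\bigl(\{a:\ a_L<g^4(c_0L)\}\bigr)=\frac{\ln\bigl(1+g^4(c_0L)\bigr)}{\ln2}\le\frac{g^4(c_0L)}{\ln2}.
\end{equation*}
Since $g$ is non increasing, $\sum_{N\ge1}g^4(N)<\infty$ implies $\sum_{L\ge1}g^4(c_0L)<\infty$ (the convergence counterpart of the first assertion of Lemma~\ref{le:varphi-g}), so by the Borel--Cantelli lemma the event $\{a_L<g^4(c_0L)\}$ occurs, for $\mu$-almost every $a$, only finitely often; for such $a$, $\limsup_L g(c_0L)/\sqrt[4]{a_L}\le1$, and together with the previous display this yields~\eqref{eq:18} for all $b$.

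The one step that requires real thought is this passage from $\sum_Ng^4(N)<\infty$ to the almost sure lower bound $a_L\ge g^4(c_0L)$ for large $L$ via Borel--Cantelli; the other ingredients---the deterministic lower bound on $N^-(L)$, the invariance of $\mu$ together with its equivalence to Lebesgue measure, and the implication $\sum_Ng^4(N)<\infty\Rightarrow\sum_Lg^4(c_0L)<\infty$---are all routine.
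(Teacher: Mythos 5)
Your proof is correct, and it follows the same overall strategy as the paper: (i) apply the uniform-in-$b$ upper bound $M(L,a,b)\le C\,a_L^{-1/4}$ from Proposition~\ref{pro:est-s}, (ii) establish an almost-sure lower bound of the form $a_L\ge g^4(c\,L)$ for large $L$, and (iii) control $\ln N^-(L)$ from below, then combine. Where you differ is in how the auxiliary steps (ii) and (iii) are established. For (ii), the paper simply cites Theorem~30 of Khinchin~\cite{MR98c:11008} to conclude $a_l\ge g^4(l)$ eventually a.e.; you instead carry out the underlying Borel--Cantelli argument directly, using the $T$-invariance of $\mu(da)=\tfrac{da}{\ln 2\,(1+a)}$ and $\mu(a<\epsilon)\le\epsilon/\ln 2$. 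These are the same thing in substance, but your version is self-contained. For (iii), the paper reuses Lemma~\ref{le:N-L} ($\ln N^\pm(L)=L(A+o(1))$ with $A>1$), an ergodic-theoretic statement holding only for a.e.\ $a$; you replace it with the cruder but deterministic bound $\ln N^-(L)\ge c_0 L$, obtained by pairing factors in $a_0\cdots a_{L-1}$ via~\eqref{eq:two-a-l}. This buys a small simplification — no need to invoke Birkhoff here — at the price of a worse constant $c_0$, which is harmless because the hypothesis $\sum g^4(N)<\infty$ is invariant under linear rescaling of the argument (the convergence analogue of the first assertion of Lemma~\ref{le:varphi-g}). The motivating remark — that the uniform-in-$b$ upper bound removes the need for second-moment estimates and is why the exponent drops from $6$ to $4$ — is accurate and is a nice observation, even though the paper doesn't spell it out.
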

\begin{proof}[Proof of Lemma~\ref{thr:1}]
  As $\D \sum_{N\geq 1}g^4(N)<\infty$, Theorem 30
  of~\cite{MR98c:11008} implies that, for almost all $a\in(0,1)$,
  there exists $L_0\in\N$ such that $a_l\geq g^4(l)$ for all $l\geq
  L_0$. Pick $L\geq L_0$. Using Proposition~\ref{pro:est-s}, we get
  \begin{equation*}
    \max_{N^-(L)\leq N\leq N^+(L)}g(\ln N)\,\frac{|S(N,a,b)|}{\sqrt{N}}\leq 
    C \frac{g(\ln N^-(L))}{g(L)}
  \end{equation*}
  as $g$ is a non increasing function. And now, as $g$ is a non
  increasing function,~\eqref{eq:18} follows from Lemma~\ref{le:N-L}.
  This completes the proof of Lemma~\ref{thr:1}.
\end{proof}
\noindent Now, Theorem~\ref{thr:2a} follows from
\begin{Pro}
  \label{pro:1}
  Let $g: \N\to \R_+$ be a non increasing function such that
  \begin{equation*}
    \sum_{N\geq 1}g^4(N)=\infty.
  \end{equation*}
  Then, for almost
  all $a\in(0,1)$ and all $b\in {\mathcal B}_a$, one has
  \begin{equation*}
    \limsup_{N\to+\infty}\left(g(\ln N)\, 
      \frac{|S(N,a,b)\,|}{\sqrt{N}}\right)=\infty.
  \end{equation*}
\end{Pro}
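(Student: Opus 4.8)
The plan is to combine the lower bound of Proposition~\ref{pro:est-s}, the asymptotics $\ln N^\pm(L)=L(A+o(1))$ of Lemma~\ref{le:N-L}, and a Borel--Cantelli argument for the Gauss map of the same flavour as the one used for Theorem~\ref{thr:2}; the new point is that, $B_a$ being countable, one cannot integrate in the variable $b$ and must instead exploit that a $B_a$-orbit of~\eqref{eq:bls} becomes, after finitely many steps, a function of $a$ alone. As in section~\ref{sec:almost-sure-growth} we may assume $g(l)\le1/2$ and $g(l)\to0$ (if $g$ is bounded below, replace it by $\min\{g(l),l^{-1/8}\}\le g(l)$, which is non-increasing, tends to $0$, and still has divergent fourth-power sum). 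Fix $M>0$, put $\delta_l=\bigl(g(2Al)/M\bigr)^4$ with $A>1$ as in Lemma~\ref{le:N-L}, and note that, by Proposition~\ref{pro:est-s} and the definition~\eqref{eq:M} of $M(L,a,b)$, it is enough to show that, for a.e.\ $a$ and every $b\in B_a$, infinitely many $l$ satisfy
  \begin{equation}
    \label{eq:plan-target}
    |b_l|\le a_l/2\quad\text{ and }\quad a_l<\delta_l .
  \end{equation}
  At such an $l$, $\sqrt{|b_l|}+\sqrt[4]{a_l}\le2\sqrt[4]{a_l}$, so $M(l,a,b)\ge c\,a_l^{-1/4}>c\,M/g(2Al)$, and since $g$ is non-increasing and $\ln N^+(l)\le2Al$ eventually, $\max_{N^-(l)\le N\le N^+(l)}g(\ln N)\,|S(N,a,b)|/\sqrt N\ge cM$; letting $M\to\infty$ gives the Proposition.

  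First I would describe the orbits of~\eqref{eq:bls} starting from $b\in B_a$. Writing $2b=ma+n$ with $ma+n\in(-1,1]$ (this determines $(m,n)\in\Z^2\setminus(2\Z+1)^2$ uniquely for $a$ irrational), the computation already used to prove Theorem~\ref{th:S2} shows that~\eqref{eq:bls} sends this datum to $b_1=\tfrac12(m_1a_1+n_1)$ with $m_1=-n$ and $n_1\equiv-m+(1-n)[1/a]\pmod2$ the integer placing $m_1a_1+n_1$ in $(-1,1]$, and that it preserves the property ``$(m,n)$ not both odd''. Since $m_1a_1+n_1\in(-1,1]$ yields $|m_{l+1}|=|n_l|\le a_l|m_l|+1$, and $|m_l|\le|m_0|+l$, and for a.e.\ $a$ one has $a_l<1/\bigl(2(|m_0|+l)\bigr)$ for infinitely many $l$ (Borel--Bernstein), one gets $|m_l|\le1$ for $l\ge L_0=L_0(a,b)$, after which $|m_{l+1}|\le a_l+1<2$ keeps $|m_l|\le1$. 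So, for $l\ge L_0$, $(m_l,n_l)\in\{(0,0),(0,1),(1,0),(-1,0)\}$ (the ``both odd'' pairs $(\pm1,\mp1)$ are forbidden, the dynamics staying in $B_{a_l}$), i.e.\ $b_l\in\{0,\tfrac12,\tfrac{a_l}2,-\tfrac{a_l}2\}$; and $b_l=\tfrac12$, i.e.\ $(m_l,n_l)=(0,1)$, forces $(m_{l+1},n_{l+1})=(-1,0)$, i.e.\ $b_{l+1}=-a_{l+1}/2$. Hence the value $\tfrac12$ never occurs at two consecutive indices, $\{l\ge L_0:|b_l|\le a_l/2\}=\{l\ge L_0:b_l\ne\tfrac12\}$ has lower density $\ge1/2$, and $(b_l)_{l\ge L_0}$ is a function of the finite datum $b_{L_0}\in\{0,\tfrac12,\pm\tfrac{a_{L_0}}2\}$ and of the digits $\bigl([1/a_j]\bigr)_{j\ge L_0}$.

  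Then I would run a Borel--Cantelli argument in $a$ under the Gauss measure $m_0(da)=\tfrac1{\ln2}\tfrac{da}{1+a}$. Freezing $(m,n)$, an integer $L_0$, and an admissible value of $b_{L_0}$, the above makes $l\mapsto\mathbf{1}(b_l\ne\tfrac12)$ a fixed $\mathcal F_{l-1}$-measurable function of $a$, with $\mathcal F_{l-1}=\sigma\bigl([1/a_{L_0}],\dots,[1/a_{l-1}]\bigr)$. Consider $E_l=\{a:a_l<\delta_l,\ b_l\ne\tfrac12\}$, contained in the set where~\eqref{eq:plan-target} holds at $l$. Bounded distortion of the Gauss map gives $P(a_l<\delta_l\mid\mathcal F_{l-1})\ge c\delta_l$ uniformly, whence (using that $\tfrac12$ is never consecutive, and Abel summation) $\sum_lP(E_l)\ge c\sum_l(1-P(b_l=\tfrac12))\delta_l\ge\tfrac c2\sum_l\delta_l-O(1)=+\infty$, since $\sum_l\delta_l=M^{-4}\sum_lg^4(2Al)=+\infty$ by Lemma~\ref{le:varphi-g}. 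Arguing as in sections~\ref{sec:analys-dynam-syst}--\ref{sec:proof-lemma-refl-1}, i.e.\ through Gordin's mixing bound~\eqref{eq:gauss-mixing} for the Gauss map, I would establish the matching second moment $\Vert\sum_{l\le L}\mathbf{1}_{E_l}\Vert_2=(1+o(1))\Vert\sum_{l\le L}\mathbf{1}_{E_l}\Vert_1$, and then Lemma~\ref{le:Zy-Po} shows that $m_0$-a.e.\ $a$ lies in infinitely many $E_l$; the exceptional set being a countable union of null sets (over $(m,n)$, over $L_0$, and over the finitely many types of $b_{L_0}$), this gives~\eqref{eq:plan-target} for a.e.\ $a$ and all $b\in B_a$.

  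The hard parts are the two structural facts about $B_a$-orbits in the middle step --- eventual trapping in $\{0,\tfrac12,\pm\tfrac{a_l}2\}$ and slaving of the ensuing dynamics to $a$ --- and, in the last step, the decoupling of ``$b_l$ small'' (a condition on the past of the Gauss orbit) from ``$a_l$ small'' (a condition on its current partial quotient), for which the mixing of the Gauss map is indispensable.
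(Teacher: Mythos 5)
Your overall architecture matches the paper's: you isolate the trapping phenomenon (which is the paper's Lemma~\ref{le:2}), reduce to showing that $a_l<\delta_l$ and $|b_l|\le a_l/2$ hold infinitely often, and run a first/second-moment argument followed by Lemma~\ref{le:Zy-Po}. The trapping analysis, the reduction via Proposition~\ref{pro:est-s} and Lemma~\ref{le:N-L}, the countable intersection over $(m,n,L_0,b_{L_0})$, and the first-moment estimate (bounded distortion together with the observation that $b_l=\tfrac12$ cannot occur at two consecutive indices, plus Abel summation) are all sound.

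The gap is in the second-moment step, and it is a real one. You claim to obtain the matching $\Vert\cdot\Vert_2=(1+o(1))\Vert\cdot\Vert_1$ estimate ``as in sections~\ref{sec:analys-dynam-syst}--\ref{sec:proof-lemma-refl-1}, through Gordin's mixing bound~\eqref{eq:gauss-mixing} for the Gauss map.'' But those sections crucially integrate over the free variable $b$ using the invariant family of densities of Theorem~\ref{th:inv-fam-den}; here $b$ is slaved to $a$, so that route is closed. And Gordin's estimate~\eqref{eq:gauss-mixing} only controls correlations of events of the form $\{a_l\le\alpha\}\cap\{a_m\le\beta\}$, whereas your $E_l=\{a_l<\delta_l,\,b_l\ne\tfrac12\}$ involves $b_l$, a function of the entire history of digit parities from $L_0$ to $l-1$; for $l<m$ the events $\{b_l\ne\tfrac12\}$ and $\{b_m\ne\tfrac12\}$ both look at overlapping stretches of the past and their decorrelation is not a consequence of~\eqref{eq:gauss-mixing}. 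One cannot bypass this by the trivial inclusion $E_l\cap E_m\subset\{a_l<\delta_l\}\cap\{a_m<\delta_m\}$: that drops the factors $P(b_l\ne\tfrac12)\,P(b_m\ne\tfrac12)$, which under the invariant measure converge to $\tfrac49$, so the resulting bound on $\Vert\cdot\Vert_2^2$ is off from $\Vert\cdot\Vert_1^2$ by a fixed constant $>1$ and Lemma~\ref{le:Zy-Po} then only gives positive (not full) measure. What is actually required is exponential mixing of the skew-product $(a_l,b_l)$, and this is precisely what the paper supplies by a genuinely new construction in section~\ref{sec:proof-proposition}: it encodes $(a_l,b_l)$ as a single variable $\tilde a_l$ of a piecewise-monotone map $\tilde T$ on $X=[0,3]$, checks that $\tilde T$ is a covering weighted system (Lemma~\ref{le:3}), identifies the invariant density, and invokes Theorem 3.1 of~\cite{MR1658635} for the needed decay of correlations~\eqref{eq:5}. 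Your ``hard part'' paragraph correctly flags the decoupling issue, but the proposed resolution (Gauss-map mixing alone) does not deliver it.
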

\noindent Indeed, if $ \sum_{N\geq 1}g^4(N)=\infty $, by
Proposition~\ref{pro:1}, for almost all $a$, as ${\mathcal B}_a$ is dense
in $(-1/2,1/2]$, the set
\begin{gather*}
  \tilde {\mathcal B}_a:=\left\{b\in(-1/2,1/2];\
    \limsup_{N\to+\infty}\left(g(\ln N)\,
      \frac{|S(N,a,b)\,|}{\sqrt{N}}\right)=+\infty\right\}
\end{gather*}
is dense in $(-1/2,1/2]$. As $b\mapsto S(N,a,b)$ is continuous and as
\begin{equation*}
  \tilde {\mathcal B}_a=\bigcap_{K\geq 1}\bigcap_{M\geq
    1}\bigcup_{N\geq M}\left\{b\in(-1/2,1/2];\ 
    g(\ln N)\, \frac{|S(N,a,b)\,|}{\sqrt{N}}> K\right\},
\end{equation*}
$\tilde {\mathcal B}_a$ is a dense $G_\delta$-set. This completes the
proof of Theorem~\ref{th:inv-fam-den} once Proposition~\ref{pro:1} is
proved.
\subsection{Proof of Proposition~\ref{pro:1}}
\label{sec:proof-proposition}
Proposition~\ref{pro:1} follows from
\begin{Le}
  \label{le:2}
  For $(a_0,b_0)$, define the inductive sequence $(a_n,b_n)$ by
  formulas~\eqref{eq:als} and~\eqref{eq:bls}.\\
  Then, for almost every $a$ and all $b\in\mathcal{B}_a$, there exists
  $j_0\geq1$ such that, for $j\geq j_0$, one has
  \begin{equation}
    \label{eq:9}
    b_j\in\left\{0,\frac12,-\frac{a_j}{2}\right\}.
  \end{equation}
\end{Le}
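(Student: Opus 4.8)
The plan is to exploit the arithmetic structure of $\mathcal B_a$: each $b\in\mathcal B_a$ can be written $b=\tfrac12(ma+n)$ with $(m,n)\in\Z^2$ not both odd, and the point is that the $b$-dynamics~\eqref{eq:bls} preserves a representation of this type with controllably small coefficients. So, first I would prove by induction on $l$ (for $a$ irrational, which is among the a.e.\ conditions) that $b_l=\tfrac12(m_la_l+n_l)$ with $(m_l,n_l)\in\Z^2\setminus(2\Z+1)^2$, where $m_0=m$ and $n_0=n-2k$ with $k$ the integer such that $\tfrac12(ma+n)-k\in(-1/2,1/2]$, and where, writing $q_{l+1}=[1/a_l]$ and using $1/a_l=q_{l+1}+a_{l+1}$,
\begin{equation*}
  m_{l+1}=-n_l,\qquad n_{l+1}\equiv(1-n_l)q_{l+1}-m_l\pmod 2,
\end{equation*}
the integer $n_{l+1}$ being the one of the prescribed parity for which $b_{l+1}=\tfrac12(m_{l+1}a_{l+1}+n_{l+1})\in(-1/2,1/2]$. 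The inductive step is exactly the substitution used in the proof of Theorem~\ref{th:S2}: insert $b_l=\tfrac12(m_la_l+n_l)$ into $-b_l/a_l+\tfrac12q_{l+1}$ and reduce modulo $1$. That the coefficients stay out of $(2\Z+1)^2$ follows from the induced recursion modulo $2$, namely $(m_{l+1},n_{l+1})\equiv(n_l,\,m_l+(1+n_l)q_{l+1})$: checking this on the three residue classes $(0,0),(0,1),(1,0)$ shows that the image is never $(1,1)$.

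Second, I would show that $|n_l|$ is eventually $\le1$. Since $|b_{l+1}|\le1/2$ and $m_{l+1}=-n_l$, one has $|n_{l+1}|\le1+a_{l+1}|m_{l+1}|=1+a_{l+1}|n_l|$; iterating and grouping the $a$'s in consecutive pairs, the bound $a_la_{l-1}<1/2$ from~\eqref{eq:two-a-l} yields $|n_l|\le C_0+(a_la_{l-1}\cdots a_0)|n_0|$ with $C_0$ an absolute constant, hence $|n_l|\le C_0$ once $l$ is large (how large depends on $b$ only through $|n_0|$). Now, for almost every $a$ the partial quotients $[1/a_l]$ are unbounded, equivalently $\liminf_l a_l=0$; this follows from the ergodicity of the Gauss map for the measure $\tfrac1{\ln2}\,\tfrac{da}{1+a}$ already used in the paper, since $\{a<1/C_0\}$ has positive measure. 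Picking a large $l$ with $a_{l+1}<1/C_0$ gives $|n_{l+1}|\le1+a_{l+1}C_0<2$, so $|n_{l+1}|\le1$; and $\{|n|\le1\}$ is forward invariant since $|n_{m+1}|\le1+a_{m+1}<2$ whenever $|n_m|\le1$. Thus $|n_l|\le1$ and $|m_l|\le1$ for all large $l$. Enumerating the pairs in $\{-1,0,1\}^2$ that are not both odd and give $\tfrac12(ma_l+n)\in(-1/2,1/2]$, only $(0,0),(0,1),(1,0),(-1,0)$ survive, i.e.\ $b_l\in\{0,\tfrac12,\tfrac{a_l}2,-\tfrac{a_l}2\}$ for all large $l$.

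Third, I would check directly from the recursion that $\{0,\tfrac12,-\tfrac{a_l}2\}$ — the states $(0,0),(0,1),(-1,0)$ — is forward invariant, whereas the remaining state $(1,0)$, i.e.\ $b_l=\tfrac{a_l}2$, maps into $\{0,\tfrac12\}$: for $(0,0)$ one gets $m_{l+1}=0,\ n_{l+1}=q_{l+1}\bmod2$; for $(0,1)$, $m_{l+1}=-1,\ n_{l+1}=0$; for $(-1,0)$, $m_{l+1}=0,\ n_{l+1}=(q_{l+1}+1)\bmod2$; for $(1,0)$, $m_{l+1}=0,\ n_{l+1}=(q_{l+1}-1)\bmod2$; and in every case the image lies in $\{0,\tfrac12,-\tfrac{a_{l+1}}2\}$. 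Hence, once $l$ is large enough that $b_l\in\{0,\tfrac12,\pm\tfrac{a_l}2\}$, at most one further step lands the trajectory in the invariant set, after which $b_j\in\{0,\tfrac12,-\tfrac{a_j}2\}$ for every $j\ge j_0$; this gives the required $j_0$ (depending on $b$). The two places calling for care are the first step — tracking how the reduction modulo $1$ acts on $n_l$ and verifying the parity really is conserved — and, in the second step, making sure the \emph{a priori} bound $|n_l|\le C_0$ is already in force before one invokes a small partial quotient; the argument is otherwise elementary.
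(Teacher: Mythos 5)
Your proposal is correct and follows essentially the same route as the paper: both arguments track integer data attached to $b_l$ (you write $b_l=\tfrac12(m_la_l+n_l)$ where the paper writes $b_j=\tfrac12(n_ja_j-[n_ja_j]-\varepsilon_j)$, an equivalent bookkeeping), prove the pair of integers stays out of the odd--odd class, bound the integer coefficient via a recursion of the form $|n_{l+1}|\leq 1+a_{l+1}|n_l|$ combined with $a_la_{l-1}<1/2$, invoke that for a.e.\ $a$ the orbit $(a_l)$ gets arbitrarily small (you use recurrence/ergodicity of the Gauss map, the paper cites Khinchin's Theorem 30 --- interchangeable here), enumerate the surviving states $\{0,\tfrac12,\pm\tfrac{a_l}{2}\}$, and then observe that one further iteration lands in the forward-invariant set $\{0,\tfrac12,-\tfrac{a_l}{2}\}$. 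The transition table you compute agrees with the paper's display~\eqref{eq:2}.
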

\noindent and
\begin{Pro}
  \label{pro:2}
  Let $g: \N\to \R_+$ be a non increasing function such that
  \begin{equation*}
    \sum_{N\geq 1}g^4(N)=\infty.
  \end{equation*}
  Then, for almost all $a\in(0,1)$ and $b\in\{0,1/2,-a/2\}$, one has
  \begin{equation}
    \label{eq:8}
    \limsup_{N\to+\infty}\left(g(\ln N)\, 
      \frac{|S(N,a,b)\,|}{\sqrt{N}}\right)=\infty.
  \end{equation}
\end{Pro}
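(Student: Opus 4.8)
The plan is to run, with the exponent $4$ in place of $6$, the argument of
  Section~\ref{sec:proof-implication--1}; the new feature is that, $b$ being one of the
  three distinguished values, the size of $b_l$ is automatically controlled along the
  orbit and only its \emph{combinatorial type} has to be followed. We may assume
  $g(l)\le1/2$ for all $l$ (replacing $g$ by $\min(g,1/2)$ leaves $\sum g^4=\infty$
  unchanged and only lowers the left hand side of~\eqref{eq:8}). Applying
  Lemma~\ref{le:varphi-g} to $f=g^4$ produces a non increasing
  $\varphi:\R_+\to(0,1/2]$ with $\varphi(x)\to0$, with $\sum_{l\ge1}\varphi^4(l)=\infty$,
  and with $r(x):=\varphi(x)/g(2Ax)$ non increasing and tending to $0$, $A$ being the
  constant of~\eqref{eq:4} (take $\varphi=u^{1/4}\,g(2A\,\cdot\,)$). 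Arguing as in
  Section~\ref{sec:proof-implication--1}, it then suffices to prove that, for almost every
  $a\in(0,1)$,
  \begin{equation}
    \label{eq:pro2-io}
    a_l\le\varphi^4(l)\ \text{ and }\ b_l\neq\tfrac12\qquad\text{for infinitely many }l.
  \end{equation}
  Indeed, at such an index $l=L$ one has $b_L\in\{0,-a_L/2\}$, so
  $\sqrt{|b_L|}+\sqrt[4]{a_L}\le 2\sqrt[4]{a_L}\le 2\varphi(L)\le c$ for $L$ large, and
  Proposition~\ref{pro:est-s} then gives $M(L,a,b)\ge(2C\varphi(L))^{-1}$; combining this
  with the monotonicity of $g$ and Lemma~\ref{le:N-L} yields
  $\max_{N^-(L)\le N\le N^+(L)}g(\ln N)|S(N,a,b)|/\sqrt N\ge C\,g(2AL)/\varphi(L)=C/r(L)\to\infty$.

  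First I would record the elementary dynamics of $b_l$. From~\eqref{eq:bls}, inspecting
  the three cases $b_{l-1}\in\{0,\tfrac12,-a_{l-1}/2\}$ shows that $b_l$ again lies in
  $\{0,\tfrac12,-a_l/2\}$, and that its type $\tau_l\in\{Z,H,M\}$ (according as $b_l$ is
  $0$, $\tfrac12$, or $-a_l/2$) evolves by the rule $Z\mapsto Z$ or $H$, $H\mapsto M$,
  $M\mapsto Z$ or $H$, the alternative being decided \emph{only} by the parity of
  $[1/a_{l-1}]$. Two consequences matter: (i) $\tau_l$ is a function of $\tau_0$ and of the
  parities of $[1/a_0],\dots,[1/a_{l-1}]$ alone, hence $\{b_l\neq\tfrac12\}$ is measurable
  with respect to the first $l$ partial quotients of $a$; and (ii) $\tau_l=H$ forces
  $\tau_{l+1}=M\neq H$, so the events $\{b_l=\tfrac12\}$ and $\{b_{l+1}=\tfrac12\}$ are
  disjoint for every $l$.

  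Next I would apply the second moment method of Section~\ref{sec:proof-lemma-refl-1} to
  $\mathfrak N^\sharp(L,a):=\sum_{l=0}^{L}\chi\bigl(a_l\le\varphi^4(l)\bigr)\,\chi\bigl(b_l\neq\tfrac12\bigr)$,
  taken with respect to the Gauss measure $d\mu=\tfrac1{\ln2}\tfrac{da}{1+a}$ on $(0,1)$
  (equivalent to Lebesgue, so ``a.e.'' is unambiguous). For the first moment one conditions
  on the first $l$ partial quotients of $a$: by (i) this fixes $\{b_l\neq\tfrac12\}$, while
  $\{a_l\le\varphi^4(l)\}$ is a $G^{-l}$-event and the Gauss--Kuzmin--L\'evy estimate gives
  $\mu(a_l\le\varphi^4(l)\mid\,\cdot\,)=\tfrac1{\ln2}\varphi^4(l)(1+o(1))$ uniformly, so
  $\Vert\mathfrak N^\sharp(L,\cdot)\Vert_1=\tfrac1{\ln2}\sum_{l=0}^{L}\varphi^4(l)\,\mu\bigl(b_l\neq\tfrac12\bigr)(1+o(1))$;
  using (ii) and that $\varphi^4$ is non increasing, the pairing of the consecutive indices
  $2j,2j+1$ gives $\sum_{l\le L}\varphi^4(l)\mu(b_l=\tfrac12)\le\tfrac12\sum_{l\le L}\varphi^4(l)+\varphi^4(0)$,
  whence $\Vert\mathfrak N^\sharp(L,\cdot)\Vert_1\to\infty$. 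For the second moment the cross
  term at $l<m$ is, after conditioning out $a_m$ as above,
  $\asymp\varphi^4(m)\,\mu\bigl(a_l\le\varphi^4(l),\,b_l\neq\tfrac12,\,b_m\neq\tfrac12\bigr)$;
  the level-$l$ data see only the first $l+1$ partial quotients, whereas by (i)
  $\{b_m\neq\tfrac12\}$ sees the first $m$, and one decorrelates them by the exponential
  mixing of the Gauss map (Gordin's theorem, \cite{MR0245544}), exactly as in
  Lemmas~\ref{le:frakN:2} and~\ref{le:N2-gauss}. One then obtains
  $\Vert\mathfrak N^\sharp(L,\cdot)\Vert_2^2=\Vert\mathfrak N^\sharp(L,\cdot)\Vert_1^2(1+o(1))$
  by the Ces\`aro and geometric-series estimates that close Section~\ref{sec:proof-lemma-refl-1},
  so Lemma~\ref{le:Zy-Po} together with the monotonicity of $L\mapsto\mathfrak N^\sharp(L,a)$
  gives, as in Lemma~\ref{le: infty-of-conditions}, that $\mathfrak N^\sharp(L,a)\to\infty$
  for $a$ in a set of full $\mu$-measure; this is~\eqref{eq:pro2-io}. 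Taking the union of the
  exceptional $\mu$-null sets over the three initial types $\tau_0$ (that is, over
  $b_0\in\{0,\tfrac12,-a/2\}$) finishes the proof.

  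The main obstacle is precisely this decorrelation: in Theorem~\ref{thr:2} the constraint
  on $b_l$ was handled by the fibre transfer operators $P_{a_l}$ of
  Section~\ref{sec:analys-dynam-syst}, i.e.\ by \emph{averaging over} $b$, whereas here $b$
  is frozen and enters solely through the parity string
  $\bigl([1/a_0]\bmod2,\dots,[1/a_{l-1}]\bmod2\bigr)$; consequently $\{b_m\neq\tfrac12\}$ is
  not a tail event and overlaps, in the partial-quotient coordinates, with the level-$l$
  events. The remedy is to regard $(\tau_l)$ as an irreducible aperiodic finite-state factor
  of the exponentially mixing Gauss map and, conditionally on the first $l+1$ partial
  quotients, to express $\{b_m\neq\tfrac12\}$ as a bounded functional of $\tau_{l+1}$ and of
  the \emph{subsequent} Gauss iterates, so that the mixing applies with a genuine gap $m-l$
  and yields $\mu(b_m\neq\tfrac12\mid\text{first }l+1\text{ p.q.})=p_\infty+O(\theta^{\,m-l})$
  with $p_\infty\in(0,1)$ and $\theta<1$. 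Once this is in hand, the remainder is a routine
  transcription of Sections~\ref{sec:almost-sure-growth} and~\ref{sec:proof-lemma-refl-1}
  with $6$ replaced by $4$.
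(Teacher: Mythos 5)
Your global strategy matches the paper's: reduce to showing that, for a.e.\ $a$, there are infinitely many $l$ with $a_l\le\varphi^4(l)$ and $b_l$ small, set up the counting function $\mathfrak N^\sharp$, apply the second-moment method through Lemma~\ref{le:Zy-Po}, and combine with Proposition~\ref{pro:est-s} and Lemma~\ref{le:N-L}. Your observation that for $b_l\in\{0,-a_l/2\}$ one has $\sqrt{|b_l|}+\sqrt[4]{a_l}\le 2\sqrt[4]{a_l}$ is correct, as is the pairing device for the first moment. You have also correctly identified the central difficulty: $b$ is frozen, the type $\tau_l$ is a deterministic functional of the parity string $([1/a_0],\dots,[1/a_{l-1}])$, and so the event $\{b_m\ne\tfrac12\}$ is \emph{not} separated in partial-quotient coordinates from the level-$l$ data; plain Gordin mixing for the Gauss map does not directly decorrelate them.

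The gap is in the ``remedy''. You assert, conditionally on the first $l+1$ partial quotients, that $\mu(b_m\ne\tfrac12\mid\cdot)=p_\infty+O(\theta^{m-l})$ because $(\tau_l)$ is an ``irreducible aperiodic finite-state factor of the exponentially mixing Gauss map''. But $(\tau_l)$ is not a Markov chain under the Gauss measure: its transitions are driven by the parities of $[1/a_l]$, which are themselves a dependent process. That the joint process $(a_l,\tau_l)$ mixes exponentially, with a well-defined equilibrium frequency $p_\infty\in(0,1)$ for $\{\tau=H\}$, is precisely the nontrivial ingredient that must be supplied, and it is not a corollary of the Gauss map's mixing by mere invocation of a finite factor. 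The paper supplies it by encoding $(a_l,\tau_l)$ as a single piecewise-monotone interval map $\tilde T$ on $X=[0,3]$ with reference measure $d\nu$ (the Gauss density tripled across the three type-slabs), proving it is a covering weighted system (Lemma~\ref{le:3}), and applying Theorem~3.1 of~\cite{MR1658635} to obtain existence of the invariant density (which turns out to equal $1$) and exponential decay of correlations. This also yields the explicit value of the asymptotic density of good indices and makes the second-moment computation clean. Note too that the paper counts only $\{b_l=0\}$ rather than $\{b_l\ne\tfrac12\}$: since $\varphi(l)<1/2$, the condition $\tilde a_l\le\varphi^4(l)<1$ automatically forces $\tilde a_l\in(0,1)$, i.e.\ $b_l=0$, so the desired count is the single scalar inequality $\tilde a_l\le\varphi^4(l)$ on $X$. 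Your more permissive count is also admissible for the lower bound, but it does not avoid the need for the skew-product mixing; until you prove a statement equivalent to Lemma~\ref{le:3}, the decorrelation step is a genuine missing piece.
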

\noindent Indeed, let $\mathcal{A}_0$ be the set of total measure of
$a$'s defined by Lemma~\ref{le:2}. For $p\in\N$, let $g_p:\R^+\to\R^+$
be the function $g_p(x)=g(x+p)$. If $\sum_{N\geq 1}g^4(N)=\infty$
then, for any $p\in\N$, one has $\sum_{N\geq 1}g_p^4(N)=\infty$. Let
$\mathcal{A}^p$ be the set of total measure of $a$'s defined by
Proposition~\ref{pro:2} where the function $g$ is replaced by the
function $g_p$.\\
If $G$ denotes the Gauss map (see~(\ref{eq:als})), the set
$\D\mathcal{A}_0\cap\bigcap_{p,l\geq0}G^{-l}(\mathcal{A}^p)$ is of
total measure. For $a$ in this set and $b\in\mathcal{B}_a$, there
exists $j_0$ even such that~(\ref{eq:9}) is satisfied and~(\ref{eq:8})
is satisfied for $(a_{j_0},b_{j_0})$ and $g$ replaced by any
$g_p$. Applying the renormalization formula~(\ref{eq:exa-ren-for})
$j_0$ times, we see that
\begin{equation*}
  S(N,a,b)=C_{j_0}\,S(N_{j_0},a_{j_0},b_{j_0})+O(1)
\end{equation*}
where $\sqrt{a_0\cdots a_{j_0}}|C_{j_0}|=1$ and $N_{j_0}=N_{j_0}(N)$
is defined in~(\ref{eq:ab}) and satisfies $N_{j_0}\sim a_0\cdots
a_{j_0}N$ when $N\to+\infty$. Hence
\begin{equation*}
  \frac{|S(N,a,b)\,|}{\sqrt{N}}\equ_{N\to+\infty}
  \frac{|S(N_{j_0},a_{j_0},b_{j_0})\,|}{\sqrt{N_{j_0}}}.
\end{equation*}
Moreover, for $p_0\geq|\ln\sqrt{a_0\cdots a_{j_0}}|+1$ and $N$
sufficiently large, one has $g_{p_0}(\ln N_{j_0}(N))\leq g(\ln
N)$. Finally, noticing that when $N$ goes to $\infty$ running through
all the integers, $N_{j_0}=N_{j_0}(N)$ does so too, we obtain
\begin{equation*}
  \limsup_{N\to+\infty}\left(g(\ln N)\,
    \frac{|S(N,a,b)\,|}{\sqrt{N}}\right)\geq\limsup_{N\to+\infty}
  \left(g_{p_0}(\ln N)\,
    \frac{|S(N,a_{j_0},b_{j_0})\,|}{\sqrt{N}}\right)=\infty.
\end{equation*}
So we have proved that Proposition~\ref{pro:2} and Lemma~\ref{le:2}
imply Proposition~\ref{pro:1}.\\[1mm]
Proposition~\ref{pro:2} is proved in
section~\ref{sec:proof-proposition}. We now turn to the proof of
Lemma~\ref{le:2}.
\begin{proof}[Proof of Lemma~\ref{le:2}] Pick $a=a_0\in(0,1)$
  arbitrary and let $b_0\in {\mathcal B}_a$. One can represent $b_0$
  as
  \begin{equation*}
    b_0=\frac12\,(n_0\,a_0-[n_0\,a_0]-\varepsilon_0),\quad n_0\in\Z,\quad
    \varepsilon_0\in\{0,1\}.
  \end{equation*}
  Computing $b_1$ from $b_0$ by formula~\eqref{eq:ab}, one obtains
  \begin{equation}\label{eq:b0b1}
    b_1=\left\{\frac12\,([n_0a_0]+\varepsilon_0)\,a_1+
      \frac12\,\left(([n_0a_0]+\varepsilon_0+1)\,\left[\frac1{a_0}\right]-n_0\right)
    \right\}_0.
  \end{equation}
  Therefore,
  \begin{gather*}
    b_1=b_1(b_0)=\frac12 (n_1\,a_1-[n_1\,a_1]-\varepsilon_1),\\
    n_1=[a_0\,n_0]+\varepsilon_0,\quad\quad \varepsilon_1\in\{0,1\}.
  \end{gather*}
  Hence, we can define $(b_j)_{j\geq0}$ by formula~\eqref{eq:bls} and
  represent it as above as
  \begin{gather*}
    b_j=\frac12\,(
    n_j\,a_j-[n_j\,a_j]-\varepsilon_j),\\
    n_j=[a_{j-1}\,n_{j-1}]+\varepsilon_{j-1},\quad
    \varepsilon_j\in\{0,1\}.
  \end{gather*}
  Note that, if $n_{j-1}\in\{-1,0,1\}$ then
  $n_j\in\{-1,0,1\}$.\\ \\
  Let $\mathcal{Z}_a=\frac12((2\Z+1)a+(2\Z+1))$. We note that
  (see~\eqref{eq:b0b1})
  \begin{equation*}
    b_{j+1}\in {\mathcal Z}_{a_{j+1}}\quad \Longleftrightarrow\quad
    b_j\in {\mathcal Z}_{a_j}.
  \end{equation*}
  So, for $b\in {\mathcal B}_a$, for any $j\geq0$,
  $b_j\not\in{\mathcal Z}_{a_j}$.\\
  Consider now the sequence $(\beta_j)_{j\geq0}$ defined by
  \begin{equation*}
    \beta_0=|n_0|,\quad \beta_{j+1}=a_j\,\beta_j+1\text{ for
    }j\geq0.
  \end{equation*}
  One checks that, for all $j\geq0$, one has $-\beta_j\leq n_j
  \leq\beta_j$. Moreover, using~\eqref{eq:two-a-l}, we get
  \begin{equation*}
    0\leq\beta_j\leq  1+a_{j-1}\cdots a_0\,|n_0|+4a_{j-1},\quad j\ge2.
  \end{equation*}
  Theorem 30 of~\cite{MR98c:11008} implies that, for almost every $a$,
  there exists a subsequence of $(a_j)_j$ that tends to
  $0$. Therefore, we see that, for almost every $a$, for some $j_0$
  sufficiently large, one has $n_{j_0}\in\{-1,0,1\}$. But then, for
  all $j\ge j_0$, $n_{j}\in\{-1,0,1\}$. As $b_j\not \in{\mathcal
    Z}_{a_j}$ $\forall j\ge 0$, the last observation implies that for
  almost any $a$ for all $j$ sufficiently large
  \begin{equation*}
    b_j\in\left\{0,\frac12,\frac{a_j}{2},-\frac{a_j}{2}\right\}.
  \end{equation*}
  Consider the mapping $b\mapsto b_1$, defined by~\eqref{eq:ab}. We
  have
  \begin{equation}
    \label{eq:2}
    \begin{split}
      b_1(0)=
      \begin{cases}
        0\text{ if }\left[\frac1{a_0}\right]\text{ is even}\\
        \frac12\text{ if }\left[\frac1{a_0}\right]\text{ is odd}
      \end{cases},\quad b_1\left(\frac12\right)=-\frac{a_1}2,\\
      b_1\left(\frac{a_0}2\right)= b_1\left(-\frac{a_0}2\right)=
      \begin{cases}
        \frac12\text{ if }\left[\frac1{a_0}\right]\text{ is even}\\
        0\text{ if }\left[\frac1{a_0}\right]\text{ is odd}
      \end{cases}.
    \end{split}
  \end{equation}
  So, for almost all $a$, for all $j$ sufficiently large, one has $
  b_j\in\left\{0,1/2,-a_j/2\right\}$. This completes the proof of
  Lemma~\ref{le:2}.
\end{proof}
%
% \noindent Actually the reduction done above does also imply that
% Proposition~\ref{pro:2} and Proposition~\ref{pro:1} follow from the
% fact that, for almost all $a$,
% %
% \begin{equation*}
%   \sum_{N\geq 1}g^4(N)=\infty \quad\Longrightarrow \quad
%   \limsup_{N\to+\infty}\left(g(\ln N)\, 
%     \frac{|S(N,a,0)\,|}{\sqrt{N}}\right)=\infty.
% \end{equation*}
% %
% This has been proved in Theorem 2 of~\cite{MR0563894}.\\
% %
% We will nevertheless give an independent proof of this result along
% the lines similar to those developed in
% section~\ref{sec:almost-sure-growth}.
%
\subsection{Proof of Proposition~\ref{pro:2}}
\label{sec:proof-proposition}
For given $(a_0,b_0)$, define the $(a_n,b_n)$ by
formulas~\eqref{eq:als} and~\eqref{eq:bls}. Recall that for all
$a_0\in(0,1)$ and all $b_0\in{\mathcal B}_{a_0}$, one has $b_j\in
{\mathcal B}_{a_j}$ for all $j\ge 0$.  To prove
Proposition~\ref{pro:2} it is sufficient to prove that, for almost
every $(a_0,b_0)$, there are infinitely many $l$ such that $a_l\leq
\varphi^4(l)$ and $b_l=0$.  The arguments leading to this conclusion
are analogous to the arguments from the end of the
section~\ref{sec:proof-implication--1} (just after the end of proof of
Lemma~\ref{le: infty-of-conditions}). We omit the details and note
only that now we pick $\varphi:\R_+\to\R_+$ so that
\begin{itemize}
\item $\D\sum_{l=1}^{\infty}\varphi^4(l)=+\infty$;
\item $r(x):=\varphi(x)/g(2Ax)$ be a monotonously decreasing function;
\item $\D\lim_{x\to\infty}r(x)=0$;
\item $\varphi(x)\leq 1/2$;
\end{itemize}
where $A$ be the constant defined in~(\ref{eq:4}).\\ \\
As, for all $j\ge 0$, $b_j\in {\mathcal B}_{a_j}$, then to study the
trajectories $\{(a_j,b_j)\subset \R^2,\,j\ge 0\}$ it is possible and
convenient to study trajectories of an one dimensional dynamical
system defined by a piecewise monotonic map of a real interval.  Let
us describe this system.
%
% Consider the dynamical system defined on the set
% 
%\begin{equation*}
%  \{a\in (0,1), b=0\}\cup \{a\in(0,1), b=-a/2\}\cup
%  \{a\in (0,1),b=1/2\}\subset\R^2,
%\end{equation*}
% 
%by iterating the map $(a_0,b_0)\mapsto T(a_0,b_0):=(a_1,b_1)$
%where $a_1=\{1/a_0\}$ and $b_1$ is given by~\eqref{eq:2}.\\
% 
Consider the interval $X=[0,3]$ endowed with the probability measure
$d\nu$ of density (with respect to the Lebesgue measure)
\begin{equation*}
  \nu(x)=\frac1{3\ln 2}\left(\sum_{i=0}^2\frac{1}{x-i+1}
    \car_{[i,i+1]}(x)\right)
\end{equation*}
i.e., up to the factor $1/3$, in each interval $[i,i+1]$, the measure
$\nu$ is the invariant measure for the Gauss map ``shifted'' to
this interval.\\
On $(X,d\nu)$, consider the dynamical system defined by the iterates
of the map $\tilde T:\,\tilde a_0\mapsto \tilde a_1$ such that
\begin{itemize}
\item if $\tilde a_0\in[0,1]$ then
  \begin{equation*}
    \tilde a_1=\left\{\frac1{\tilde a_0}\right\}+
    \begin{cases}
      0&\text{ if }\left[\frac1{\tilde a_0}\right]\text{ if even},\\
      2&\text{ if }\left[\frac1{\tilde a_0}\right]\text{ is odd};
    \end{cases}
  \end{equation*}
\item if $\tilde a_0\in(1,2)$ then
  \begin{equation*}
    \tilde a_1=\left\{\frac1{\tilde a_0-1}\right\}+
    \begin{cases}
      2&\text{ if }\left[\frac1{\tilde a_0-1}\right]\text{ if even},\\
      0&\text{ if }\left[\frac1{\tilde a_0-1}\right]\text{ is odd};
    \end{cases}
  \end{equation*}
\item if $\tilde a_0\in(2,3)$ then
  \begin{equation*}
    \tilde a_1=\left\{\frac1{\tilde a_0-2}\right\}+1.
  \end{equation*}
\end{itemize}
Clearly, for $b_0\in{\mathcal B}_{a_0}$, there is one-to-one
correspondence between the trajectories $\{(a_j,b_j)\subset
\R^2,\,j\ge 0\}$ of the input dynamical system and the trajectories
$\{\tilde a_j\subset \R,\,j\ge 0\}$ of the newly defined one:
\begin{equation}
  a_j=\{\tilde a_j\},\quad \quad
  b_j=\begin{cases}
    0 & \text{ if \ } \tilde a_j\in (0,1)\\
    - a_j/2, & \text{ if \ } \tilde a_j\in (1,2)\\
    1/2, & \text{ if \ } \tilde a_j\in (2,3)
  \end{cases},\quad\quad j\ge 0.
\end{equation}
The value of $b_j$ is coded by $[\tilde a_j]$.\\
Analogously to what was done in section~\ref{sec:almost-sure-growth},
we define
\begin{equation}\label{eq:new-mathfrak-N}
  \mathfrak N(L,\tilde a_0)=\sum_{l=0}^L \chi(
  \sqrt[4]{\tilde a_l}\le\varphi(l)).
\end{equation}
where $\chi(\text{``statement''})$ is equal to 0 if the ``statement''
is false and is equal to 1 otherwise. Recall that $\varphi(l)<1/2$.
Therefore,
\begin{equation*}
  \mathfrak N(L,\tilde a_0)=
  \sum_{l=0}^L \chi(
  \sqrt[4]{a_l}\le\varphi(l))\,\chi(b_l=0).  
\end{equation*}
So, if $ \mathfrak N(L,\tilde a_0)\to \infty$ as $L\to\infty$, then
there are infinitely many $l$ such that $a_l\leq \varphi^4(l)$ and
$b_l=0$.\\
The analysis of the counting function $\mathfrak N$ is similar to that
done when proving Theorem~\ref{thr:2}. We will derive estimates for
appropriate norms of the function $\mathfrak N$. Therefore, we will
use the invariant measure and the exponential mixing of
the dynamical system defined by $\tilde T$. \\
To prove the exponential mixing of the dynamical system defined by
$\tilde T$, we use Theorem 3.1 of~\cite{MR1658635}. We check that
$\tilde T$ defines a weighted covering system (Definition 3.5
of~\cite{MR1658635}). It suffices to prove
\begin{Le}
  \label{le:3}
  Let $P$ be the Perron-Frobenius operator of $\tilde T$.\\
  For any $I\subset X$ non empty open interval, there exists
  $N=N(I)\in\N$ and $C=C(I)>0$ such that $P^N\car_{I}\geq C\,\car_X$.
\end{Le}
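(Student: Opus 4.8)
Here is the plan. Recall that the Perron--Frobenius operator acts by $(Pf)(x)=\sum_{y\in\tilde T^{-1}(x)}f(y)/|\tilde T'(y)|$, and that $\tilde T$ is a piecewise monotone, piecewise $C^1$ map of $X=[0,3]$ whose (countably many) smooth branches each send their interval of monotonicity onto one of the three subintervals $X_1=(0,1)$, $X_2=(1,2)$, $X_3=(2,3)$; indeed $\{\tilde a_{j+1}\}=G(\{\tilde a_j\})$ evolves by the Gauss map $G$, with $|\tilde T'(x)|=|G'(\{x\})|=1/\{x\}^2$, while the integer part $[\tilde a_j]$ is merely relabelled by a parity rule. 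Since $I$ has positive length, after shrinking it we may assume $I\subset X_{p_0}$ for some $p_0\in\{1,2,3\}$. I shall use repeatedly the elementary fact: if $V\subset X$ is an open subinterval whose closure avoids $0$, and $\tilde T^N$ restricts on $V$ to a monotone $C^1$ diffeomorphism onto a full interval $W=X_q$, then, applying $P^N$ to the constant function $1$,
  \[
    P^N\mathbf{1}_I\ \ge\ P^N\mathbf{1}_V\ \ge\ \Bigl(\sup_{\overline V}|(\tilde T^N)'|\Bigr)^{-1}\mathbf{1}_W
  \]
  (the first inequality by positivity of $P^N$ and $V\subset I$; the second by retaining, for $x\in W$, only the unique preimage of $x$ lying in $V$). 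Hence it suffices to exhibit one integer $N$ and three pairwise disjoint subintervals $K_1,K_2,K_3\subset I$, each with closure avoiding $0$, such that $\tilde T^N$ maps each $K_q$ monotonically and $C^1$-diffeomorphically onto $X_q$: adding the three ensuing inequalities and using $\mathbf{1}_I\ge\mathbf{1}_{K_1}+\mathbf{1}_{K_2}+\mathbf{1}_{K_3}$ then yields $P^N\mathbf{1}_I\ge C\,\mathbf{1}_X$ with $C=\min_q\bigl(\sup_{\overline{K_q}}|(\tilde T^N)'|\bigr)^{-1}>0$.

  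The first step is to record the combinatorics of $\tilde T$ at the coarse level of the three intervals. Reading off the three cases in the definition of $\tilde T$, the admissible transitions are $X_1\to X_1$, $X_1\to X_3$ (respectively, the even and the odd branches of $x\mapsto\{1/x\}$ on $(0,1)$), $X_2\to X_1$, $X_2\to X_3$, and $X_3\to X_2$. The associated directed graph on $\{X_1,X_2,X_3\}$ is strongly connected and carries a loop at $X_1$, hence it is primitive; therefore there is an integer $N_0$ (one checks $N_0=3$ works) such that for \emph{every} ordered pair $(p,q)$ there is a directed path of length exactly $N_0$ from $X_p$ to $X_q$. Fixing, along such a path, one definite Gauss branch at each step produces a \emph{full cylinder}: a subinterval $C\subset X_p$, whose closure avoids $0$ because the chosen branch indices are finite, on which $\tilde T^{N_0}$ is a monotone $C^1$ diffeomorphism onto $X_q$.

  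The second step is to fit a full cylinder into the prescribed interval $I$. For this I would use that the rank-$m$ cylinders of $\tilde T$ (the maximal intervals of monotonicity of $\tilde T^m$) have lengths tending to $0$ as $m\to\infty$: on each $X_p$ they are, after subtracting the integer part, the usual continued-fraction cylinders of the Gauss map, and these shrink exponentially because, by~\eqref{eq:two-a-l}, $\tilde T^2$ is uniformly expanding (expansion factor larger than $4$). Hence, for $m$ large, there is a full cylinder $J^*\subset I$ on which $\tilde T^m$ is a monotone $C^1$ diffeomorphism onto some full interval $X_{p_1}$. Applying the first step with $p=p_1$, pick for each $q\in\{1,2,3\}$ a length-$N_0$ full cylinder $C_q\subset X_{p_1}$ with $\tilde T^{N_0}(C_q)=X_q$ (these are pairwise disjoint), and set $K_q=(\tilde T^m|_{J^*})^{-1}(C_q)\subset J^*\subset I$. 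With $N:=m+N_0$, each $\tilde T^N|_{K_q}$ is a composition of two monotone $C^1$ diffeomorphisms, hence is itself such a diffeomorphism onto $X_q$, and all cylinders in sight have finite branch indices, so $|(\tilde T^N)'|$ is bounded on $\overline{K_q}$. This supplies exactly the data required by the reduction above, with $N=N(I)$ and $C=C(I)$ as the statement allows.

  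The main obstacle --- and the step I would be most careful about --- is the exponential shrinking of the continued-fraction cylinders, that is, the expansivity of the Gauss map: the subtle point is that $|G'|$ is merely $\ge1$ and tends to $1$ at the branch endpoint $x=1$, which is why one passes to $\tilde T^2$ (equivalently, invokes~\eqref{eq:two-a-l}). The verification that the three-vertex transition graph is primitive is a one-line check, and the distortion constants cause no difficulty because $N$ and $C$ are allowed to depend on $I$.
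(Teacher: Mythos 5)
Your proof is correct and is essentially the paper's argument in different clothing: you locate a cylinder inside $I$ (the paper's interval $[x,x']$ with continued-fraction endpoints agreeing except in the last digit), iterate until it fills a full coarse interval, and then use the transition structure on $\{X_1,X_2,X_3\}$ — your primitivity of the transition graph with $N_0=3$ is exactly the paper's estimate $P^3\car_{[i,i+1]}\geq c\,\car_X$ derived from the $P_e$, $P_o$ decomposition, and your branch/derivative inequality plays the role of the paper's digit-peeling computation $P\car_{[x,x']}\geq c\,\car_{i+[\tilde y',\tilde y]}$. One minor point: the paper's $P$ is the Perron--Frobenius operator relative to $d\nu$, so it carries the weight $\nu(y)/\nu(x)$ that your displayed formula omits; since $\nu$ is bounded above and below by positive constants on $X$, all your lower bounds survive with $C(I)$ multiplied by $\inf\nu/\sup\nu$, so this is harmless.
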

\begin{proof}[Proof of Lemma~\ref{le:3}] Recall that the
  Perron-Frobenius operator is defined by the formula
  \begin{equation}
    (Pu)(a_1)=\nu^{-1}(a_1)\,\sum_{a:\,\,\tilde T(a)=a_1}\frac{\nu(a)u(a)}{|\tilde T'(a)|}.
  \end{equation}
  Using the definitions of $\nu$ and $\tilde T$, we get
  \begin{multline*}
    Pu= \left(P_e(u\car_{[0,1]})+ P_o\tau_1(u\car_{[1,2]}) \right)+
    \tau_1^{-1}(P_e+P_o)\tau_2(u\car_{[2,3]})\\+
    \tau_2^{-1}\,\left(P_e \tau_1(u\car_{[1,2]})
      +P_o(u\car_{[0,1]})\right).
  \end{multline*}
  where $\tau_i[u](x)=u(x+i)$ and the operators $P_e$ and $P_o$ are
  acting on $L^1([0,1])$ and defined as
  \begin{gather*}
    (P_eu)(a)=(1+a)\sum_{k\geq1}\frac{u\left((2k+a)^{-1}\right)}
    {(2k+a)(2k+1+a)}\\
    (P_ou)(a)=(1+a)\sum_{k\geq1}\frac{u\left((2k-1+a)^{-1}\right)}
    {(2k-1+a)(2k+a)}.
  \end{gather*}
  Note that $P_e+P_o$ is the Perron-Frobenius operator for the Gauss
  map on $([0,1],d\mu)$ where $d\mu$ is the invariant measure for the
  Gauss map.\\
  Note that, there exists $c>0$ such that
  \begin{itemize}
  \item $P(\car_{[0,1]})\geq c\,(\car_{[0,1]}+\car_{[2,3]})$,
  \item $P(\car_{[1,2]})\geq c\,(\car_{[0,1]}+\car_{[2,3]})$
  \item $P(\car_{[2,3]})\geq c\,\car_{[1,2]}$.
  \end{itemize}
  Hence, one has $P^3(\car_{[i,i+1]})\geq c\,\car_X$ for
  $i\in\{0,1,2\}$. So, it suffices to show that for any interval $I$,
  there exists $i$, $N$ and $c$ so that
  $P^N\car_I\geq c\,\car_{[i,i+1]}$.\\
  For $(n_j)_{j\geq1}$ integers, denote by $[n_1,n_2,\cdots,n_p]$ the
  real number defined by the continued fraction
  \begin{equation*}
    [n_1,n_2,n_3,\cdots,n_p]=\cfrac1{n_1+\cfrac1{n_2+
        \cfrac1{\ddots \cfrac{\ddots}{n_{p-1}+\cfrac1{n_p}}}}}.
  \end{equation*}
  Pick a non-empty open interval $I\subset[0,3]$. It contains an
  interval of the form $[x,x']$ where $x=i+[n_1,\cdots,n_{p-1},n_p]$
  and $x'=i+[n_1,\cdots,n_{p-1},n'_p]$ for some $i\in\{0,1,2\}$ and
  $|n_p-n'_p|=1$. So, it suffices to show Lemma~\ref{le:3} for
  intervals of that form.\\
  Pick now $y=[n_1,\cdots,n_{p-1},n_p]$ and $y'=[n_1,\cdots,n'_p]$
  where $|n_p-n'_p|=1$. By the definition of the Gauss map, one
  obtains
  \begin{equation*}
    P_e\car_{[y,y']}\geq c\car_{[\tilde y',\tilde y]}\text{ if }
    n_1\text{ is even} \quad\text{and}\quad
    P_o\car_{[y,y']}\geq c\car_{[\tilde y',\tilde y]}\text{ if }
    n_1\text{ is odd}
  \end{equation*}
  where $\tilde y=[n_2,\cdots,n_p]$ and $\tilde
  y'=[n_2,\cdots,n'_p]$.\\
  Hence, for $[x,x']$ an interval as above, one gets
  $P\car_{[x,x']}\geq c\car_{i+[\tilde y',\tilde y]}$, where
  $i$ is an index in $\{0,1,2\}$ that depends on $x$.\\
  Applying this $p$ times, we get $\D P^p\car_{[x,x']}\geq
  c\car_{[i,i+1]}$ for some $i\in\{0,1,2\}$. Hence,
  \begin{equation*}
    P^{p+3}\car_I\geq P^{p+3}\car_{[x,x']}\geq c\car_X.
  \end{equation*}
  This completes the proof of Lemma~\ref{le:3}.
\end{proof}
\noindent By Theorem 3.1 of~\cite{MR1658635}, we know that the
dynamical system $(\tilde T,X,d\nu)$ is a covering weighted system
(with a constant weight); hence, it admits a unique invariant measure
and one has exponential mixing estimates for the invariant
measure. Let us now compute the invariant measure for $(\tilde
T,X,d\nu)$. Therefore, we apply $P$ to $\car_X$ and use
$(P_o+P_e)(\car_{[0,1]})=\car_{[0,1]}$ to obtain
\begin{equation*}
  \begin{split}
    P\car_X&= \left(P_e(\car_{[0,1]})+ P_o\tau_1(\car_{[1,2]})
    \right)+
    \tau_1^{-1}(P_e+P_o)\tau_2(\car_{[2,3]})+ \\
    &\hspace{4.5cm}+\tau_2^{-1}\,\left(P_e \tau_1(\car_{[1,2]})
      +P_o(\car_{[0,1]})\right)\\
    &=\car_{[0,1]}+ \tau_1^{-1}\car_{[0,1]}+
    \tau_2^{-1}\car_{[0,1]}\\
    &=\car_X
  \end{split}
\end{equation*}
Hence, the invariant measure of $(\tilde T,X,d\nu)$ has the density
$1$ with
respect to $d\nu$. \\
We now return to the proof of Proposition~\ref{pro:2}. Consider the
function $\mathfrak N(L,\tilde a)$ defined
by~\eqref{eq:new-mathfrak-N}.  To use the same line of reasoning as in
the end of section~\ref{sec:reduct-anlys-dynam}, our goal is to prove
that, when $L\to+\infty$, one has
\begin{equation*}
  \|\mathfrak N(L,\cdot)\|_2=\|\mathfrak
  N(L,\cdot)\|_1(1+o(1)),\quad \|\mathfrak
  N(L,\cdot)\|_1\to\infty,
\end{equation*}
where $\|\cdot\|_1$ and $\|\cdot\|_2$ are the norms of $L^1(X,d\nu)$
and $L^2(X,d\nu)$.  We compute
\begin{gather}
  \label{eq:6N1}
  \|\mathfrak N(L,\cdot)\|_1=\sum_{l=1}^L P(l),\\
  \label{eq:6N2}
  \|\mathfrak N(L,\cdot)\|^2_2=\sum_{l=1}^L P(l)+ 2\sum_{1\leq l<
    m\leq L}P_2(l,m),
\end{gather}
where
\begin{gather*}
  P(l)=\int_0^3\chi(
  \sqrt[4]{\tilde a_l}\le\varphi(l))\,d\nu,\\
  P_2(m,l)= \int_0^3\chi(\sqrt[4]{\tilde a_l}\le\varphi(l))
  \chi(\sqrt[4]{\tilde a_m}\le\varphi(m))\,d\nu.
\end{gather*}
Let us use the results on the dynamical system $(\tilde T,X,d\nu)$ to
derive some useful estimates for $P(l)$ and
$P_2(m,l)$. \\
As the invariant measure of $(\tilde T,X,d\nu)$ has the density $1$
with respect to $d\nu$, we compute
\begin{equation}
  \label{eq:6}
  P(l)=\int_0^3\chi(\sqrt[4]{\tilde a}\le\varphi(l))d\nu
  =\frac1{3\ln 2}\varphi^4(l)(1+O(\varphi^4(l))).
\end{equation}
So,
\begin{equation*}
  \|\mathfrak N(L,\cdot)\|_1=\frac1{3\ln 2}
  \sum_{l=1}^L\varphi^4(l)\left(1+O(\varphi^4(l))\right)
  \vers_{L\to+\infty}+\infty.
\end{equation*}
Exponential mixing (Theorem 3.1 in~\cite{MR1658635}) means that there
exists $C>0$ such that, for all $l<m$, one has
\begin{equation}
  \label{eq:5}
  |P_2(m,l)-P(l)P(m)|\leq CP(l)e^{-(m-l)/C}.
\end{equation}
Under the assumptions made on $\varphi$ at the beginning of
section~\ref{sec:proof-proposition}, using~\eqref{eq:6N1},
\eqref{eq:6N2} and~\eqref{eq:5}, we get
\begin{equation*}
  \|\mathfrak N(L,\cdot)\|_1^2\leq\|\mathfrak
  N(L,\cdot)\|_2^2\leq \|\mathfrak
  N(L,\cdot)\|^2_1+\|\mathfrak N(L,\cdot)\|_1+R_L
\end{equation*}
where
\begin{equation*}
  R_L:=C\sum_{0\leq l<m\leq L}^{L} P(l)e^{-(m-l)/C}=
  O\left(\|\mathfrak N(L,\cdot)\|_1\right).
\end{equation*}
Hence, we obtain that $\|\mathfrak N(L,\cdot)\|_2^2=\|\mathfrak
N(L,\cdot)\|_1^2(1+o(1))$ when $L\to+\infty$. Arguing as in the proof
of Lemma~\ref{le: infty-of-conditions}, we conclude that for almost
every $a=a_0$ and all $b=b_0\in\{0,-a/2,1/2\}$, there exist infinitely
many $l$ such that $a_l\leq\varphi^4(l)$ and $b_l=0$. As we have
already explained, this implies Proposition~\ref{pro:1}.\qed
\def\cprime{$'$} \def\cydot{\leavevmode\raise.4ex\hbox{.}}

\end{document}